\newcommand{\rev}[1]{\textcolor{black}{#1}}
\newcommand{\SP}[1]{{\color{blue} #1}}
\pgfplotsset{compat=1.9}
\newtheorem{theorem}{Theorem}
\newtheorem{lemma}[theorem]{Lemma}
\newtheorem{remark}{Remark}
\newcommand{\R}{\mathbb{R}}
\newcommand{\Na}{Na$^+$} 
\newcommand{\mtot}{M_\text{tot}}
\newcommand{\ntot}{N_\text{tot}}
\newcommand{\K}{K$^+$}
\renewcommand{\Pr}{\mathbb{P}}
\providecommand{\customgenericname}{}
\newcommand{\newcustomtheorem}[2]{%
  \newenvironment{#1}[1]
  {%
   \renewcommand\customgenericname{#2}%
   \renewcommand\theinnercustomgeneric{##1}%
   \innercustomgeneric
  }
  {\endinnercustomgeneric}
}
\newcommand{\mbc}{\mathbf{c}}
\newcommand{\mbx}{\mathbf{x}}
\newcommand{\mbf}{\mathbf{f}}
\newcommand{\mbX}{\mathbf{X}}
\newcommand{\mbdX}{\mathbf{dX}}
\def\given{\:|\:}
\newcommand{\mbN}{\mathbf{N}}
\newcommand{\mbn}{\mathbf{n}}
\newcommand{\mbm}{\mathbf{m}}
\newcommand{\mbdW}{\mathbf{dW}}
\newcommand{\mbW}{\mathbf{W}}
\newcommand{\mbM}{\mathbf{M}}
\definecolor{blue}{rgb}{0,0,1}
\definecolor{darkgreen}{rgb}{0,.5,0}
\definecolor{darkred}{rgb}{.75,0,0}
\definecolor{red}{rgb}{1,0,0}
\newcommand{\bigzero}{\mbox{\normalfont\Large\bfseries 0}}
\newcommand{\captionfonts}{\normalsize}
\long\def\@makecaption#1#2{%
  \vskip\abovecaptionskip
  \sbox\@tempboxa{{\captionfonts #1: #2}}%
  \ifdim \wd\@tempboxa >\hsize
    {\captionfonts #1: #2\par}
  \else
    \hbox to\hsize{\hfil\box\@tempboxa\hfil}%
  \fi
  \vskip\belowcaptionskip}
\begin{document}
\hspace{13.9cm}1

\ \vspace{20mm}\\

{\LARGE Fast and Accurate Langevin Simulations of Stochastic Hodgkin-Huxley Dynamics}

\ \\
{\bf \large Shusen Pu$^{\displaystyle 1}$ and Peter J.~Thomas$^{\displaystyle 1-\displaystyle 4}$}\\
Case Western Reserve University Departments of 
$^{\displaystyle 1}$Mathematics, Applied Mathematics, and Statistics,
{$^{\displaystyle 2}$Biology, {$^{\displaystyle 3}$Cognitive Science, and {$^{\displaystyle 4}$Electrical, Computer, and Systems Engineering.}\\
}}

{\bf Keywords:} Stochastic conductance-based model, Langevin equations, randomly gated ion channel, stochastic shielding, dimension reduction, voltage clamp, current clamp, efficient simulation.

\thispagestyle{empty}
%
\ \vspace{-0mm}\\
\begin{center} {\bf Abstract} \end{center}
Fox and Lu introduced a Langevin framework for discrete-time stochastic models of randomly gated ion channels such as the Hodgkin-Huxley (HH) system. 
They derived a Fokker-Planck equation with state-dependent diffusion tensor $D$ and suggested a Langevin formulation with noise coefficient matrix $S$ such that $SS^\intercal=D$. 
Subsequently, several authors introduced a variety of Langevin equations for the HH system. 
In this paper, we present a natural 14-dimensional dynamics for the HH system 
in which each \emph{directed} edge in the ion channel state transition graph acts as an independent noise source, leading to a $14\times 28$ noise coefficient matrix $S$. 
We show that 
(i) the corresponding 14D  system of ordinary differential \rev{equations} is consistent with the classical 4D representation of the HH system; 
(ii) the 14D representation leads to a noise coefficient matrix $S$ that can be obtained cheaply on each timestep, without requiring a matrix decomposition; 
(iii) sample trajectories of the 14D representation are pathwise equivalent to trajectories of Fox and Lu's system, as well as trajectories of several existing Langevin models; 
(iv) our 14D representation (and those equivalent to it) give the most accurate interspike-interval distribution, not only with respect to moments but under both the $L_1$ and $L_\infty$ metric-space norms; 
and
(v) the 14D representation gives an approximation to exact Markov chain simulations that are as fast and as  efficient as all equivalent models.     
Our approach goes beyond existing models, in that it supports a stochastic shielding decomposition that dramatically simplifies $S$ with minimal loss of accuracy under both voltage- and current-clamp conditions.

\section{Introduction}

Many natural phenomena exhibit stochastic fluctuations arising at the molecular scale,  the effects of which impact macroscopic quantities.  Understanding when and how microscale fluctuations will significantly contribute to macroscale behavior is a fundamental problem spanning the sciences. 
The impact of random ion channel fluctuations on the timing of action potentials in nerve cells provides an important example. 
Channel noise can have a significant effect on spike generation \citep{Mainen1995Science, SchneidmanFreedmanSegev1998NECO}, propagation along axons \citep{Faisal2007PLoS}, and spontaneous (ectopic) action potential generation in the absence of stimulation \citep{ODonnel_van_Rossum2015NECO}.
At the network level, channel noise can drive endogenous variability of vital rhythms such as  respiratory activity \citep{Yu2017JNeuro}.

Hodgkin and Huxley's quantitative model for active sodium and potassium currents producing action potential generation in the giant axon of \textit{Loligo} \citep{jp:Hodgkin+Huxley:1952d} suggested an underlying system of gating variables consistent with a multi-state Markov process description \citep{HillChen1972BPJ}.
The discrete nature of individual ion channel conductances was confirmed experimentally  \citep{Neher1976Nature}.  Subsequently, numerical studies of high-dimensional discrete-state, continuous-time Markov chain models produced insights into the effects of fluctuations in discrete ion channel populations on action potentials 
 \citep{SkaugenWalloe1979ActaPhysiolScand,StrassbergDeFelics1993NC}, aka \emph{channel noise}  \citep{White1998APSB, White2000Elsevier}. 

In the standard molecular-level HH model, which we adopt here, the \K~channel comprises four identical ``$n$'' gates that open and close independently, giving a five-vertex channel-state diagram with eight directed edges; the channel conducts a current only when in the rightmost state (Fig.~\ref{plot:HHNaKgates}, top).  The \Na~channel comprises three identical ``$m$" gates and a single ``$h$" gate, all independent, giving an eight-vertex diagram with twenty directed edges, of which one is conducting (Fig.~\ref{plot:HHNaKgates}, bottom).

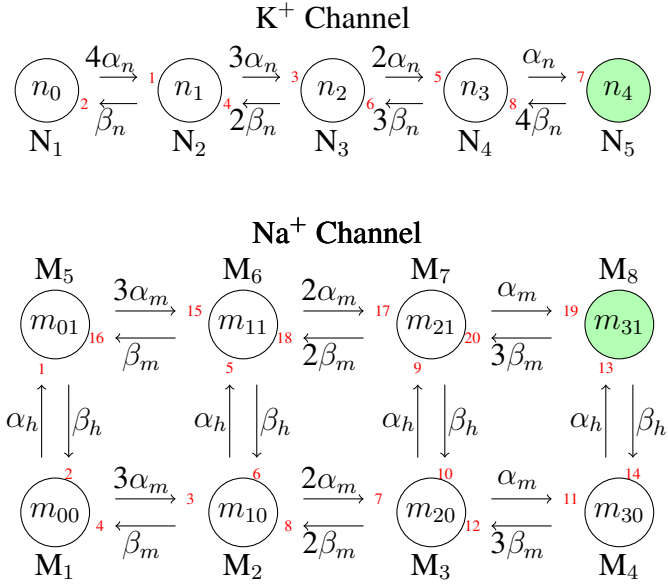
\begin{figure}
    \centering

  \begin{tikzpicture}
 
 \foreach \x/\w/\y/\z/\num/\id in {1/ /1/2/4/0,2/2/3/4/3/1,3/3/5/6/2/2,4/4/7/8/ /3}
 {
 \draw (1.9*\x,0) circle (12pt);
 \draw [color=black,->] (1.9*\x+0.7,5pt) -- (1.9*\x+1.2,5pt) node[right, color=red]{\tiny \y};
 \node at (1.9*\x+0.85,12pt) {\num$\alpha_n$};
 \draw [color=black,<-] (1.9*\x+0.7,-5pt) node[left, color=red]{\tiny \z} -- (1.9*\x+1.2,-5pt);
 \node at (1.9*\x+0.85,-12pt) {\w$\beta_n$};
 \node at (1.9*\x,-20pt) {$\text{N}_\x$};
 \node at (1.9*\x,0) {$n_\id$}; 
 }
 \draw [fill=green!30] (1.9*5,0) circle (12pt);
 \node  at (1.9*5,-20pt) {$\text{N}_5$} node at (1.9*5,0) {$n_4$} node  at (1.9*3,1) {$\text{K}^+\  \text{Channel}$};
 \end{tikzpicture}
\vspace{0.5cm}%

  \begin{tikzpicture}
 \foreach \x/\w/\y/\z/\num/\p in {1/ /3/4/3/00,2/2/7/8/2/10,3/3/11/12/ /20}
 {
 \draw (2.5*\x,0) circle (13pt) node at (2.5*\x,0) {$m_{\p}$};
 \draw [color=black,->] (2.5*\x+0.8,0.18) -- (2.5*\x+1.6,0.18) node[right, color=red]{\tiny \y};
 \node at (2.5*\x+1.15,12*0.036) {\num$\alpha_m$};
 \draw [color=black,<-] (2.5*\x+0.8,-0.18) node[left, color=red]{\tiny \z} -- (2.5*\x+1.6,-0.18);
 \node at (2.5*\x+1.15,-12*0.036) {\w$\beta_m$};
 \node at (2.5*\x,-0.72) {$\text{M}_\x$} node  at (2.5*2.5,2.5+1.25) {$\text{Na}^+\  \text{Channel}$};
 }
 \draw (2.5*4,0) circle (13pt);
 \node at (2.5*4,-0.72) {$\text{M}_4$} node at(2.5*4,0) {$m_{30}$};
 \foreach \x/\w/\y/\z/\num/\a/\p in {1/ /15/16/3/5/01,2/2/17/18/2/6/11,3/3/19/20/ /7/21}
 {
 \draw (2.5*\x,2.5) circle (13pt) node at (2.5*\x,2.5) {$m_{\p}$};
 \draw [color=black,->] (2.5*\x+0.8,0.18+2.5) -- (2.5*\x+1.6,0.18+2.5) node[right, color=red]{\tiny \y};
 \node at (2.5*\x+1.15,12*0.036+2.5) {\num$\alpha_m$};
 \draw [color=black,<-] (2.5*\x+0.8,-0.18+2.5) node[left, color=red]{\tiny \z} -- (2.5*\x+1.6,-0.18+2.5);
 \node at (2.5*\x+1.15,-12*0.036+2.5) {\w$\beta_m$};
 \node at (2.5*\x,0.72+2.5)  {$\text{M}_\a$};
 }
 \draw [fill=green!30] (2.5*4,2.5) circle (13pt);
 \node  at (2.5*4,0.72+2.5) {$\text{M}_8$} node at (2.5*4,2.5) {$m_{31}$};
 \foreach \x/\a/\b in {1/1/2,2/5/6,3/9/10,4/13/14}
 {
  \draw [->] (2.5*\x-0.18,0.73) -- (2.5*\x-0.18,0.7+1) node[above, color=red]{\tiny \a};
  \node at (2.5*\x-12*0.036,1.2) {$\alpha_h$};
  \draw [->] (2.5*\x+0.18,0.7+1) -- (2.5*\x+0.18,0.73) node[below, color=red]{\tiny \b};
  \node at (2.5*\x+12*0.036,1.2) {$\beta_h$};
 }
 \end{tikzpicture}
 
\caption{Molecular potassium (\K) and \rev{sodium} (\Na) channel states for the Hodgkin-Huxley model.  Filled circles mark conducting states $n_4$ and $m_{31}$. Per capita transition rates for each directed edge ($\alpha_n$, $\beta_n$, $\alpha_m$, $\beta_m$, $\alpha_h$ and $\beta_h$) are voltage dependent (cf.~eqns.~\eqref{eq:rate4}-\eqref{eq:rate9}). Directed edges are numbered 1-8 (\K~channel) and 1-20 (\Na-channel), marked in small red numerals.}
\label{plot:HHNaKgates}
\end{figure}
Discrete-state channel noise models are numerically intensive, whether implemented using discrete-time binomial approximations to the underlying continuous-time Markov process \citep{SkaugenWalloe1979ActaPhysiolScand,SchmandtGalan2012PRL} or continuous-time hybrid Markov models with exponentially distributed state transitions and continuously varying membrane potential.  The latter were introduced by 
\citep{ClayDeFelice1983BiophysJ} and are in principle exact \citep{AndersonErmentroutThomas2015JCNS}.
Under voltage-clamp conditions the hybrid conductance-based model reduces to a time-homogeneous Markov chain \citep{Colquhoun1981} that can be simulated using standard methods such as Gillespie's exact algorithm \citep{Gillespie1977,gillespie2007stochastic}.  
Even with this simplification, such Markov Chain (MC) algorithms are numerically expensive to simulate with realistic population sizes of 1000s of channels or greater.
Therefore, there is an ongoing need for efficient and accurate approximation methods. 

Following Clay and DeFelice's exposition of continuous time Markov chain implementations, \citep{FoxLu1994PRE} introduced a Fokker-Planck equation (FPE) framework that captured the first and second order statistics of HH ion channel populations in a 14-dimensional representation. 
Taking into account conservation of probability, one needs four variables to represent the population of \K~channels, seven for \Na, and one for voltage, leading to a 12-dimensional state space description.  
The resulting high-dimensional partial differential equation is impractical to solve numerically. However, as Fox and Lu observed, ``to every Fokker-Planck description, there is associated a Langevin description" \citep{FoxLu1994PRE}.  They therefore introduced a Langevin stochastic differential equation of the form:
\begin{eqnarray}
C \frac{dV}{dt}&=&I_\text{app}(t)-\bar{g}_\text{Na}\mbM_8\left(V-V_\text{Na}\right)-\bar{g}_\text{K}\mbN_5\left(V-V_\text{K}\right)-g_\text{leak}(V-V_\text{leak}), \label{eq:FoxandLu_dv}\\
\frac{d\mbM}{dt}&=&A_\text{Na}\mbM+S_1\xi_1, \label{eq:FoxandLu_dNa}\\
\frac{d\mbN}{dt}&=&A_\text{K}\mbN+S_2\xi_2, \label{eq:FoxandLu_dK}
\end{eqnarray}
where $C$ is the capacitance, $I_\text{app}$ is the applied current, maximal conductances are denoted $\bar{g}_\text{ion}$, with $V_\text{ion}$ being the associated reversal potential,  and  ohmic leak current $g_\text{leak}(V-V_\text{leak})$. 
$\mbM\in\R^8$ and $\mbN\in\R^5$ are  vectors for the fractions of \Na~and \K~channels in each state, with $\mbM_8$ representing the open channel 
fraction for \Na, and $\mbN_5$ the open channel fraction for \K~(Fig.~\ref{plot:HHNaKgates}). 
Vectors $\xi_1(t)\in\R^8$ and $\xi_2(t)\in\R^5$ are independent Gaussian white noise processes with
zero mean and unit variances 
$\langle\xi_1(t)\xi_1^\intercal(t')\rangle=I_8\,\delta(t-t')$ and
$\langle\xi_2(t)\xi_2^\intercal(t')\rangle=I_5\,\delta(t-t')$. 
The state-dependent rate matrices  $A_\text{Na}$ and $A_\text{K}$ are given in eqns.~\eqref{matrix:ANa} and \eqref{Matrix:AK}. In Fox and Lu's formulation, $S$ must satisfy $S=\sqrt{D}$, where $D$ is a symmetric, positive semi-definite $k\times k$ ``diffusion matrix" (see Appendix \ref{app:SNaSK} for the $D$ matrices for the standard HH \K~and \Na~channels). 
\rev{We will refer to the  14-dimensional Langevin equations  \eqref{eq:FoxandLu_dv}-\eqref{eq:FoxandLu_dK}, with $S=\sqrt{D}$, as the ``Fox-Lu" model.}

The original \rev{Fox-Lu model, later called} the ``conductance noise model" by \citep{GoldwynSheaBrown2011PLoSComputBiol}, did not see widespread use until gains in computing speed made the square root calculations more feasible. Seeking a more efficient approximation, \rev{\citep{FoxLu1994PRE} also introduced a four-dimensional  Langevin version of  the  HH model.} \rev{This model was systematically studied in} \citep{Fox1997BiophysicalJournal} 
which can be written as follows:
\begin{eqnarray}
C \frac{dV}{dt}&=&I_\text{app}(t)-\bar{g}_\text{Na}m^3h\left(V-V_\text{Na}\right)-\bar{g}_\text{K}n^4\left(V-V_\text{K}\right)-g_\text{leak}(V-V_\text{leak}) \label{eq:Fox_dv}\\
\frac{dx}{dt}&=&\alpha_x(1-x)-\beta_xx+\xi_x(t), \ \text{where}\ x=m,h,\ \text{or},\ n.\label{eq:Fox_dx}
\end{eqnarray}
where $\xi_x(t)$ are Gaussian processes with covariance function
\begin{equation}\label{eq:covariance_fcn}
E[\xi_x(t),\xi_x(t')]=\frac{\alpha_x(1-x)+\beta_xx}{N}\delta(t-t').
\end{equation}
Here $N$ represents the total number of \Na channels (respectively, the total number of \K channels) and $\delta(\cdot)$ is the Dirac delta function. 
This model, referred as the ``subunit noise model" by \citep{GoldwynSheaBrown2011PLoSComputBiol}, has been widely used as an approximation to MC ion channel models (see references in \cite{Bruce2009ABE, GoldwynSheaBrown2011PLoSComputBiol}). 
\rev{For example, \cite{SchmidGoychukHanggi2001EPL} used this approximation to investigate stochastic resonance and coherence resonance in forced and unforced versions of the HH model (e.g.~in the excitable regime).}
However, the numerical accuracy of this method was criticized by several studies \citep{MinoRubinsteinWhite2002AnnBiomedEng, Bruce2009ABE}, which found that its accuracy does not improve even with increasing numbers of channels.

Although more accurate approximations based on Gillespie's algorithm (using a piecewise constant propensity approximation,  \cite{Bruce2009ABE,MinoRubinsteinWhite2002AnnBiomedEng}) and even based on exact simulations \citep{ClayDeFelice1983BiophysJ,NewbyBressloffKeener2013PRL,AndersonErmentroutThomas2015JCNS} became available, they remained prohibitively expensive for large network simulations.  Meanwhile, Goldwyn and Shea-Brown's rediscovery of Fox and Lu's earlier conductance based model \citep{GoldwynSheaBrown2011PLoSComputBiol,Goldwyn2011PRE} launched a flurry of activity seeking the best Langevin-type approximation.  
\cite{GoldwynSheaBrown2011PLoSComputBiol} introduced a faster decomposition algorithm to simulate equations \eqref{eq:FoxandLu_dv}-\eqref{eq:FoxandLu_dK}, and showed that Fox and Lu's method accurately captured the fractions of open channels and the inter-spike intervla (ISI) statistics, in comparison with Gillespie-type Monte Carlo (MC) simulations.  However, despite the development of efficient singular value decomposition based algorithms for  solving $S=\sqrt{D}$, this step still causes a bottleneck in the algorithms  based on \citep{FoxLu1994PRE,GoldwynSheaBrown2011PLoSComputBiol, Goldwyn2011PRE}.

Many variations on Fox and Lu's 1994 Langevin model have been proposed in recent years \citep{Dangerfield2010PCS, Linaro2011PublicLibraryScience, Dangerfield2012APS, OrioSoudry2012PLoS1, Guler2013MITPress, Huang2013APS, Pezo2014Frontiers, Huang2015PhBio, Fox2018arXiv} including Goldwyn et al's work \citep{GoldwynSheaBrown2011PLoSComputBiol, Goldwyn2011PRE},
each with its own strengths and weaknesses.
One class of methods imposes  projected boundary conditions  \citep{Dangerfield2010PCS,Dangerfield2012APS}; as we will show in \S \ref{sec:modelcomp}, this approach leads to inaccurate interspike interval distribution, and is inconsistent with a natural multinomial invariant manifold structure for the ion channels. 
Several methods implement correlated noise at the subunit level, as in \eqref{eq:Fox_dx}-\eqref{eq:covariance_fcn} \citep{Fox1997BiophysicalJournal,Linaro2011PublicLibraryScience,Guler2013NC,Guler2013MITPress}.  However, if one recognizes that, at the molecular level, the \emph{individual directed edges} represent the independent noise sources in ion channel dynamics, then the approach incorporating noise at the subunit level obscures the biophysical origin of ion channel fluctuations. 
Some methods introduce the noisy dynamics at the level of edges rather than nodes, but lump reciprocal edges together into pairs  \citep{OrioSoudry2012PLoS1,Dangerfield2012APS,Huang2013APS,Pezo2014Frontiers}.  This approach implicitly assumes, in effect, that the ion channel probability distribution satisfies a  detailed balance (or microscropic reversibility) condition. 
However, while detailed balance holds for the HH model under stationary voltage clamp, this condition is violated during active spiking. 
Finally, the stochastic shielding approximation \citep{SchmandtGalan2012PRL,SchmidtThomas2014JMN,SchmidtGalanThomas2018PLoSCB} does not have a natural formulation in the representation associated with an $n\times n$ noise coefficient matrix $S$; in the cases of rectangular $S$ matrices used in  \citep{OrioSoudry2012PLoS1,Dangerfield2012APS} stochastic shielding can only be applied to reciprocal pairs of edges. We will elaborate on these points in \S \ref{sec:discussion}.

In this paper, we introduce a new variation of Fox and Lu's conductance-noise model that avoids the limitations described above.  We show that preserving each directed edge in the channel transition graph (Fig.~\ref{plot:HHNaKgates}) as an independent noise source leads to a natural, biophysically motivated Langevin model that does not require any matrix decomposition step.  Our construction lends itself to direct application of stochastic shielding methods, leading to faster simulations that retain the accuracy of Fox and Lu's method. 

As an additional benefit, our method answers an open question in the literature, arising from the fact that the decomposition $D=SS^\intercal$  is not unique. 
As Fox recently pointed out, sub-block determinants of the $D$ matrices play a major role in the structure of the $S$ matrix elements.  \citep{Fox2018arXiv} conjectured that ``a universal form for $S$ may exist". In this paper we obtain the universal form for the noise coefficient matrix $S$. Moreover, we prove that our model is equivalent to Fox and Lu's 1994 model in the strong sense of \emph{pathwise equivalence}.

The remainder of the paper is organized as follows. In \S  \ref{sec:determ_14D}, we review the canonical \emph{deterministic} 14D  version of the HH model.  We prove a series of lemmas which show (1) the multinomial submanifold $\mathcal{M}$ is an invariant manifold within the 14D space and (2) the velocity on the 14D space and the pushforward of the velocity on the 4D space are identical.  Moreover, we show (numerically) that (3) the submanifold $\mathcal{M}$ is globally attracting, even under current clamp conditions.
\rev{Fig.~\ref{fig:HHcomp} illustrates the relationship between the 4D and 14D deterministic HH models.}
\S  \ref{sec:stochastic_14DHH} lays out our $14\times 28$ Langevin HH model.  Like \citep{OrioSoudry2012PLoS1,Dangerfield2012APS,Pezo2014Frontiers}, we avoid matrix decomposition by computing $S$ directly. The key difference between our approach and its closest relative \citep{Pezo2014Frontiers} is to use a rectangular $n\times k$ matrix $S$  for which \emph{each directed edge} is treated as an independent noise source, rather than lumping reciprocal edges together in pairs. 
In the new Langevin model, the form of our $S$ matrix reflects the biophysical origins of the underlying channel noise, and  allows us to apply the stochastic shielding approximation by neglecting the noise on selected individual directed edges.  
As we prove in \S \ref{sec:path_equiv}, our model (without the stochastic shielding approximation) is pathwise equivalent to all those in a particular class of biophysically derived Langevin models, including those used in \citep{FoxLu1994PRE, Goldwyn2011PRE, GoldwynSheaBrown2011PLoSComputBiol,OrioSoudry2012PLoS1,Pezo2014Frontiers,Fox2018arXiv}.
\rev{In addition to 4D and 14D deterministic trajectories, Fig.~\ref{fig:HHcomp} also shows a stochastic trajectory generated by our Langevin model.
Finally, we} compare our Langevin model to several alternative stochastic neural models in terms of accuracy (of the full ISI distribution) and numerical efficiency in \S  \ref{sec:modelcomp}. 

\section{The Deterministic 4D and 14D HH Models}
\label{sec:determ_14D}
In this section, we review the classical four-dimensional model of  \cite{jp:Hodgkin+Huxley:1952d} (HH), as well as its natural fourteen-dimensional version (\cite{Dayan+Abbott:2001}, \S5.7), with variables comprising membrane voltage and the occupancies of five potassium channel states and eight sodium channel states. 
The deterministic 14D model is the mean field of \rev{the} channel-based Langevin model proposed by \citep{FoxLu1994PRE}\rev{; this paper describes both the Langevin and the mean field versions of the 14D Hodgkin-Huxley system.}
For completeness of exposition, we briefly review the 4D deterministic HH system and its 14D deterministic counterpart.  In \S\ref{sec:path_equiv} we will prove that the sample paths of a class of Langevin stochastic HH models are equivalent; in 
\S\ref{subsec:14D4D}  
we review  analogous results relating trajectories of the 4D and 14D deterministic ODE systems.

In particular, we will show that the deterministic 14D model and the original 4D HH model are dynamically equivalent, in the sense that every flow (solution) of the 4D model corresponds to a flow of the 14D model.  
The consistency of trajectories between of the 14D and 4D models is easy to verify for initial data on a 4D submanifold of the 14D space given by choosing multinomial distributions for the gating variables \citep{Dayan+Abbott:2001,Goldwyn2011PRE}.  
Similarly, Keener established results on multinomial distributions as invariant submanifolds of Markov models with ion channel kinetics under several circumstances \citep{Keener:2009:JMathBiol,Keener:2010:JMathBiol,EarnshawKeener2010bSIADS,EarnshawKeener2010SIADS_2}, but without treating the general current-clamped case.
Consistent with these results, we show below that the set of all 4D flows maps to an invariant submanifold of the state space of the 14D model.  Moreover, we show numerically that solutions of the 14D model with arbitrary initial conditions converge to this submanifold.  \rev{Thus the original HH model ``lives inside" the 14D deterministic model in the sense that the former embeds naturally and consistently within the latter (cf.~Fig.~\ref{fig:HHcomp}).}

In the stochastic case, the 14D model has a natural interpretation as a hybrid stochastic system with independent noise forcing along each edge of the potassium (8 directed edges) and sodium (20 directed edges) channel state transition graphs.  
The hybrid model leads naturally to a biophysically grounded Langevin model that we describe in section \S \ref{sec:stochastic_14DHH}.  
In contrast to the ODE case, the stochastic versions of the 4D and 14D models are \emph{not} equivalent \citep{GoldwynSheaBrown2011PLoSComputBiol}.  
\begin{figure}
    \centering
    \includegraphics[scale=0.26]{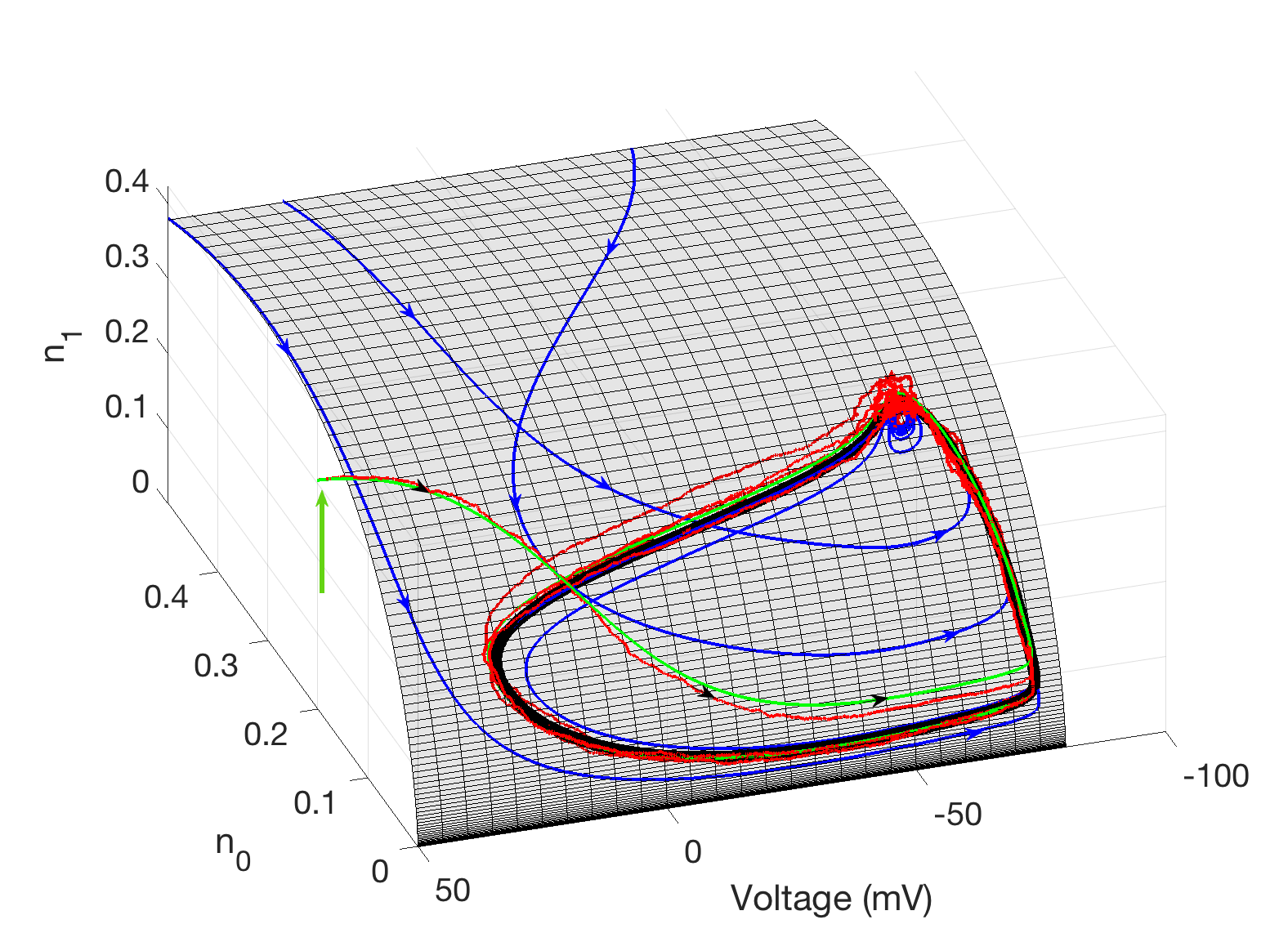}
    \caption{\rev{4D and 14D HH models. The meshed surface is a three dimensional projection of the 14D state space onto three axes representing the voltage, $v$, the probability of all four potassium gates being in the closed state, $n_0$, and the probability of exactly one potassium gate being in the open state, $n_1$. 
    \textbf{Blue curves:} Trajectories of the deterministic 14D HH model with initial conditions located on the 4D multinomial invariant submanifold, $\mathcal{M}$.  We prove that $\mathcal{M}$ is an invariant submanifold in \S \ref{subsec:14D4D}.  
    \textbf{Black curve:} The deterministic limit cycle solution for the 14D HH model, which forms a closed loop within $\mathcal{M}$.
     \textbf{Green curve:} A trajectory of the deterministic 14D HH model with initial conditions (vertical green arrow) off the multinomial submanifold.
     \textbf{Red curve:} A trajectory of the stochastic 14D HH model (cf.~\S\ref{sec:stochastic_14DHH}) with the same initial conditions as the green trajectory.
     The blue and black arrows mark the directions of the trajectories.
    Note that trajectories starting away from $\mathcal{M}$ converge to $\mathcal{M}$; and \emph{all} deterministic trajectories converge to the deterministic limit cycle.
     Parameters of the simulation are given in Tab.~\ref{tab:parameters}.}}
    \label{fig:HHcomp}
\end{figure}

\subsection{The 4D Hodgkin-Huxley Model}
The 4D voltage-gated ion \rev{channel} HH model is a set of four ordinary differential equations
\begin{align}\label{eq:HH4D_V}
 C\frac{dv}{dt}&=-\bar{g}_{\text{Na}}m^3h(v-V_{\text{Na}})-\bar{g}_{\text{K}}n^4(v-V_\text{K})-g_\text{L}(v-V_\text{L})+I_{\text{app}}, \\
    \frac{dm}{dt}&=\alpha_m(v)(1-m)-\beta_m(v)m, \label{eq:rate1}  \\
    \frac{dh}{dt}&=\alpha_h(v)(1-h)-\beta_h(v)h, \label{eq:rate2}  \\
     \frac{dn}{dt}&=\alpha_n(v)(1-n)-\beta_n(v)n, \label{eq:rate3}
\end{align}
where $v$ is the membrane potential, $I_\text{app}$ is the applied current, and  $0\le m,n,h \le1$ are dimensionless gating variables associated with \Na~and \K~ channels. 
The constant $\bar{g}_{\text{ion}}$ is the maximal value of the conductance for the sodium and potassium channel, respectively. 
Parameters $V_\text{ion}$ and $C$ are the ionic reversal potentials and capacitance, respectively.
The quantities $\alpha_x$ and $\beta_x$, $x\in\{m,n,h\}$ are \rev{the voltage-dependent per capita transition rates,} defined in Appendix \ref{axppend_alpha_beta}.

This system is a $C^\infty$ vector field on a four-dimensional manifold (with boundary) contained in $\R^4$: $\mathcal{X}=\{-\infty<v<\infty,0\le m,h,n \le 1\}=\R\times[0,1]^3.$  The manifold is forward and backward invariant in time.  If $I_\text{app}$ is constant then $\mathcal{X}$ has an invariant subset given by $\mathcal{X}\cap\{v_\text{min}\le v\le v_\text{max}\}$, where $v_\text{min}$ and $v_\text{max}$ are calculated in  Lemma~\ref{Lemma:vmin_vmax}.

As pointed out by (\cite{Keener1998MathPhy}, \S 3, p.~106) and \citep{Keener:2009:JMathBiol}, for voltage either fixed or given as a prescribed function of time, the equations for $m, h$ and $n$ can be interpreted as the parametrization of an invariant manifold embedded in a higher-dimensional time-varying Markov system.  Several papers developed this idea for a variety of ion channel models and related systems \citep{Keener:2009:JMathBiol, EarnshawKeener2010SIADS_2}
but the theory developed is restricted to the voltage-clamped case.  

Under fixed voltage clamp, the ion channels form a time-homogeneous Markov process with a unique (voltage-dependent) stationary probability distribution.  Under \emph{time-varying} current clamp the ion channels nevertheless form a Markov process, albeit no longer time-homogeneous. Under these conditions the ion channel state converges rapidly to a multinomial distribution indexed by a low-dimensional set of time-varying parameters ($m(t),h(t),n(t)$) \citep{Keener:2010:JMathBiol}. 
\label{page:multinomial_mentioned_first}
In the current-clamped case, the ion channel process, considered alone, is neither stationary nor Markovian, making the analysis of this case significantly more challenging, from a mathematical point of view.  

\subsection{The Deterministic 14D Hodgkin-Huxley Model}
For the HH kinetics given in Fig.~\ref{plot:HHNaKgates}  (on page \pageref{plot:HHNaKgates}), we define the eight-component state vector $\mbM$ for the \Na~gates, and the five-component state vector  $\mbN$ for the \K~gates, respectively, as
\begin{align}\label{eq:define_M}
\mbM&=[m_{00},m_{10},m_{20},m_{30},m_{01},m_{11},m_{21},m_{31}]^\intercal \in [0,1]^{8} \\
\label{eq:define_N}
\mbN&=[n_0,n_1,n_2,n_3,n_4]^\intercal\in [0,1]^5,
\end{align}
where $\sum_{i=0}^3\sum_{j=0}^1 m_{ij}=1$ and $\sum_{i=0}^4 n_i=1$.
The open probability for the \Na~channel is  $\mbM_8=m_{31}$, and is $\mbN_5=n_4$ for the \K~channel.
The deterministic 14D HH equations may be written (compare \eqref{eq:HH4D_V}-\eqref{eq:rate3})
\begin{eqnarray}\label{14dhh1}
C\frac{dV}{dt}&=&-\bar{g}_{\text{Na}}\text{M}_8(V-V_{\text{Na}})-\bar{g}_{\text{K}}\text{N}_5(V-V_\text{K})-g_\text{L}(V-V_\text{L})+I_\text{app},\\
\frac{d\mbM}{dt}&=&A_\text{Na}(V)\mbM, \label{14dhh2}\\
\frac{d\mbN}{dt}&=&A_\text{K}(V)\mbN,\label{14dhh3}
\end{eqnarray}
where the voltage-dependent drift matrices $A_\text{Na}$ and $A_\text{K}$ are given by
 {\footnotesize{\begin{equation}
 A_\text{Na}(V) =\begin{bmatrix}
A_\text{Na}(1) & \beta_m&0 &0 &\beta_h&0&0&0\\
3\alpha_m&A_\text{Na}(2)&2\beta_m&0&0&\beta_h&0&0\\
0&2\alpha_m&A_\text{Na}(3) &3\beta_m&0&0&\beta_h&0 \\
0&0&\alpha_m&A_\text{Na}(4)&0&0&0&\beta_h \\
\alpha_h&0&0&0&A_\text{Na}(5)&\beta_m&0&0\\
0&\alpha_h&0&0&3\alpha_m&A_\text{Na}(6)&2\beta_m&0\\
0&0&\alpha_h&0&0&2\alpha_m&A_\text{Na}(7)&3\beta_m\\
0&0&0&\alpha_h&0&0&\alpha_m&A_\text{Na}(8)\\
\end{bmatrix},
\label{matrix:ANa}
\end{equation}

\begin{equation}
   A_\text{K}(V) =\begin{bmatrix}
   A_\text{K}(1)& \beta_n(V)             & 0                & 0                  & 0\\
   4\alpha_n(V)& A_\text{K}(2)&   2\beta_n(V)              & 0&                   0\\
    0&        3\alpha_n(V)&        A_\text{K}(3)& 3\beta_n(V)&          0\\
    0&        0&               2\alpha_n(V)&          A_\text{K}(4)& 4\beta_n(V)\\
    0&        0&               0&                 \alpha_n(V)&          A_\text{K}(5)
\end{bmatrix},
\label{Matrix:AK}
\end{equation}}}
and the diagonal elements $$A_\text{ion}(i)=-\sum_{j\::\:j\neq i}A_\text{ion}(j,i),\text{ for ion }\in\{\text{Na},\text{K}\}.$$

\subsection{Relation Between the 14D and 4D Deterministic HH Models}\label{subsec:14D4D}
\cite{EarnshawKeener2010SIADS_2} suggests that it is reasonable to expect that the global flow of the 14D system should converge to \rev{the} 4D submanifold but also that it is far from obvious that it must.  Existing theory applies to the voltage-clamped case \cite{Keener:2009:JMathBiol, EarnshawKeener2010SIADS_2}.  Here, we consider instead the current-clamped case, in which the fluctuations of the ion channel state influences the voltage evolution, and {\it{vice-versa}}.

In the remainder of this section  we will (1) define a \emph{multinomial submanifold} $\mathcal{M}$ and show that it is an invariant manifold within the 14D space, and (2)  show that the velocity on the 14D space and the pushforward of the velocity on the 4D space are identical.  \rev{In \S \ref{subsec:converge} we will} (3) provide numerical evidence that $\mathcal{M}$ is globally attracting within the higher-dimensional space.  

In order to compare the trajectories of the 14D HH equations with trajectories of the standard 4D equations, we define lower-dimensional and higher-dimensional domains $\mathcal{X}$ and $\mathcal{Y}$, respectively, as
\begin{align}\nonumber
    \mathcal{X}&=\{-\infty<v<\infty,0\le m\le 1,0\le h\le 1,0\le n\le 1\} =  
    \R\times[0,1]^3
    \subset \R^{4}\\
    \mathcal{Y}&=\{-\infty<v<\infty\}
    \cap 
    \rev{
    \left\{0\le m_{ij},\,  \,\sum_{i=0}^3\sum_{j=0}^1 m_{ij}=1 \right\}
    \cap
    \left\{0\le n_i,\,\,  \sum_{i=0}^4n_i=1 \right\}
    }\nonumber 
    \\
    &=\R\times\Delta^7\times \Delta^4 \subset \R^{14}, 
    \label{eq:XYdefinitions}
\end{align}
where $\Delta^k$ is the $k$-dimensional simplex in $\R^{k+1}$ given by $y_1+\ldots+y_{k+1}=1, y_i\ge 0.$  
The 4D HH model $\frac{dx}{dt}=F(x)$, equations \eqref{eq:HH4D_V}-\eqref{eq:rate3}, is defined for $x\in \mathcal{X}$, and the 14D HH model $\frac{dy}{dt}=G(y)$, equations \eqref{14dhh1}-\eqref{14dhh3}, is defined for $y\in \mathcal{Y}$.
\begin{table}[htbp]\centering
   \begin{tabular}{cc} 
   14D model  & 4D model \\
   \hline
    $(v,m_{00},\ldots,m_{31},n_0,\ldots,n_4)$& $(v,m,h,n)$ 
   \\ \hline \hline
   $v$ & $v$\\
   $\frac13 (m_{11}+m_{10})+\frac23 (m_{21}+m_{20})+m_{31}+m_{30}$ &  $m$ \\
   $m_{01}+m_{11}+m_{21}+m_{31}$ & $h$\\
   $n_1/4+n_2/2+3n_3/4+n_4$ & $n$
`   \\ \hline \hline
   \end{tabular}
   \caption{$R$: Map from the 14D HH model $(m_{00},\ldots,m_{31},n_0,\ldots,n_4)$ to the 4D HH model $(m,h,n)$. Note that $\{m_{00},\ldots,m_{31}\}$ and \rev{$\{n_0,\ldots,n_4\}$} both follow multinomial distributions.}
   \label{tab:14D-to-4D-for-HH}
\end{table}
We introduce a dimension-reducing mapping $R:\mathcal{Y}\to\mathcal{X}$ as in Table \ref{tab:14D-to-4D-for-HH}, and a mapping from lower to higher dimension, $H:\mathcal{X}\to\mathcal{Y}$ as in Table \ref{tab:4D-to-14D-for-HH}. 
We construct $R$ and $H$ in such a way that
$R\circ H$ acts as the identity on $\mathcal{X}$, that is,
for all $x\in\mathcal{X}$, \rev{$x=R(H(x))$}.  
The maps $H$ and $R$ are  consistent with a multinomial structure for the ion channel state distribution, in the following sense.  The space $\mathcal{Y}$ covers all possible probability distributions on the eight sodium channel states and the five potassium channel states.  Those distributions which are products of one multinomial distribution on the \K-channel \footnote{That is, distributions indexed by a single open probability $n$; with the five states having probabilities $\binom{4}{i}n^i(1-n)^{4-i}\text{ for }0\le i\le 4$.
} 
and a second multinomial distribution on the \Na-channel\footnote{That is, distributions indexed by two open probabilities $m$ and $h$, with the eight  states having probabilities $\binom{3}{i}m^i(1-m)^{3-i}h^j(1-h)^{1-j},\text{ for }0\le i\le 3,\text{ and }0\le j\le 1$.} form a submanifold of $\Delta^7\times\Delta^4$.  In this way we define a submanifold,  denoted $\mathcal{M}=H(\mathcal{X})$,  the image of $\mathcal{X}$ under $H$. 

\begin{table}[htbp]\centering
   \begin{tabular}{cc} 
   4D model  & 14D model \\
   \hline
   $(v,m,h,n)$ & $(v,m_{00},\ldots,m_{31},n_0,\ldots,n_4)$
   \\ 
   $v$ & $v$\\
   \hline \hline
   $(1-n)^4$ & $n_0$\\
   $4(1-n)^3n$ & $n_1$\\
   $6(1-n)^2n^2$ & $n_2$\\
   $4(1-n)n^3$ & $n_3$\\
   $n^4$ & $n_4$\\ 
   \hline \hline
   $(1-m)^3(1-h)$ & $m_{00}$\\
   $3(1-m)^2m(1-h)$ & $m_{10}$\\
   $3(1-m)m^2(1-h)$ & $m_{20}$\\
   $m^3(1-h)$ & $m_{30}$\\ \hline
   $(1-m)^3h$ & $m_{01}$\\
   $3(1-m)^2mh$ & $m_{11}$\\
   $3(1-m)m^2h$ & $m_{21}$\\
   $m^3h$ & $m_{31}$\\ \hline \hline
   \end{tabular}
   \caption{$H$: Map from the 4D HH model $(m,h,n)$ and the 14D HH model $(m_{00},\ldots,m_{31},n_0,\ldots,n_4)$.}
   \label{tab:4D-to-14D-for-HH}
\end{table}

Before showing that the multinomial submanifold $\mathcal{M}$ is an invariant manifold within the 14D space, we first show that \rev{the 
deterministic 14D} HH model is defined on a bounded domain. Having a bounded forward-invariant manifold is a general property of conductance-based models, which may be written in the form
\begin{align}\label{eq:generalCBM}
&\frac{dV}{dt}=f(V,\mathcal{N}_\text{open})=\frac{1}{C}\left\{I_\text{app}-g_\text{leak}(V-V_\text{leak})-\sum_{i\in\mathcal{I}}\left[g_i N_{\text{open}}^i(V-V_i)\right]\right\}\\
&\frac{d\mathcal{N}}{dt}=A(V)\mathcal{N} 
\label{eq:dndt}\text{ and }\\
\label{eq:pro}
&\mathcal{N}_\text{open}=\mathcal{O}[\mathcal{N}].
\end{align} 
Here, $C$ is the membrane capacitance,  $I_\text{app}$ is an applied current with upper and lower bounds $I_{\pm}$ respectively, and $g_i$ is the conductance for the $i$th ion channel. 
The index $i$ runs over the set of distinct  ion channel types in the model, $\mathcal{I}$.
The gating  vector $\mathcal{N}$ represents the fractions of each ion channel population in  various ion channel states, and the operator $\mathcal{O}$ gives the fraction of each ion channel population in the open (or conducting) channel states.
The following lemma establishes that any conductance-based model (including the 4D or 14D HH model) is defined on a bounded domain.
\begin{lemma}\label{Lemma:vmin_vmax}
For a conductance-based model of the form
\eqref{eq:generalCBM}-\eqref{eq:pro}, and for any bounded applied current $I_-\le I_\text{app}\le I_+$, there exist  upper and lower bounds $V_\text{max}$ and $V_\text{min}$ such that trajectories with initial voltage condition $V\in[V_\text{min},V_\text{max}]$ remain within this interval for all times $t>0$, regardless of the initial channel state.
\end{lemma}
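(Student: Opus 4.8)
The plan is to produce explicit constants $V_\text{min}<V_\text{max}$ depending only on $C$, $g_\text{leak}>0$, the reversal potentials, and the current bounds $I_\pm$, and to show that on each bounding hyperplane $\{V=V_\text{max}\}$ and $\{V=V_\text{min}\}$ the vector field points strictly inward, so that the slab $\{V_\text{min}\le V\le V_\text{max}\}$ (with the channel coordinates ranging freely over their simplex) is forward invariant. Two facts drive the estimate: (a) each open fraction obeys $0\le N_\text{open}^i\le 1$ for all $t\ge 0$ and every initial channel-state distribution; and (b) at extreme voltages the leak term dominates the current balance in \eqref{eq:generalCBM}.

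For (a), I would note that for each fixed $V$ the matrix $A(V)$ in \eqref{eq:dndt} is the infinitesimal generator of a continuous-time Markov chain: its off-diagonal entries are nonnegative per-capita rates and each of its columns sums to zero (cf.~\eqref{matrix:ANa}--\eqref{Matrix:AK}). Hence $\mathbf 1^\intercal\mathcal N$ is conserved, and on each face $\{\mathcal N_k=0\}$ of the probability simplex one has $\dot{\mathcal N}_k=\sum_{j\neq k}A_{kj}(V)\,\mathcal N_j\ge 0$; so by Nagumo's theorem the simplex is forward invariant under $d\mathcal N/dt=A(V(t))\mathcal N$ for \emph{any} continuous voltage trace $V(\cdot)$. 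Consequently every coordinate of $\mathcal N$ — and therefore each $N_\text{open}^i=\mathcal O_i[\mathcal N]$, which is a partial sum of those coordinates — stays in $[0,1]$, with no assumption on $V$.

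For (b), I would set $E_+=\max\{V_\text{leak},\,V_i:i\in\mathcal I\}$, $E_-=\min\{V_\text{leak},\,V_i:i\in\mathcal I\}$, and define
\[
V_\text{max}=\max\Big\{E_+,\ V_\text{leak}+\tfrac{I_+}{g_\text{leak}}\Big\}+1,\qquad V_\text{min}=\min\Big\{E_-,\ V_\text{leak}+\tfrac{I_-}{g_\text{leak}}\Big\}-1.
\]
At $V=V_\text{max}$ every driving force $V_\text{max}-V_i\ge 0$, so by (a) the channel currents contribute $-\sum_i g_i N_\text{open}^i(V_\text{max}-V_i)\le 0$; and since $V_\text{max}-V_\text{leak}>I_+/g_\text{leak}$, for any admissible $I_\text{app}\le I_+$ we get
\[
C\,\frac{dV}{dt}\Big|_{V=V_\text{max}}\ \le\ I_+-g_\text{leak}\big(V_\text{max}-V_\text{leak}\big)\ <\ 0.
\]
Symmetrically, at $V=V_\text{min}$ every $V_\text{min}-V_i\le 0$ makes the channel currents nonnegative, $V_\text{min}-V_\text{leak}<I_-/g_\text{leak}$, and $C\,\dot V|_{V=V_\text{min}}\ge I_--g_\text{leak}(V_\text{min}-V_\text{leak})>0$.

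Finally I would close with a first-exit argument (or, equivalently, cite Nagumo's invariance theorem directly): solutions exist locally since the vector field is $C^\infty$; if a trajectory with $V(0)\in[V_\text{min},V_\text{max}]$ ever left the interval, its first exit time $t^\star>0$ would satisfy $\dot V(t^\star)\ge 0$ when $V(t^\star)=V_\text{max}$ and $\dot V(t^\star)\le 0$ when $V(t^\star)=V_\text{min}$, contradicting the strict inequalities above; the same a priori bounds, together with (a), preclude finite-time blow-up, so the solution is global. The computation is routine; the one subtlety is choosing $V_\text{max}$ (resp.~$V_\text{min}$) large (resp.~small) enough to dominate \emph{every} reversal potential, so that on the corresponding face \emph{all} channel currents are hyperpolarizing (resp.~depolarizing) — taking $|V|$ merely large does not suffice. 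Supplying ingredient (a), which legitimizes the bound $N_\text{open}^i\in[0,1]$, is the other step that cannot be skipped.
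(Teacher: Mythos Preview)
Your proof is correct and follows essentially the same approach as the paper: choose $V_\text{max}$ above and $V_\text{min}$ below every reversal potential so that all channel currents point inward at the boundary, then let the leak term dominate the applied current to force $\dot V$ strictly inward. You supply more detail than the paper does---an explicit justification of $N_\text{open}^i\in[0,1]$ via simplex invariance, the $\pm 1$ buffer for strictness, and a first-exit argument---but the core estimate and the constants are the same.
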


\begin{proof}
Let $V_1=\underset{i\in \mathcal{I}}{\text{min}}\{V_i\}\land V_\text{leak}$, and 
$V_2=\underset{i\in \mathcal{I}}{\text{max}}\{V_i\}\lor V_\text{leak}$, where the index $i$ runs over $\mathcal{I}$, the set of distinct ion channel types. 
Note that for all $i$, $0\leq N_{\text{open}}^i\leq1$, and $g_i>0,\ g_\text{leak}>0$. Therefore  when $V\le V_1$
\begin{align}\label{eq:vmin1}
\frac{dV}{dt}&=\frac{1}{C}\left\{I_\text{app}-g_\text{leak}(V-V_\text{leak})-\sum_{i\in\mathcal{I}}\left[g_i N_{\text{open}}^i(V-V_i)\right]\right\}\\
\label{eq:vmin2}
&\ge  \frac{1}{C}\left\{I_\text{app}-g_\text{leak}(V-V_1)-\sum_{i\in\mathcal{I}}\left[g_i N_{\text{open}}^i(V-V_1)\right]\right\}\\
\label{eq:vmin3}
&\ge \frac{1}{C}\left\{I_\text{app}-g_\text{leak}(V-V_1)-\sum_{i\in\mathcal{I}}\left[g_i\times 0 \times(V-V_1)\right]\right\}\\
\label{eq:vmin4}
&=\frac{1}{C}\left\{I_\text{app}-g_\text{leak} (V-V_1)\right\}.
\end{align} 
Inequality \eqref{eq:vmin2} follows because $V_1 =\underset{i\in \mathcal{I}}{\text{min}}\{V_i\}\land V_\text{leak}$, and  inequality \eqref{eq:vmin3} follows because $V-V_1\le 0$, $g_i>0$ and $N^i_\text{open}\ge 0$. 
Let $V_\text{min}:=\text{min}\left\{\frac{I_-}{g_\text{leak}} +V_1,V_1\right\}$.  When $V<V_\text{min}$, $\frac{dV}{dt}>0$. Therefore,  
$V$ will not decrease beyond $V_\text{min}$. 

Similarly, when $V\ge V_2$
\begin{align}\label{eq:vmax1}
\frac{dV}{dt}&=\frac{1}{C}\left\{I_\text{app}-g_\text{leak}(V-V_\text{leak})-\sum_{i\in\mathcal{I}}\left[g_i N_{\text{open}}^i(V-V_i)\right]\right\}\\
\label{eq:vmax2}
&\le \frac{1}{C}\left\{I_\text{app}-g_\text{leak}(V-V_2)-\sum_{i\in\mathcal{I}}\left[g_i N_{\text{open}}^i(V-V_2)\right]\right\}\\
\label{eq:vmax3}
&\leq \frac{1}{C}\left\{I_\text{app}-g_\text{leak}(V-V_2)-\sum_{i\in\mathcal{I}}\left[g_i\times 0\times(V-V_2)\right]\right\}\\
\label{eq:vmax4}
&=\frac{1}{C}\left\{I_\text{app}-g_\text{leak} (V-V_2)\right\}.
\end{align} 
Inequality \eqref{eq:vmax2} holds because $V_2=\underset{i\in \mathcal{I}}{\text{max}}\{V_i\}\lor V_\text{leak},$ and inequality \eqref{eq:vmax3} holds because $V-V_2\ge 0$, $g_i>0$ and $N^i_\text{open}\ge 0$. 
Let $V_\text{max}=\text{max}\left\{\frac{I_\text{app}}{g_\text{leak}}+V_2, V_2 \right\}$. When $V>V_\text{max},$ $\frac{dV}{dt}<0$.   Therefore, $V$ will not go beyond $V_\text{max}$.  

We conclude that if $V$ takes an initial condition in the interval $[V_\text{min},V_\text{max}],$ then $V(t)$ remains within this interval for all $t\ge 0$.
\end{proof}

Given that $y\in\mathcal{Y}$ has a bounded domain, Lemma \ref{LemmaXinvari} follows directly, and establishes that the multinomial submanifold $\mathcal{M}$ is a forward-time--invariant manifold within the 14D space. The proof of Lemma \ref{LemmaXinvari} is  in Appendix \ref{append_LemmaXinvari}.

\begin{lemma}\label{LemmaXinvari} Let $\mathcal{X}$ and $\mathcal{Y}$ be the lower-dimensional and higher-dimensional Hodgkin-Huxley manifolds given by \eqref{eq:XYdefinitions}, and let $F$ and $G$ be the vector fields on $\mathcal{X}$ and $\mathcal{Y}$ defined by \eqref{eq:HH4D_V}-\eqref{eq:rate3} and \eqref{14dhh1}-\eqref{14dhh3}, respectively. 
Let $H:\mathcal{X}\to\mathcal{M}\subset\mathcal{Y}$ and $R:\mathcal{Y}\to\mathcal{X}$ be the mappings given in Tables  \ref{tab:4D-to-14D-for-HH} and \ref{tab:14D-to-4D-for-HH}, respectively, 
and define the multinomial submanifold $\mathcal{M}=H(\mathcal{X})$.  Then $\mathcal{M}$ is forward-time--invariant under the flow generated by $G$.  Moreover, the vector field $G$,  when restricted to $\mathcal{M}$, coincides  with the vector field induced by $F$ and the map $H$.  That is, for all $y\in\mathcal{M}$, $G(y)=D_x H(R(y))\cdot F(R(y)).$
\end{lemma}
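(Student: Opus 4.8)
The plan is to reduce both assertions of Lemma~\ref{LemmaXinvari} to the single pointwise identity
\[
G(H(x)) \;=\; D_xH(x)\cdot F(x)\qquad\text{for all }x\in\mathcal{X},
\]
and then to verify this identity one coordinate block at a time. Suppose first that the displayed identity holds. The 4D field $F$ is $C^\infty$ and $\mathcal X=\R\times[0,1]^3$ is forward invariant with $v$ confined to a bounded interval by Lemma~\ref{Lemma:vmin_vmax}, so every 4D trajectory $x(t)$ with $x(0)=x_0\in\mathcal X$ exists for all $t\ge 0$; likewise $G$ is $C^\infty$ on $\mathcal Y$, hence locally Lipschitz, so $G$-trajectories through a given point are unique. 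Set $y(t):=H(x(t))$. By the chain rule and the identity, $\dot y(t)=D_xH(x(t))\,F(x(t))=G(H(x(t)))=G(y(t))$, so $y(\cdot)$ is \emph{the} $G$-trajectory through $H(x_0)$; since $y(t)\in H(\mathcal X)=\mathcal M$ for all $t\ge0$, $\mathcal M$ is forward invariant. For the vector-field claim, any $y\in\mathcal M$ equals $H(x)$ for some $x$, and because $R\circ H=\mathrm{id}_{\mathcal X}$ (established in the paragraph preceding the lemma) we have $R(y)=x$, whence $G(y)=G(H(x))=D_xH(x)\,F(x)=D_xH(R(y))\,F(R(y))$, as required.

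It then remains to check the identity block by block. The voltage coordinate is immediate: $H$ leaves $v$ fixed, so the $v$-component of $D_xH(x)F(x)$ is the right-hand side of \eqref{eq:HH4D_V}, while the $v$-component of $G(H(x))$ is \eqref{14dhh1} with $\mathbf N_5=n^4$ and $\mathbf M_8=m^3h$ substituted from Table~\ref{tab:4D-to-14D-for-HH}; the two coincide. For the potassium block, write $n_i(n)=\binom{4}{i}n^i(1-n)^{4-i}$; differentiating along $\dot n=\alpha_n(1-n)-\beta_n n$ yields an expression in $n_{i-1},n_i,n_{i+1}$, which I would match against row $i$ of $A_\text{K}(V)\mathbf N$ in \eqref{Matrix:AK}, using the convention $A_\text{K}(i,i)=-\sum_{j\ne i}A_\text{K}(j,i)$. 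Conceptually this is the classical fact that $\mathrm{Binomial}(4,n)$ is propagated by the lumped generator of four independent two-state gates, with the prefactors $4\alpha_n,3\alpha_n,2\alpha_n,\alpha_n$ and $\beta_n,2\beta_n,3\beta_n,4\beta_n$ in \eqref{Matrix:AK} being precisely those that keep the binomial profile stationary.

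For the sodium block I would exploit the product structure: index the eight \Na-states by $(i,j)$ with $i\in\{0,1,2,3\}$ counting open $m$-gates and $j\in\{0,1\}$ the $h$-state, so that $H$ sends $(m,h)$ to $m_{ij}=\binom{3}{i}m^i(1-m)^{3-i}\,h^{j}(1-h)^{1-j}$, and decompose $A_\text{Na}$ in \eqref{matrix:ANa} as an ``$m$-part'' (transitions in $i$ at rates $3\alpha_m,2\alpha_m,\alpha_m$ and $\beta_m,2\beta_m,3\beta_m$ within each fixed-$j$ block) plus an ``$h$-part'' (the $\alpha_h,\beta_h$ transitions coupling the $j=0$ and $j=1$ blocks). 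Then $\dot m_{ij}$ splits by the product rule into a term carried by the $m$-chain and a term carried by the $h$-chain, each evaluated exactly as in the potassium computation applied to a $3$-gate and a $1$-gate chain respectively; their sum reproduces row $(i,j)$ of $A_\text{Na}(V)\mathbf M$. The main obstacle is precisely this sodium bookkeeping: exhibiting $A_\text{Na}$ cleanly as the sum of its $m$- and $h$-parts, aligning the $(i,j)$ labeling of $\mathbf M$ with the row ordering of \eqref{matrix:ANa}, and confirming the combinatorial prefactors; everything else is elementary differentiation. (Only forward invariance is claimed, consistent with Lemma~\ref{Lemma:vmin_vmax} giving a forward-invariant voltage interval.) The full verification is routine but lengthy, and is deferred to Appendix~\ref{append_LemmaXinvari}.
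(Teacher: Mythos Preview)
Your proposal is correct and follows essentially the same approach as the paper: both reduce the lemma to verifying the pointwise identity $G(H(x))=D_xH(x)\,F(x)$ and then check it coordinate by coordinate on the voltage, potassium, and sodium blocks. The paper carries out the verification by brute-force expansion---writing out every component of $A_\text{Na}\mathbf M$ and $A_\text{K}\mathbf N$ with the multinomial substitutions and then recognizing $\partial H/\partial m$, $\partial H/\partial h$, $\partial H/\partial n$ as common factors---whereas you organize the same computation more structurally via the product/binomial decomposition of the gate populations; your explicit deduction of forward invariance from the identity (via uniqueness of $G$-trajectories) is also a bit cleaner than the paper's presentation, which leaves that step implicit.
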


Lemma \ref{LemmaXinvari} establishes that the  14D HH model given by \eqref{14dhh1}-\eqref{14dhh3} is dynamically consistent with the original 4D HH model given by  \eqref{eq:HH4D_V}-\eqref{eq:rate3}.

In \rev{\S\ref{subsec:converge}} we provide numerical evidence that the flow induced by $G$ on $\mathcal{Y}$ converges to $\mathcal{M}$ exponentially fast.    Thus, an initial probability distribution over the ion channel states that is not multinomial quickly approaches a multinomial distribution with dynamics induced by the 4D HH equations.  Similar results, restricted to the voltage-clamp setting, were established by Keener and Earnshaw  \citep{Keener1998MathPhy,Keener:2009:JMathBiol, EarnshawKeener2010SIADS_2}.

\subsection{Local Convergence Rate}\label{subsec:converge}
Keener and Earnshaw \citep{Keener1998MathPhy,Keener:2009:JMathBiol,EarnshawKeener2010SIADS_2} showed that for Markov chains with constant (even time varying) transition rates: (i) the multinomial probability distributions corresponding to mean-field models (such as the HH sodium or potassium models) form  invariant submanifolds within the space of probability distributions over the channel states, and (ii) arbitrary initial probability distributions converged exponentially quickly to the invariant manifold.   
For systems with prescribed time-varying transition rates, such as for an ion channel system under voltage clamp with a prescribed voltage $V(t)$  as a function of time, the distribution of channel states had an invariant submanifold again corresponding to the multinomial distributions, and the flow on that manifold induced by the evolution equations was consistent with the flow of the full system.  

In the preceding section we established the dynamical consistency of the 14D and 4D models with enough generality to cover both the voltage-clamp and current-clamp systems;  the latter is distinguished by NOT having a prescribed voltage trace, but rather having the voltage coevolve along with the (randomly fluctuating) ion channel states.  
\rev{Here,} we give numerical evidence for exponential convergence under current clamp  similar to that established under voltage clamp by Keener and Earnshaw.


Rather than providing a rigorous proof, we give numerical evidence for the standard deterministic HH model that $y\to\mathcal{M}$ under current clamp (spontaneous firing conditions) in the following sense: if $y(t)$ is a solution of $\dot{y}=G(y)$ with arbitrary initial data $y_0\in\mathcal{Y}$, then   $||y(t)-H(R(y(t)))||\to 0$ as $t\to\infty$, exponentially quickly. 
Moreover, the convergence rate is bounded by $\lambda=\text{max}(\lambda_\text{v},\lambda_\text{Na},\lambda_\text{K})$, where $\lambda_\text{ion}$ is the least negative nontrivial eigenvalue of the channel state transition matrix (over the voltage range $V_\text{min}\le v \le V_\text{max}$) for a given ion, and $-1/\lambda_\text{v}$ is the largest value taken by the membrane time constant (for $V_\text{min}\le v \le V_\text{max}$).  In practice, we find that the membrane time constant does not determine the slowest time scale for convergence to $\mathcal{M}$.  
In fact it appears that the second-least-negative eigenvalues (not the least-negative eigenvalues) of the ion channel matrices set the convergence rate.

Note that $y\in \mathcal{Y}$ can be written as
$y=[V;\mbM;\mbN]$. As shown in Appendix \ref{append_LemmaXinvari}, the Jacobian matrix $\frac{\partial H}{\partial x}$ consists of three block matrices: one for the voltage terms, $\frac{\partial V}{\partial v}$, one associated to the \Na~gates, given by $\frac{\partial \mbM}{\partial m}$ and $\frac{\partial \mbM}{\partial h}$, and one corresponding to the \K~gates, $\frac{\partial \mbN}{\partial n}$. 
Fixing a particular voltage $v$, let $\lambda_i,\ i\in\{0,1,2,\ldots,7\}$ be the eight eigenvalues of $A_\text{Na}$  and $v_i$ be the associated eigenvectors, i.e., $A_\text{Na}v_i=\lambda_i v_i$ for the rate matrix in equations \eqref{14dhh2}.
Similarly, let $\eta_i,\ w_i,\  i\in\{0,1,2,\ldots,4\}$ be the five eigenvalues and the associated eigenvectors of $A_\text{K}$, i.e., $A_\text{K}w_i=\eta_i w_i$, for the rate matrix in equations \eqref{14dhh3}. 
If we rank the eigenvalues of either matrix in descending order, the leading eigenvalue is always zero (because the sum of each column for $A_\text{Na}$ and $A_\text{K}$ is zero for every $V$) and the remainder are real and negative.  
Let $\lambda_1$ and $\eta_1$ denote the largest (least negative) nontrivial eigenvalues of $A_\text{Na}$ and $A_\text{K}$, respectively, and let $v_1\in\R^8$ and $w_1\in\R^5$ be the corresponding eigenvectors.


The eigenvectors of the full 14D Jacobian are not simply related to the eigenvectors of the component submatrices, because the first (voltage) row and column contain nonzero off-diagonal elements. 
However, the eigenvectors associated to the largest nonzero eigenvectors of $A_\text{Na}$ and $A_\text{K}$ 
(respectively $v_2$ and $w_2$) are parallel to $\partial M/\partial h$ and $\partial N/\partial n$, regardless of voltage.   In other words, the slowest decaying directions for each ion channel, $v_1$ and $w_1$, transport the flow along the multinominal sub-manifold of $\mathcal{Y}$. Therefore, it is reasonable to make the hypothesis that if $Y(t)$ is a solution of $\dot{y}=G(y)$ with arbitrary initial data $y\in\mathcal{Y}$, then
\begin{equation} \label{ineq:converge}
    \frac{||y(t)-H(R(y(t)))||}{||y(0)-H(R(y(0)))||}\lesssim e^{-\lambda_2 t} 
\end{equation}
for $\lambda_2$ being the \emph{second largest} nonzero eigenvalue of $A_\text{K}$ and $A_\text{Na}$ over all $v$ in the range $v_\text{min}<v<v_\text{max}$. 
The convergence behavior is plotted numerically in Fig.~\ref{fig:converge}, and is consistent with the Ansatz \eqref{ineq:converge}. We calculate the distance from a point $y$ to $\mathcal{M}$ as  
\begin{equation} \label{eq:initial_p}
    y_\text{max}=\operatorname*{argmax}_{y\in\mathcal{Y}} \left\Vert y-H(R(y))\right\Vert^2.
\end{equation}
In order to obtain an upper bound on the distance as a function of time, we begin with the furthest point in the simplex from $\mathcal{M}$, by numerically finding the solution to the argument \eqref{eq:initial_p}, which is 
$$y_\text{max}=[v, 0.5, 0, 0,0.5,0,0,0,0,0.5,0,0,0,0.5].$$ 
This vector represents the furthest possible departure from the multinomial distribution: all probability equally divided between the extreme states $m_{00}$ and $m_{03}$ for the sodium channel, and the extremal states $n_0$ and $n_4$ for potassium.  The maximum distance from the multinomial submanifold $\mathcal{M}$, $d_\text{max}$, is calculated using this point. As shown in Fig.~\ref{fig:converge},
the function $d_\text{max}\,e^{-\lambda_2 t}$ provides a tight upper bound for the convergence rate from arbitrary initial data $y\in\mathcal{Y}$ to the invariant submanifold $\mathcal{M}$.

\begin{figure}
\begin{center}
\includegraphics[scale=0.45]{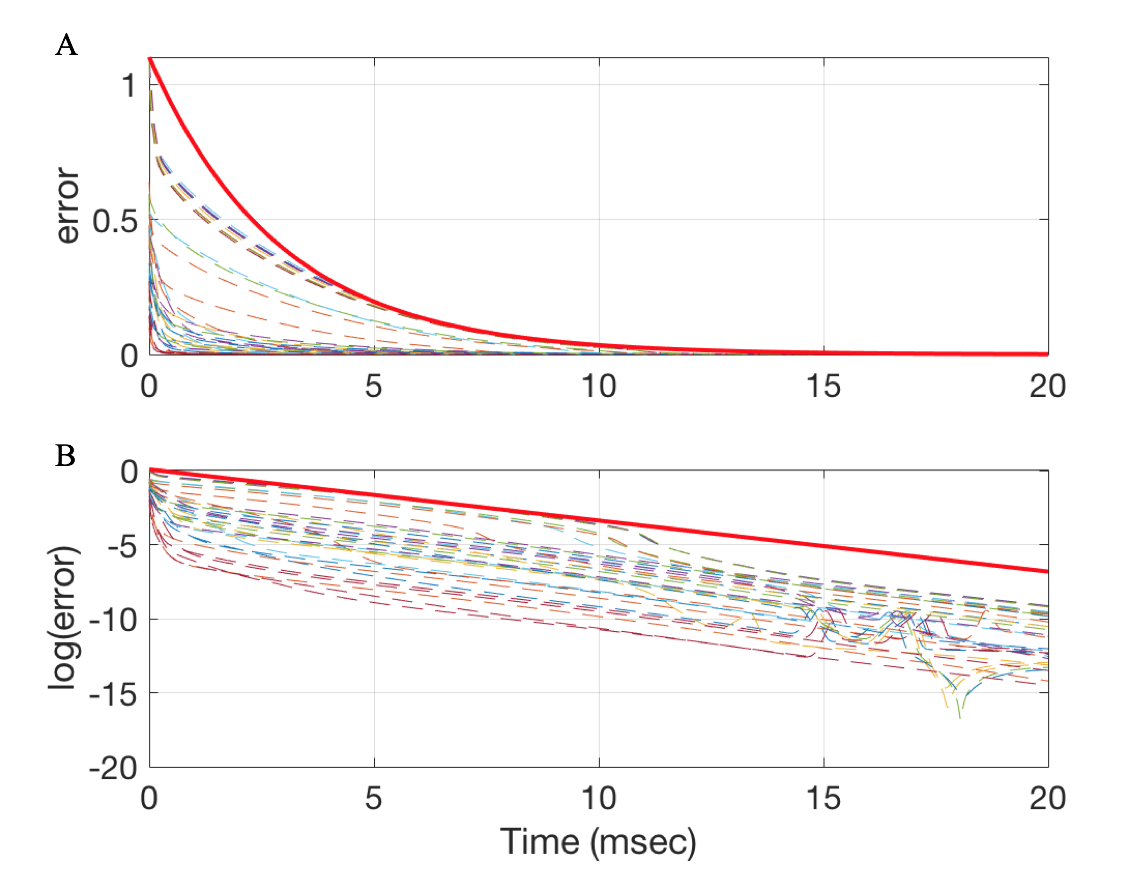}
\vspace*{-4mm}
\caption{\small{Convergence of trajectories $y(t)$, for arbitrary initial conditions $y_0\in\mathcal{Y}$, to the multinomial submanifold $\mathcal{M}$, for an ensemble of random initial conditions. A: distance \eqref{eq:initial_p} between $y(t)$ and  $\mathcal{M}$. B: Logarithm of the distance in panel A. The red solid line  shows $d_\text{max}e^{-\lambda_2 t}$ in panel A and \rev{$\log(d_\text{max})-\lambda_2 t$} in panel B.}}
\label{fig:converge}
\end{center}
\end{figure}



\section{Stochastic 14D Hodgkin-Huxley Models}\label{sec:stochastic_14DHH}

Finite populations of ion channels generate stochastic fluctuations (``channel noise") in ionic currents that influence action potential initiation and timing \citep{White1998APSB,SchneidmanFreedmanSegev1998NECO}.  
At the molecular level, fluctuations arise because transitions between individual ion channel states are stochastic \citep{HillChen1972BPJ,Neher1976Nature,SkaugenWalloe1979ActaPhysiolScand}.  
Each directed edge in the ion channel state transition diagrams (cf.~Fig.~\ref{plot:HHNaKgates}) introduces an independent noise source.  
It is of interest to be able to attribute variability of the interspike interval timing distribution to specific molecular noise sources, 
\rev{specifically} individual directed edges in each channel state graph.
In order to explore these contributions, we develop a system of  Langevin equations for the Hodgkin-Huxley equations, set in a 14-dimensional phase space.

Working with a higher-dimensional stochastic model may appear inconvenient, but in fact has several advantages.
First, any projection of an underlying 14D model onto a lower (e.g.~4D) stochastic model generally entails loss of the Markov property.
Second, the higher-dimensional representation allows us to assess the contribution of individual molecular transitions to the macroscopically observable variability of timing in the interspike interval distribution.  
Third, by using a rectangular noise coefficient matrix constructed directly from the transitions in the ion channel graphs, we avoid a matrix decomposition step.  This approach leads  to a fast algorithm that is equivalent to the slower algorithm due to Fox and Lu \citep{FoxLu1994PRE,GoldwynSheaBrown2011PLoSComputBiol} in a strong sense (pathwise equivalence) that we detail in \S \ref{sec:path_equiv}.

\subsection{Exact Stochastic Simulation of HH Kinetics: the Random--Time-Change Representation}\label{subsec:RTC}

An ``exact" representation of the Hodgkin-Huxley system with a population of $\mtot$ sodium channels and $\ntot$ potassium channels treats each of the 20 directed edges in the sodium channel diagram, and each of the 8 directed edges in the potassium channel diagram, as independent Poisson processes, with voltage-dependent per capita intensities.
As in the deterministic case, the sodium and potassium  channel population vectors $\mbM$ and $\mbN$ satisfy $\sum_{i=0}^3\sum_{j=0}^{1}\mbM_{ij}\equiv 1\equiv\sum_{i=0}^4\mbN_i$.\footnote{We annotate the stochastic population vector $\mbM$ either as $[M_{00},M_{10},\ldots,M_{31}]$ or as $[M_1,\ldots,M_8]$, whichever is more convenient. In either notation $M_{31}\equiv M_8$ is the conducting state of the \Na channel. For the \K channel, $N_4$ denotes the conducting state.} 
Thus they are constrained, respectively, to a 7D simplex embedded in $\R^8$ and a 4D simplex embedded in $\R^5$.
In the random--time-change representation \citep{anderson2015stochastic} the exact evolution equations are written in terms of sums over the directed edges $\mathcal{E}$ for each ion channel, $\mathcal{E}_\text{Na}=\{1,\ldots,20\}$ and $\mathcal{E}_\text{K}=\{1,\ldots,8\}$, cf.~Fig.~\ref{plot:HHNaKgates}.
\begin{align}
    \mbM(t)&=\mbM(0)+\frac1\mtot \sum_{k\in\mathcal{E}_\text{Na}} \zeta^\text{Na}_kY^\text{Na}_k\left(\mtot
    \int_0^t \alpha^\text{Na}_k(V(s))\mbM_{i(k)}(s)\,ds
    \right) \label{eq:RTCNa}\\
    \mbN(t)&=\mbN(0)+\frac1\ntot
    \sum_{k\in\mathcal{E}_\text{K}} \zeta^\text{K}_k Y^\text{K}_k\left(\ntot
    \int_0^t \alpha^\text{K}_k(V(s))\mbN_{i(k)}(s)\,ds
    \right) \label{eq:RTCK}.
\end{align}
Here $\zeta^\text{ion}_k$ is the stoichiometry vector for the $k$th directed edge. If we write $i(k)$ for the source node and $j(k)$ for the destination node of edge $k$, then $\zeta^\text{ion}_k=e^\text{ion}_{j(k)}-e^\text{ion}_{i(k)}$.\footnote{We write $e^\text{Na}_i$ and $e^\text{K}_i$ for the $i$th standard unit vector in $\R^8$ or $\R^5$, respectively.}  
Each $Y^\text{ion}_k(\tau)$ is an independent unit-rate Poisson process, evaluated at ``internal time'' (or integrated intensity) $\tau$, representing the independent channel noise arising from transitions along the $k$th edge.  The voltage-dependent per capita transition rate along the $k$th edge is $\alpha^\text{ion}_k(v)$, and $\mbM_{i(k)}(s)$ (resp.~$\mbN_{i(k)}(s)$) is the fractional occupancy of the source node for the $k$th transition at time $s$. Thus, for example, the quantity $\mtot\alpha_k^\text{Na}(V(s))\mbM_{i(k)}(s)$ gives the \emph{net} intensity along the $k$th directed edge in the \Na~channel graph at time $s$.

\begin{remark}
Under ``voltage-clamp'' conditions,  with the voltage $V$  held fixed,   \eqref{eq:RTCNa}-\eqref{eq:RTCK} reduce to a time-invariant first-order transition process on an directed graph \citep{SchmidtThomas2014JMN,bmb:Cadgil+Lee+Othmer:2005}.
\end{remark}

Under ``current-clamp" conditions, the voltage evolves according to a conditionally deterministic current balance equation of the form
\begin{align}\label{eq:RTR_dV}
\frac{dV}{dt}&=\frac1C\left\{I_\text{app}(t)-\bar{g}_\text{Na}\mbM_{31}\left(V-V_\text{Na}\right)-\bar{g}_\text{K}\mbN_{4}\left(V-V_\text{K}\right)-g_\text{leak}(V-V_\text{leak})    \right\}.
\end{align} 
Here, $C$ ($\mu F/cm^2$) is the capacitance, $I_\text{app}$ ($nA/cm^2$) is the applied current, the maximal conductance is $\bar{g}_\text{chan}$ ($mS/cm^2$),  $V_\text{chan}$ ($mV$) is the associated reversal potential,   and the ohmic leak current is $g_\text{leak}(V-V_\text{leak})$. 

The random--time-change representation \eqref{eq:RTCNa}-\eqref{eq:RTR_dV} leads to an exact stochastic simulation algorithm, given in \citep{anderson2015stochastic}; equivalent simulation algorithms have been used previously \citep{ClayDeFelice1983BiophysJ,NewbyBressloffKeener2013PRL}.  
Many authors substitute a simplified Gillespie algorithm that imposes a piecewise-constant propensity approximation, ignoring the voltage dependence of the transition rates $\alpha^\text{ion}_k$ between channel transition events  \citep{Goldwyn2011PRE,GoldwynSheaBrown2011PLoSComputBiol,OrioSoudry2012PLoS1, Pezo2014Frontiers}.  
The two methods give similar moment statistics, provided $\ntot,\mtot\gtrsim 40$ \citep{anderson2015stochastic};  their similarity regarding path-dependent properties (including interspike interval distributions) has not been studied in detail.  
Moreover, both Markov chain algorithms are prohibitively slow for modest numbers (e.g.~thousands) of channels; the exact algorithm may be even slower than the approximate Gillespie algorithm.    For consistency with previous studies, in this paper we use the piecewise-constant propensity Gillespie algorithm with $\mtot=6000$ \Na~and $\ntot=1800$ \K~channels  as our ``gold standard" Markov chain (MC) model, as in  \citep{GoldwynSheaBrown2011PLoSComputBiol}.

In \rev{\S\ref{subsec:14DHH}} we develop a 14D conductance-based Langevin model with 28 independent noise sources -- one for each directed edge -- derived from the random--time-change representation \eqref{eq:RTCNa}-\eqref{eq:RTR_dV}.   
In previous work \citep{SchmidtThomas2014JMN} we established a quantitative measure of ``edge importance'', namely the contribution of individual transitions (directed edges) to the variance of channel state occupancy under steady-state voltage-clamp conditions.
Under voltage clamp, the edge importance was identical for each reciprocal pair of directed edges in the graph, a consequence of detailed balance.   
Some Langevin models lump the noise contributions of each pair of edges  \citep{Dangerfield2010PCS, OrioSoudry2012PLoS1,Dangerfield2012APS,Pezo2014Frontiers}.  Under conditions of detailed balance, this simplification is well justified. However, as we will show  (cf.~Fig.~\ref{fig:logSSNaK}) under current-clamp conditions, e.g.~for an actively spiking neuron, detailed balance is violated, the reciprocal edge symmetry is broken, and each pair of directed edges makes a distinct contribution to ISI variability.

\subsection{Langevin Equations of the 14D HH Model}
\label{subsec:14DHH}
For sufficiently large number of channels,  \citep{SchmidtThomas2014JMN,SchmidtGalanThomas2018PLoSCB} showed that under voltage clamp, equations \eqref{eq:RTCNa}-\eqref{eq:RTCK} can be approximated by a multidimensional Ornstein-Uhlenbeck (OU) process (or Langevin equation) in the form\footnote{The convergence of the discrete channel system to a Langevin system under voltage clamp is a special case of Kurtz' theorem \citep{Kurtz1981}.}
\begin{align}\label{eq:RTR_dM}
d\mbM&=\sum_{k=1}^{20}\zeta_k^{\text{Na}}\left\{\alpha_k^\text{Na}(V)\mbM_{i(k)}dt +\sqrt{\epsilon^\text{Na}\alpha_k^\text{Na}(V)\mbM_{i(k)}}\,dW^\text{Na}_k\right\}\\
\label{eq:RTR_dN}
d\mbN&=\sum_{k=1}^{8}\zeta_k^{\text{K}}\left\{\alpha_k^\text{K}(V)\mbN_{i(k)}dt +\sqrt{\epsilon^\text{K}\alpha_k^\text{K}(V)\mbN_{i(k)}}\,dW^\text{K}_k\right\}.
\end{align}
Here, $\mbM$, $\mbN$, $\zeta_k^\text{ion}$, and $\alpha_k^\text{ion}$
have the same meaning as in \eqref{eq:RTCNa}-\eqref{eq:RTCK}.
The channel state \emph{increments} in a short time interval $dt$ are $d\mbM$ and $d\mbN$, respectively.
The finite-time increment in the Poisson process $Y^\text{ion}_k$ is now approximated by a Gaussian process, namely the increment $dW_k^\text{ion}$ in a Wiener (Brownian motion) process associated with each directed edge.
These independent noise terms are scaled by $\epsilon^\text{Na}=1/\mtot$ and $\epsilon^\text{K}=1/\ntot$, respectively.

Equations \eqref{eq:RTR_dV}-\eqref{eq:RTR_dN} comprise a system of Langevin equations for the HH system (under current clamp) on a 14-dimensional phase space driven by 28 independent white noise sources, one for each directed edge. These equations may be written succinctly in the form 
\begin{equation}\label{eq:langevin-main}
  d\mbX=\mbf(\mbX)\,dt+\sqrt{\epsilon}\mathcal{G}(\mbX)\,d\mbW(t)  
\end{equation}
where we define the 14-component vector $\mbX=(V;\mbM;\mbN)$, and $\mbW(t)$ is a Wiener process with 28 independent components.
The deterministic part of the evolution equation $\mbf(\mbX)\,=\left[\frac{dV}{dt};\frac{d\mbM}{dt};\frac{d\mbN}{dt}\right]$ is the same as the mean-field, equations \eqref{14dhh1}-\eqref{14dhh3}. The state-dependent noise coefficient matrix $\mathcal{G}$ is $14\times28$ and can be written as 
\[\sqrt{\epsilon}\mathcal{G} =\left(\begin{array}{@{}c|c@{}}
  \bigzero_{1\times 20} & 
  \bigzero_{1\times 8}  \\
\hline
  \begin{matrix}
  S_\text{Na}
  \end{matrix} &\bigzero_{8\times 8} \\
  \hline
  \bigzero_{5\times 20} & \begin{matrix}
  S_\text{K}
  \end{matrix}
\end{array}\right).
\]
The coefficient matrix $S_\text{K}$
is
\begin{align*}
S_\text{K}=
\frac{1}{\sqrt{\ntot}}
&\left[
\begin{array}{cccc}
-\sqrt{4\alpha_n n_0}& \sqrt{\beta_n n_1}&0&0\\
   \sqrt{4\alpha_n n_0}& -\sqrt{\beta_n n_1}&-\sqrt{3\alpha_n n_1}&\sqrt{2\beta_n n_2} \\
  0 &0 &\sqrt{3\alpha_n n_1}&-\sqrt{2\beta_n n_2}\\
    0 &0&0&0 \\
    0&0&0&0\end{array} 
\right.\cdots\\
&\quad\quad\quad\quad\quad\quad\cdots \left. 
\begin{array}{cccc}
0&0&0&0\\
0&0&0&0 \\
  -\sqrt{2\alpha_n n_2}&\sqrt{3\beta_n n_3}&0&0\\
    \sqrt{2\alpha_n n_2}&-\sqrt{3\beta_n n_3}&-\sqrt{\alpha_n n_3}&\sqrt{4\beta_n n_4} \\
    0&0&\sqrt{\alpha_n n_3}&-\sqrt{4\beta_n n_4} 
    \end{array}
    \right],
\end{align*}
and $S_\text{Na}$ is given in Appendix \ref{app:SNaSK}. 
Note that each of the 8 columns of $S_\text{K}$ corresponds to the flux vector along a single directed edge in the \K~channel transition graph.
Similarly, each of the 20 columns of $S_\text{Na}$ corresponds to the flux vector along a directed edge in the \Na~graph (cf.~App.~\S\ref{app:SNaSK}).

\begin{remark}
Although the ion channel state trajectories generated by equation \eqref{eq:langevin-main} are not strictly bounded to remain within the nonnegative simplex, empirically, the voltage nevertheless remains within specified limits with overwhelming probability.
\end{remark}
To facilitate comparison of the model \eqref{eq:RTR_dV}-\eqref{eq:RTR_dN}  with prior work  \citep{FoxLu1994PRE, Fox1997BiophysicalJournal,  GoldwynSheaBrown2011PLoSComputBiol}, we may rewrite the $14\times28$D Langevin description in the equivalent form 
\begin{align}
C \frac{dV}{dt}&=I_\text{app}(t)-\bar{g}_\text{Na}\mbM_8\left(V-V_\text{Na}\right)-\bar{g}_\text{K}\mbN_5\left(V-V_\text{K}\right)-g_\text{leak}(V-V_\text{leak}), \label{eq:14dHH_L_dV}\\
    \frac{d\mbM}{dt}&=A_\text{Na}\mbM+S_\text{Na}\xi_\text{Na}, \label{eq:14dHH_L_dNa} \\
    \frac{d\mbN}{dt}&=A_\text{K}\mbM+S_\text{K}\xi_\text{K}, \label{eq:14dHH_L_dK}
\end{align}
The drift matrices $A_\text{Na}$ and $A_\text{K}$ remain the same as in \citep{FoxLu1994PRE}, and are the same as in the
14D deterministic model \eqref{matrix:ANa}-\eqref{Matrix:AK}. $S_\text{Na}$ and $S_\text{K}$ are constructed from direct transitions of the underlying kinetics in Fig.~\ref{plot:HHNaKgates},  $\xi_\text{Na}\in\R^{20}$ and  $\xi_\text{K}\in\R^8$ are vectors of independent Gaussian white noise processes with zero mean and unit variance.

Fox and Lu's original approach \citep{FoxLu1994PRE} requires solving a matrix square root equation $SS^\intercal=D$ to obtain a square ($8\times 8$ for Na$^+$ or $5\times 5$ for K$^+$) noise coefficient matrix consistent with the state-dependent diffusion matrix $D$. As an advantage, the ion channel representation  \eqref{eq:14dHH_L_dV}-\eqref{eq:14dHH_L_dK} uses sparse, nonsquare noise coefficient matrices ($8\times 20$ for the Na$^+$ channel and $5\times 8$ for the K$^+$ channel), which exposes the independent sources of noise for the system. 

The new Langevin model in \eqref{eq:14dHH_L_dV}-\eqref{eq:14dHH_L_dK} does not require detailed balance, which gives more insights to the underlying kinetics.
Review papers such as \citep{GoldwynSheaBrown2011PLoSComputBiol, Pezo2014Frontiers, Huang2015PhBio}, did systematic comparison of various stochastic versions of the HH model.  
In \S \ref{sec:path_equiv} and \S \ref{sec:modelcomp}, we quantitaviely analyze the connection between the new model and other existing models \citep{FoxLu1994PRE, Goldwyn2011PRE, GoldwynSheaBrown2011PLoSComputBiol, Dangerfield2010PCS, OrioSoudry2012PLoS1, Dangerfield2012APS, Huang2013APS,Pezo2014Frontiers,Huang2015PhBio,Fox2018arXiv}. Problems such as the boundary constrains are beyond the scope of this paper, however, we would like to connect the new model to another type of approximation to the MC model, namely the stochastic shielding approximation.

\subsection{Stochastic Shielding for the 14D HH Model} \label{subsec:SS}
 The stochastic shielding (SS) approximation was  introduced by Schmandt and Gal$\acute{\text{a}}$n \citep{SchmandtGalan2012PRL}, in order to \rev{approximate} the Markov process using fluctuations from only a subset of the transitions, namely those corresponding to changes in the 
 observable states.
 In \citep{SchmidtThomas2014JMN}, we showed that, under voltage clamp, each directed edge makes a distinct contribution to the steady-state variance of the ion channel conductance, with the total variance being a sum of these contributions.  We call the variance due to the $k$th directed edge the \emph{edge importance}; assuming detailed balance, it is given by 
 \begin{equation}
 \label{eq:edgeimportance}
     R_k=J_k\sum_{i=2}^n\sum_{j=2}^n\left(\frac{-1}{\lambda_i+\lambda_j}\right)
     \left(\mbc^\intercal v_i \right)
     \left( w_i^\intercal \zeta_k \right)
     \left( \zeta_k^\intercal w_j  \right)
     \left(v_j^\intercal \mbc \right).
 \end{equation}
Here, $J_k$ is the steady-state probability flux along the $k$th directed edge; $\lambda_n<\lambda_{n-1}\le \ldots\le \lambda_2< 0$ are the eigenvalues of the drift matrix ($A_\text{Na}$ or $A_\text{K}$, respectively), and $v_i$ (resp.~$w_i$) are the corresponding right (resp.~left) eigenvectors of the drift matrix.  Each edge's stoichiometry vector $\zeta_k$ has components summing to zero; consequently the columns of $A_\text{Na}$ and $A_\text{K}$ all sum to zero.  Thus each drift matrix has a leading trivial eigenvalue $\lambda_1\equiv 0$.  The  vector $\mbc$ specifies the unitary conductance of each ion channel state; for the HH model it is proportional to $e_8^\text{Na}$ or $e_5^\text{K}$, respectively.

Fig.~\ref{fig:SS_volt} shows the edge importance for each pair of edges in the HH \Na~and \K~ion channel graph, as a function of voltage in the range
$[-100, 100]$ mV. 
Note that reciprocal edges have identical $R_k$ due to detailed balance. Under voltage clamp, the largest value of $R_k$ for the HH channels always corresponds to directly observable transitions, i.e.~edges $k$ such that $|\mbc^\intercal\zeta_k|>0$, although this condition need not hold in general \citep{SchmidtGalanThomas2018PLoSCB}.

\begin{figure}
\begin{center}
\includegraphics[scale=0.65]{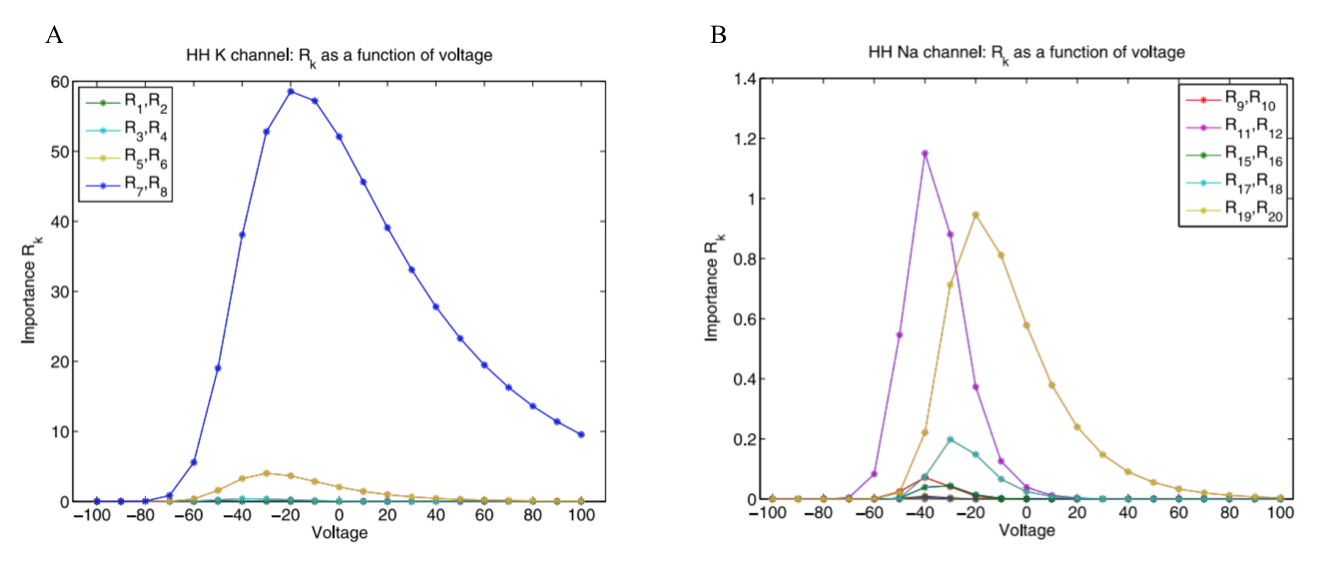}
\caption{Stochastic shielding under voltage clamp. Redrawn (with permission) from Figs.~10 \& 13 of \citep{SchmidtThomas2014JMN}. Each curve shows the
edge importance $\text{R}_k$ (equation \eqref{eq:edgeimportance}) as a
function of voltage in the range
$[-100, 100]$ mV for a different
 edge pair.  For the \K~kinetics, $\text{R}_7=\text{R}_8$ are the largest $\text{R}_k$ value in the voltage range above. For the \Na~kinetics, $\text{R}_{11}=\text{R}_{12}$ have the largest $\text{R}_k$ values in the subthreshold voltage range (c.~$[-100, -25]$ mV), and $\text{R}_{19}=\text{R}_{20}$ have the largest $\text{R}_k$ values in the suprathreshold voltage range (c.~$[-25, 100]$ mV). }\label{fig:SS_volt}
\end{center}
\end{figure}

To apply the stochastic shielding method under current clamp, we simulate the model with noise from only a selected subset  $\mathcal{E'}\subset\mathcal{E}$ of directed edges, replacing \eqref{eq:14dHH_L_dNa}-\eqref{eq:14dHH_L_dK} with
\begin{align}
    \frac{d\mbM}{dt}&=A_\text{Na}\mbM+S'_\text{Na}\xi_\text{Na}, \label{eq:14dHH_L_dNa_SS} \\
    \frac{d\mbN}{dt}&=A_\text{K}\mbM+S'_\text{K}\xi_\text{K}, \label{eq:14dHH_L_dK_SS},
\end{align}
where $S'_\text{Na}$ (resp.~$S'_\text{K}$) is a reduced matrix containing only the noise coefficients from the most important edges $\mathcal{E'}$.  \rev{That is, $\mathcal{E'}$ contains a subset of edges with the largest edge-importance values $R_k$.}

\cite{SchmandtGalan2012PRL} assumed that the edges with the largest contribution contribution to current fluctuations under voltage clamp would also make the largest contributions to variability in voltage and timing under current clamp, and included edges $7-8$ of the \K channel ($\mathcal{E}'_\text{K}=\{7,8\}$) and edges $11-12$ and $19-20$ of the \Na~channel ($\mathcal{E}'_\text{Na}=\{11,12,19,20\}$), yielding an
$8\times 4$ matrix $S'_\text{Na}$ and an $\rev{5}\times 2$ matrix $S'_\text{K}$.
They demonstrated numerically that restricting stochastic forcing to these edges gave a significantly faster simulation with little appreciable change in statistical behavior: 
under voltage clamp, the mean current remained the same, with a small (but noticeable) decrease in the current variance; meanwhile similar inter-spike interval (ISI) statistics were observed.

Under current clamp, detailed balance is violated, and it is not clear from mathematical principles whether the edges with the largest $R_k$ under voltage clamp necessarily make the largest contribution under other circumstances. 
In order to evaluate the contribution of the fluctuations driven by each directed edge on ISI variability, we test the stochastic shielding method by removing all but one column of $S_\text{ion}$ at a time.
That is, we restrict to a single noise source and observe the resulting ISI variance empirically.
For example, to calculate the importance of the $k^{th}$ direct edge in the \Na~channel, we suppress the noise from all other edges by setting $S'_\text{K}\xi_\text{K}=\bigzero_{5\times 1}$ and   
$$S'_\text{Na}=\left[\bigzero_{8\times 1},\cdots,S_\text{Na}(:,k),\cdots,\bigzero_{8\times 1}\right]$$
i.e., only include the k$^{th}$ column of $S_\text{Na}$ and set other columns to be zeros. The ISI variance was calculated from an ensemble of $10^4$  voltage traces, each spanning c.~500 ISIs.

Fig.~\ref{fig:logSSNaK}A plots the logarithm of the ISI variance for each edge in $\mathcal{E}_\text{K}$. Vertical bars (cyan) show the ensemble mean of the ISI variance, with a 95\% confidence interval \rev{superimposed} (magenta).  
Several observations are in order.  
First, the ISI variance driven by the noise in each edge decreases rapidly, the further the edge is from the observable transitions (edges 7,8), reflecting the underlying ``stochastic shielding" phenomenon.  Second, the symmetry of the edge importance for reciprocal edge pairs ((1,2), (3,4), (5,6) and (7,8)) that is observed under voltage clamp is broken under current clamp.  The contribution of individual directed edges to timing variability under current clamp has another important difference compared with the edge importance (current fluctuations) under voltage clamp.
A similar breaking of symmetry for reciprocal edges is seen for the \Na~channel, again reflecting the lack of detailed balance during active spiking. 

Fig.~\ref{fig:logSSNaK}B shows the ISI variance when channel noise is included on individual edges of $\mathcal{E}_\text{Na}.$
Here the difference between voltage and current clamp is striking.  Under voltage clamp, the four most important edges are always those representing ``observable transitions", in the sense that the transition's stoichiometry vector $\zeta$ is not orthogonal to the conductance vector $\mbc$.  That is,  the four most important  pairs are always 11-12 and 19-20, regardless of voltage (Fig.~\ref{fig:SS_volt}).  
Under current clamp,  the most important edges are 17, 18, 19 and 20. Although edges 11 and 12 are among the four most important sources of channel population fluctuations under voltage clamp, they are not even among the top ten contributors to ISI variance, when taken singly. 
Even though edges  17 \rev{and} 18 are ``hidden", meaning they do not directly change the instantaneous channel conductance, these edges are nevertheless the second most important pair under current clamp.
Therefore, when we implement the stochastic-shielding based approximation, we include the pairs 17-18 and 19-20 in equation \eqref{eq:14dHH_L_dNa_SS}.
\rev{We refer to the approximate SS model driven by these six most important edges as the $14\times 6$D HH model.}

\rev{Given the other parameters we use for the HH model (cf.~Tab.~\ref{tab:parameters} in Appendix \ref{axppend_alpha_beta}), the input current of $I_\text{app}=10$ nA is slightly beyond the region of multistability associated with a subcritical Andronov-Hopf bifurcation.  
In order to make sure the results are robust against increases in the applied current, we tried current injections ranging from 20 to 100 nA. While injecting larger currents decreased the ISI variance, it did not change the rank order of the contributions from the most important edges.}


\begin{figure}
\begin{center}
\includegraphics[scale=0.56]{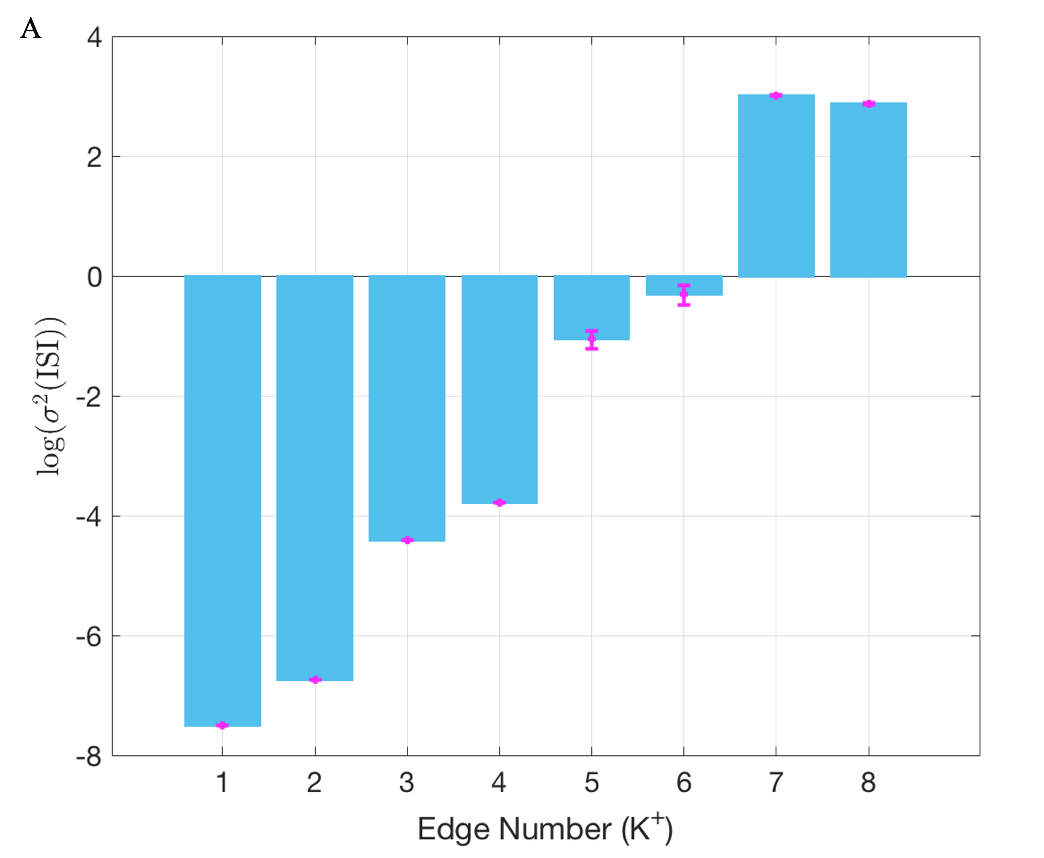}
\includegraphics[scale=0.55]{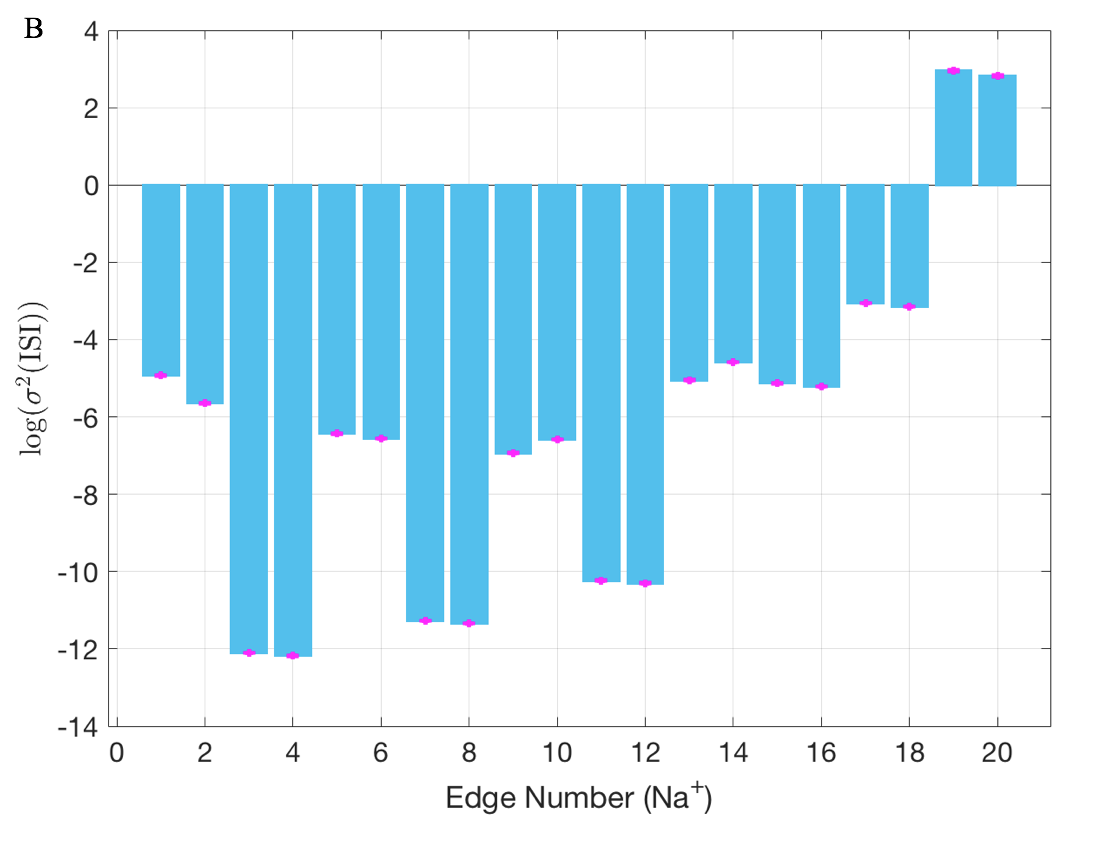}
\caption{Logarithm of variance of ISI for stochastic shielding under current clamp. Cyan bar is the mean of ISI, and magenta plots 95\% confidence interval of the mean ISI (see text for details). The applied current is 10 nA with other parameters specified in the Appendices. For the \K~ kinetics, the largest contribution edge is 7, and 8 is slightly smaller ranking the second largest. 
For the \Na~ kinetics, the largest contribution pair is 19 and 20, with 20 slightly smaller than 19. Moreover, edge 17 and 18 is the second largest pair.}\label{fig:logSSNaK}
\end{center}
\end{figure}

\section{Pathwise Equivalence for a Class of Langevin Models}
\label{sec:path_equiv}
Fox and Lu's method was widely used since its appearance (see references in \cite{Bruce2009ABE, GoldwynSheaBrown2011PLoSComputBiol, Huang2015PhBio}), and the ``best" approximation for the underlying Markov Chain (MC) model has been a subject of ongoing discussion for decades.  
Several studies \citep{MinoRubinsteinWhite2002AnnBiomedEng, Bruce2009ABE, Sengupta2010PRE} attested to discrepancies between Fox's later approach in  \citep{Fox1997BiophysicalJournal} and the discrete-state MC model, raising the question of whether Langevin approximations could  ever accurately represent the underlying fluctuations seen in the ``gold standard" MC models. 
An influential review paper  \citep{GoldwynSheaBrown2011PLoSComputBiol} found that these
discrepancies were due to the way in which noise is added to the stochastic differential equations \eqref{eq:FoxandLu_dv}-\eqref{eq:FoxandLu_dK}. 
Recent studies including \cite{Dangerfield2010PCS, Linaro2011PublicLibraryScience, GoldwynSheaBrown2011PLoSComputBiol, Goldwyn2011PRE, Dangerfield2012APS, OrioSoudry2012PLoS1, Guler2013MITPress, Huang2013APS, Pezo2014Frontiers, Huang2015PhBio, Fox2018arXiv} discussed various ways of incorporating channel noise into HH kinetics based on the original work by Fox and Lu \citep{FoxLu1994PRE, Fox1997BiophysicalJournal}, some of which have the same SDEs but with different boundary conditions. 
Different boundary conditions (BCs) are not expected to have much impact on computational efficiency. 
Indeed, if BCs are neglected, the main difference between channel-based (or conductance-based) models \rev{is} the diffusion matrix $S$ in the Langevin euqations \eqref{eq:FoxandLu_dNa} and \eqref{eq:FoxandLu_dK}. 
As the discussion about where and how to incorporate noise into the HH model framework goes on, \cite{Fox2018arXiv} recently asked whether there is a way of relating different models with different $S$ matrices.  We give a positive answer to this question below.

In \rev{\S\ref{subsec:equivalence}} we will demonstrate the equivalence (neglecting the boundary conditions) of a broad class of previously proposed channel-based Langevin models including: \citep{FoxLu1994PRE,  Dangerfield2010PCS, GoldwynSheaBrown2011PLoSComputBiol, Dangerfield2012APS, OrioSoudry2012PLoS1,  Huang2013APS, Pezo2014Frontiers, Fox2018arXiv} and the 14D Langevin HH model with 28 independent noise sources (one for each directed edge in the channel state transition graph), i.e.~our ``$14\times28$D'' Langevin model.

\subsection{When are Two Langevin Equations Equivalent?}
\label{subsec:equivalence}

Two Langevin models are pathwise equivalent if the sample paths (trajectories) of one model can be made to be identical to the sample paths of the other, under an appropriate choice of Gaussian white noise samples for each.  To make this notion precise, consider two channel-based Langevin models of the form $\mbdX=\mbf(\mbX)\,dt+G(\mbX)\,\mbdW$ with the same mean dynamics $\mbf\in\R^d$ and two different $d\times n$ matrices (possibly with different values of $n_1$ and $n_2$), $G_1$ and $G_2$. 
Denote 
\begin{eqnarray}
&& \mbf:\R^d \rightarrow \R^d, \\
&& G_1:\R^d \rightarrow \R^{d\times n_1},\\
&& G_2:\R^d \rightarrow \R^{d\times n_2}.
\end{eqnarray}
Let $\mbX(t)=[X_1(t),X_2(t),\ldots,X_d(t)]^\intercal$ and  $\mbX^*(t)=[X^*_1(t),X^*_2(t),\ldots,X^*_d(t)]^\intercal$ be  trajectories produced by the two models and let 
$\mbW(t)=[W_1(t),W_2(t),\ldots,W_{n_1}(t)]^\intercal$ and $\mbW^*(t)=[W^*_1(t),W^*_2(t),\ldots,W^*_{n_2}(t)]^\intercal$ be vectors of Weiner processes.  
That is, 
$W_i(t),\ i=1,2,\ldots,n_1$ and $W^*_j(t),\ j=1,2,\ldots,n_2$ are independent Wiener processes with $\langle W_i(s)W_j(t)\rangle=\delta_{ij}\delta(t-s)$ and $\langle W_i^*(s)W_j^*(t)\rangle=\delta_{ij}\delta(t-s)$. Note that  $n_1$ and $n_2$ need not be equal. 
As defined in \citep{AllenAllenArciniegoGreenwood2008StochAnalApp}, the stochastic differential equation (SDE) models 
\begin{equation}
  \mbdX=\mbf(t,\mbX(t))dt+G_1(t,\mbX(t))\mbdW(t) \label{SDE1}
\end{equation}
and
\begin{equation}
 \mbdX^*=\mbf(t,\mbX^*(t))dt+G_2(t,\mbX^*(t))\mbdW^*(t) \label{SDE2}
\end{equation}
are \emph{pathwise equivalent} if systems \eqref{SDE1} and \eqref{SDE2} posses the same probability distribution, and moreover, a sample path solution of one equation is also a sample solution to the other one.
\cite{AllenAllenArciniegoGreenwood2008StochAnalApp} proved a theorem giving general conditions under which the trajectories of two SDEs are equivalent. We follow their construction closely below, adapting it to the case of two different Langevin equations for the Hodgkin-Huxley system represented in a 14-dimensional state space.

As in \S\ref{sec:stochastic_14DHH}, channel-based Langevin models for the stochastic dynamics of HH can be written as 
\begin{equation}\label{eq:langevin_main}
  \mbdX=\mbf(\mbX)\,dt+\mathcal{S}(\mbX)\,\mbdW(t)  
\end{equation}
where the 14-component random vector $\mbX=(V;\mbM;\mbN)$ and $\mbf(\mbx)\,=\left[\frac{dV}{dt};\frac{d\mbM}{dt};\frac{d\mbN}{dt}\right]$ is the same as the mean-field, eqns.~\eqref{14dhh1}-\eqref{14dhh3}. 
Recall that $\mbx=[v,\mbm,\mbn]^\intercal$. Here we write
\[
\mathcal{S}(\mbx) =\left(\begin{array}{@{}c|c@{}}
  \bigzero_{1\times m} & 
  \bigzero_{1\times n}  \\
\hline
  \begin{matrix}
  S_\text{Na}(\mbm)
  \end{matrix} &\bigzero_{8\times n} \\
  \hline
  \bigzero_{5\times m} & \begin{matrix}
  S_\text{K}(\mbn)
  \end{matrix}
\end{array}\right),\quad\text{with}
\]
\begin{eqnarray}
&& S_\text{Na}:\R^8 \rightarrow \R^{8\times m},
\end{eqnarray}
for the \Na~channel, and 
\begin{eqnarray}
&& S_\text{K}:\R^5 \rightarrow \R^{5\times n},
\end{eqnarray}
for the \K~channel. 
Here, $m$ is the number of independent white noise forcing terms affecting the sodium channel variables, while $n$ is the number of independent noise sources affecting the potassium gating variables.  
We write $$\mbW(t) =[W_1(t),W_2(t), \ldots,W_{m+n}(t)]^\intercal$$ for a Wiener process incorporating both the sodium and potassium noise forcing.
Given two channel-based models with diffusion matrices
\begin{eqnarray}
&& S_\text{Na,1}:\R^8 \rightarrow \R^{8\times m_1},\\
&& S_\text{Na,2}:\R^8 \rightarrow \R^{8\times m_2},
\end{eqnarray}
for the \Na~channel, and 
\begin{eqnarray}
&& S_\text{K,1}:\R^5 \rightarrow \R^{5\times n_1},\\
&& S_\text{K,2}:\R^5 \rightarrow \R^{5\times n_2},
\end{eqnarray}
for the \K~channel, we construct  the diffusion matrix $\mathcal{D}=\mathcal{S}\mathcal{ S}^\intercal$. 
In order for the two models to generate equivalent sample paths, it suffices that they have the same diffusion matrix, i.e.
\[
\mathcal{D}=\mathcal{S}_1\mathcal{S}_1^\intercal =\left(\begin{array}{@{}c|c |c@{}}
  \bigzero_{1\times 1} & 
  \bigzero_{1\times 8} & \bigzero_{1\times 5}   \\
\hline
\bigzero_{8\times 1}&
  \begin{matrix}
  D_\text{Na}
  \end{matrix} &\bigzero_{8\times 5} \\
  \hline
  \bigzero_{5\times 1}&\bigzero_{5\times 8} & \begin{matrix}
  D_\text{K}
  \end{matrix}
\end{array}\right)=\mathcal{S}_2\mathcal{S}_2^\intercal.
\]
The SDEs corresponding to the two channel-based Langevin models are
\begin{eqnarray}
\mbdX&=&\mbf(t,\mbX(t))dt+\mathcal{S}_1(t,\mbX(t))\mbdW(t),\label{SDE3}\\
\mbdX^*&=&\mbf(t,\mbX^*(t))dt+\mathcal{S}_2(t,\mbX^*(t))\mbdW^*(t).\label{SDE4}
\end{eqnarray}

The probability density function $p(t,\mbx)$ for random variable $\mbX$ in  eqn.~\eqref{SDE3} satisfies the Fokker-Planck equation
\begin{eqnarray}
\frac{\partial p(t,\mbx)}{\partial t}&=&\frac{1}{2}\sum_{i=1}^8\sum_{j=1}^{8}\frac{\partial^2}{\partial x_ix_j}\Big[p(t,\mbx)\sum_{l=1}^{m_1+n_1}\mathcal{S}_1^{(i,l)}(t,\mbx)\mathcal{S}_1^{(j,l)}(t,\mbx)\Big] \nonumber  \\
&&-\sum_{i=1}^8\frac{\partial}{\partial x_i}\Big[\mbf_i(t,\mbx)p(t,\mbx)\Big]  \nonumber \\
&=&\frac{1}{2}\sum_{i=1}^8\sum_{j=1}^{8}\frac{\partial^2}{\partial x_ix_j}\Big[ \mathcal{D}^{(i,j)}(t,\mbx)p(t,\mbx)\Big]-\sum_{i=1}^8\frac{\partial}{\partial x_i}\Big[\mbf_i(t,\mbx)p(t,\mbx)\Big] \label{FPeqn1} 
\end{eqnarray}
where $\mathcal{S}_1^{(i,j)}(t,\mbx)$ is the $(i,j)^{th}$ entry of the diffusion matrix $\mathcal{S}_1(t,\mbx)$. Eqn.~\eqref{FPeqn1} holds because $$\mathcal{D}^{(i,j)}(t,\mbx)=\sum_{l=1}^{m_1+n_1}\mathcal{S}_1^{(i,l)}(t,\mbx)\mathcal{S}_1^{(j,l)}(t,\mbx).$$
If $\mathbf{z}_1$, $\mathbf{z}_2 \ \in \mathbb{R}^{14}$ and $\mathbf{z}_1 \leqslant \mathbf{z}_2$, then
$$P(\mathbf{z}_1 \leqslant \mbX(t)  \leqslant \mathbf{z}_2)=\int_{z_{1,14}}^{z_{2,14}}\int_{z_{1,13}}^{z_{2,13}}\cdots\int_{z_{1,1}}^{z_{2,1}}p(t,\mbx)dx_1dx_2\cdots dx_8.$$
 Note that \eqref{SDE3} and \eqref{SDE4} have the same expression \eqref{FPeqn1} for the Fokker-Planck equation, therefore, $\mbX$ and $\mbX^*$ possess the same probability density function. In other words, the probability density function of $\mbX$ in eqn.~\eqref{eq:langevin_main} is invariant for different choices of the diffusion matrix $\mathcal{S}$.

\subsection{Map Channel-based Langevin Models to Fox and Lu's Model}
\rev{We now} explicitly construct a  mapping between Fox and Lu's 14D model \citep{FoxLu1994PRE} and any channel-based model (given the same boundary conditions).
We begin with a channel-based Langevin description
\begin{equation}
\mbdX=\mbf(t,\mbX(t))dt+\mathcal{S}(t,\mbX(t))\mbdW(t), \label{eq:LA}
\end{equation}
and Fox and Lu's model
 \rev{\citep{FoxLu1994PRE}}
\begin{equation}
 \mbdX^*=\mbf(t,\mbX^*(t))dt+\mathcal{S}_0(t,\mbX^*(t))\mbdW^*(t) \label{eq:Fox},
\end{equation}
where $S$ is a $d$ by $m$ matrix satisfying $\mathcal{S}\mathcal{S}^\intercal=\mathcal{D}$ (note that $\mathcal{S}$ is not necessarily a square matrix), and $\mathcal{S}_0=\sqrt{\mathcal{D}}$.

 Let $T$ be the total simulation time of the random process in equations \eqref{eq:LA} and \eqref{eq:Fox}. 
 For $0\leqslant t\leqslant T$, denote the singular value decomposition (SVD) of $S$ as
$$\mathcal{S}(t)=P(t)\Lambda(t)Q(t)$$
 where $P(t)$ is an $d \times d$ orthogonal matrix (i.e., $P^\intercal P=PP^\intercal=I_d$)  and $Q(t)$ is an $m\times m$ orthogonal matrix, and $\Lambda(t)$ is a $d\times m$ matrix with $\text{rank}(\Lambda)= r\leqslant d$ positive diagonal entries and $d-r$ zero diagonal entries.
 
 First, we prove that given a Wiener trajectory, $\mbW(t),\ t\in[0,T]$ and the solution to eqn.~\eqref{eq:LA},  $\mbX(t)$, there exists a Wiener trajectory $\mbW^*(t)$ such that the solution to eqn.~\eqref{eq:Fox}, $\mbX^*$, is also a solution to  eqn.~\eqref{eq:LA}. 
 In other words, for a Wiener process $\mbW(t)$ we can construct a $\mbW^*(t)$, such that $\mbX^*(t)=\mbX(t)$, for $0\le t\le T$.
 
 Following \citep{AllenAllenArciniegoGreenwood2008StochAnalApp}, we construct the vector $\mbW^*(t)$ of $d$ independent Wiener processes as follows:
 \begin{equation}\label{eq:14D_Fox}
     \mbW^*(t)=\int_0^tP(s)\Big[\big(\Lambda(s)\Lambda^\intercal(s)\big)^{\frac{1}{2}}\Big]^+\Lambda(s)Q(s)\mbdW(s)+\int_0^tP(s)\mbdW^{**}(s)
 \end{equation}
for $0\leqslant t \leqslant  T$, where $\mbW^{**}(t)$ is a vector of length $d$ with the first $r$ entries equal to 0 and the next $d-r$ entries independent Wiener processes, and $\Big[\big(\Lambda(s)\Lambda^\intercal(s)\big)^{\frac{1}{2}}\Big]^+$ is the pseudoinverse of $\big(\Lambda(s)\Lambda^\intercal(s)\big)^{\frac{1}{2}}$. Consider that
\begin{eqnarray}
\mathcal{D}(t)&=&\mathcal{S}(t)\mathcal{S}^\intercal(t)=P(t)\Lambda(t)Q(t)\Big[P(t)\Lambda(t)Q(t)\Big]^\intercal \\
 & = &  P(t)\Lambda(t)\Lambda^\intercal(t)P^\intercal(t) \\
 &=& [\mathcal{S}_0(t)]^2,
\end{eqnarray}
where $\mathcal{S}_0(t)=P(t)\Big(\Lambda(t)\Lambda^\intercal(t)\Big)^{\frac{1}{2}}P^\intercal(t)$ is a square root of $\mathcal{D}$, by construction.

The diffusion term on the right side of \eqref{eq:Fox} with $\mbX^*(t)$ replaced by $\mbX(t)$ satisfies
\begin{align}
\mathcal{S}_0(t,\mbX(t))&\mbdW^*(t) \nonumber \\
=&\mathcal{S}_0(t)\Big(P(t)\Big[\big(\Lambda(t)\Lambda^\intercal(t)\big)^{\frac{1}{2}}\Big]^+\Lambda(t)Q(t)\mbdW(t)+P(t)\mbdW^{**}(t)\Big) \nonumber \\
  = & P(t)\Big(\Lambda(t)\Lambda^\intercal(t)\Big)^{\frac{1}{2}}P^\intercal(t)P(t)\Big[\big(\Lambda(t)\Lambda^\intercal(t)\big)^{\frac{1}{2}}\Big]^+\Lambda(t)Q(t)\mbdW(t) \nonumber\\
& +P(t)\Big(\Lambda(t)\Lambda^\intercal(t)\Big)^{\frac{1}{2}}P^\intercal(t)P(t)\mbdW^{**}(t) \nonumber\\
 =& \left\{P(t)\Lambda(t)Q(t)\right\}\mbdW(t).
 \label{eq:S0dW}
 \end{align}
 From the SVD of $\mathcal{S}$=P$\Lambda$Q, we conclude that
 \begin{equation}
\mathcal{S}_0(t,\mbX(t))\mbdW^*(t) =\mathcal{S}(t,\mbX(t))\mbdW(t).
\end{equation}
Hence, $\mbdX=\mbf(t,\mbX(t))dt+\mathcal{S}_0(t,\mbX(t))\mbdW^*_t$, i.e., $\mbX(t)$ is a sample path solution of equation \eqref{eq:Fox}. 

Similarly, given a Wiener trajectory $\mbW^*(t)$ and the solution to eqn.~\eqref{eq:Fox} $\mbX^*(t)$, we can construct a vector $\mbW(t)$ of $m$ independent Winner processes as 
\begin{equation}\label{eq:Fox_14D}
    \mbW(t)=\int_0^t Q^\intercal(s)\Lambda^+(s)\big[\Lambda(s)\Lambda^\intercal(s)\big]^{1/2}P^\intercal(s)\mbdW^*(s)+\int_0^t Q^\intercal(s)\mbdW^{***}(s)
\end{equation}
for $0\leqslant t \leqslant  T$, where $\mbW^{***}(t)$ is a vector of length $m$ with the first $r$ entries equal to 0 and the next $m-r$ entries independent Wiener processes, and $\Lambda^+(s)$ is the pseudoinverse of $\Lambda(s)$.
Then, by an argument parallel  to \eqref{eq:S0dW}, we conclude that 
\begin{equation}
  \mathcal{S}(t,\mbX^*(t))dW(t) =\mathcal{S}_0(t,\mbX^*(t))\mbdW^*(t).  
\end{equation}
Hence, $\mbdX^*=\mbf(t,\mbX^*(t))dt+\mathcal{S}(t,\mbX^*(t))\mbdW(t)$, that is, $\mbX^*(t)$ is also a solution to  \eqref{eq:LA}. Therefore we can conclude that the channel-based Langevin model in eqn.~\eqref{eq:LA} is pathwise equivalent to the Fox and Lu's model.

To illustrate pathwise equivalence, Fig.~\ref{fig:pwc} plots trajectories of the $14\times28$D stochastic HH model and Fox and Lu's model, using noise traces dictated by the preceding construction. In panel A, we generated a sample path for eqn.~\eqref{eq:LA} and plot three variables in $\mbX$: the voltage $V$, \Na~channel open probability $M_{31}$ and \K~channel open probability $N_4$.
The corresponding trajectory, $\mbX^*$, for  Fox and Lu's model was generated from eqn.~\eqref{eq:Fox} and the corresponding Wiener trajectory was calculated using eqn.~\eqref{eq:14D_Fox}. 
The top three subplots in panel A superposed the voltage $V^*$, \Na~channel open probability $M_{31}^*$ and \K~channel open probability $N_4^*$ in $\mbX^*$ against those in $\mbX$.
The bottom three subplots in panel A plot the point-wise differences of each variable.
Eqns.~\eqref{eq:LA} and \eqref{eq:Fox} are numerically solved in Matlab using the Euler-Maruyama method with a time step $dt=0.001$ms.
The slight differences observed arise in part due to numerical errors in calculating the singular value decomposition of $\mathcal{S}$ (in eqn.~\eqref{eq:LA}); another source of error is the finite accuracy of the Euler-Maruyama method.\footnote{The forward Euler method is first order accurate for ordinary differential equations, but the forward Euler-Maruyama method is only $O(\sqrt{dt})$ accurate for stochastic differential equations \citep{Kloeden-Platen-big}.}  
As shown in Fig.~\ref{fig:pwc}, most differences occur near the spiking region, where the system is numerically very stiff and the numerical accuracy of the SDE solver accounts for most of the discrepancies (analysis of which is beyond the scope of this paper). We can conclude from the comparison in Fig.~\ref{fig:pwc} that the $14\times28$D Langevin model is pathwise equivalent with the Fox and Lu's model. Similarly, the same analogy applies for other channel-based Langevin models such that with the same diffusion matrix $\mathcal{D}(\mbX)$. 

\begin{figure}
\begin{center}
\hspace*{-0.8cm}
\includegraphics[scale=0.77]{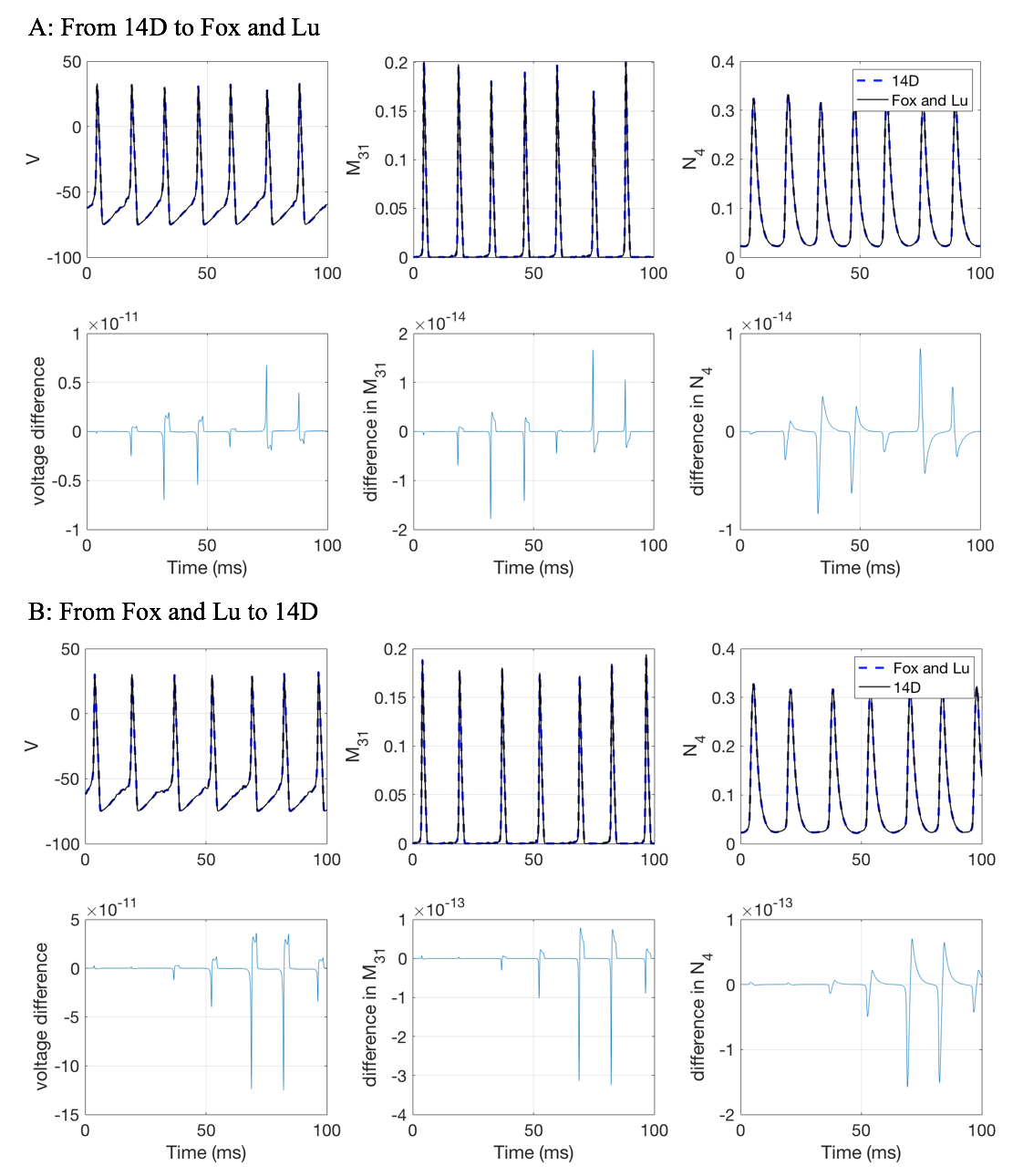}
\vspace*{-0.3cm}
\caption{{Pathwise equivalency of 14D HH model and Fox and Lu's model. \textbf{A:} Given a sample path of the $14\times28$D Langevin model in eqn.~\eqref{eq:LA}, we construct the noise by eqn.~\eqref{eq:14D_Fox} and generate the sample trajectory of Fox and Lu's model using eqn.~\eqref{eq:Fox}. \textbf{B:} Given a sample path of Fox and Lu's model in eqn.~\ref{eq:Fox}, we construct the noise by eqn.~\eqref{eq:Fox_14D} and generate the sample trajectory of the $14\times28$D Langevin model using eqn.~\eqref{eq:LA}. For both cases, we plot the voltage $V$, the open probability of the \Na~channel ($M_{31}$), and the open probability of the \K~channel ($N_4$). To show that these trajectories are pathwise equivalent, we superpose the trajectories for each variable and also plot the point-wise differences of each.  We obtain excellent agreement in both directions.}}
\label{fig:pwc}
\end{center}
\end{figure}

We have shown that our ``$14\times28$D" model, with a 14-dimensional state space and 28 independent noise sources (one for each directed edge) is pathwise equivalent to Fox and Lu's original 1994 model as well as other channel-based models (under corresponding boundary conditions) including \citep{Goldwyn2011PRE,GoldwynSheaBrown2011PLoSComputBiol,OrioSoudry2012PLoS1,Pezo2014Frontiers,Fox2018arXiv}.  
As we shall see in \S\ref{sec:modelcomp}, the pathwise equivalent models give statistically indistinguishable interspike interval distributions under the same BCs.  
We emphasize the importance of boundary conditions for pathwise equivalence.  
Two simulation algorithms with the same $A_i$ and $S_i$ matrices will generally have \emph{nonequivalent} trajectories if different boundary conditions are imposed.  
For example, \citep{Dangerfield2012APS} employs the same dynamics as \citep{OrioSoudry2012PLoS1} away from the boundary, where ion channel state occupancy approaches zero or one.  
But where the latter allow trajectories to move freely across this boundary (which leads only to small, short-lived excursions into ``nonphysical" values), Dangerfield imposes reflecting boundary conditions through a projection step at the boundary.
As we will see below (\S\ref{sec:modelcomp}), this difference in boundary conditions leads to a statistically significant difference in the ISI distribution, as well as a loss of accuracy when compared with the ``gold standard'' Markov chain simulation.

\section{Model Comparison}\label{sec:modelcomp}

In \S \ref{sec:stochastic_14DHH}, we studied the contribution of every directed edge to the ISI variability and proposed how stochastic shielding could be applied under current clamp. 
Moreover, in \S \ref{sec:path_equiv}, we proved that a family of Langevin models are pathwise equivalent.

Here we compare the accuracy and computational efficiency of several models, including the ``subunit model'' \citep{Fox1997BiophysicalJournal,GoldwynSheaBrown2011PLoSComputBiol}, Langevin models with different $S$ matrices or boudary conditions \citep{FoxLu1994PRE, GoldwynSheaBrown2011PLoSComputBiol,Dangerfield2012APS,OrioSoudry2012PLoS1,Pezo2014Frontiers}, the 14D HH model (proposed in \S \ref{subsec:14DHH}), the 14D stochastic shielding model with six independent noise sources (proposed in \S \ref{subsec:SS}), and the ``gold standard" Markov Chain model \rev{(discussed in \S\ref{subsec:RTC})}.
Where other studies have compared \rev{moment statistics} such as the mean firing frequency (under current clamp) and stationary channel open probababilies (under voltage clamp), we base our comparison on the entire interspike interval (ISI) distributions, under current clamp with a common fixed driving current.  We use two different comparisons of ISI distributions, the first based on the $L_1$ norm of the difference between two distributions (the Wasserstein distance, \citep{Wasserstein1969}), in \S\ref{subsec:ISIs} and the second based on the $L_\infty$ norm (the Kolmogorov-Smirnov test, \citep{Kolmogorov1933IIAG, Smirnov1948AMS}), in \S\ref{subsec:KS2}.  
We find similar results using both measures: as expected, the models that produce pathwise equivalent trajectories (Fox \& Lu '94, Orio \& Soudry, and our $14\times28$D model) have indistinguishable ISI statistics, while the non-equivalent models (Fox '97, Dangerfield, Goldwyn \& Shea-Brown, our \rev{$14\times6$D} stochastic-shielding model) have significantly different ISI distributions.  Of these, the \rev{$14\times6$D} SS model is the closest to the models in the $14\times28$D class, and as fast as any other model considered.


\subsection{\texorpdfstring{$\bf{L_1}$}{} Norm Difference of ISIs}\label{subsec:ISIs}


We first evaluate the accuracy of different stochastic simulation algorithms by comparing their ISI distributions under current clamp to that produced by a reference algorithm, namely the discrete-state Markov Chain (MC) algorithm.

Let $X_1, X_2, \ldots, X_n$ be $n$ independent samples of ISIs with a true cumulative distribution function $F$.
Let $F_n(\cdot)$ denote the corresponding empirical cumulative distribution function (ECDF) defined by
\begin{equation} \label{eq:ECDF}
 F_{n}(x)={\frac  1n}\sum _{{i=1}}^{n}{\mathbf  {1}}_{{\{X_{i}\leq x\}}},\qquad x\in {\mathbb  {R}},
\end{equation}
where we write $\mathbf{1}_A$ to denote the indicator function for the set $A$.   
Let $Q$ and $Q^M$ be the quantile functions of $F$ and $F^M$, respectively. The $L_1$-Wasserstein distance between two CDF's $F^M$ and $F$ can be written as \citep{Shorack2009SIAM} (page 64)
\begin{equation}
    \label{eq:L1Was}
    \rho_1(F,F^M)=\int_0^\infty \left|F(x)-F^M(x)\right|\,dx=\int_0^1 \left|Q(x)-Q^M(x)\right|\,dx.
\end{equation}
\rev{Note that $\rho_1$ has the same units as ``$dx$''.  Thus the $L_1$ distances reported in Tab.~\ref{tab:L1ISIs} have units of milliseconds.}

When two models have the same number of samples, $n$,  \eqref{eq:L1Was} can be estimated by
\begin{equation}
    \label{eq:L1Was_estimate}
   \int_0^1 \left|Q(x)-Q^M(x)\right|\,dx \approx \frac{1}{n} \sum_{i=1}^n \left|X_i-Y_i\right|:=\rho_1(F_n,F_n^M),
\end{equation}
where $X_1,\cdots,X_n$ and $Y_1,\cdots, Y_n$ are $n$ independent samples sorted in ascending order with CDF $F$ and $F^M$, respectively.

We numerically calculate $\rho_1(F_n,F_n^M)$ to compare several Langevin models against the MC model.
We consider the following models: ``Fox94" denotes the original model proposed by \citep{FoxLu1994PRE}, which requires a square root decomposition ($S=\sqrt{D}$) for each step in the simulation, see equations \eqref{eq:FoxandLu_dv}-\eqref{eq:FoxandLu_dK}. 
``Fox97" is the widely used ``subunit model" of \cite{Fox1997BiophysicalJournal}, see  equations \eqref{eq:Fox_dv}-\eqref{eq:Fox_dx}.
``Goldwyn" denotes the method taken from \citep{GoldwynSheaBrown2011PLoSComputBiol}, where they restrict the 14D system ($V$, 5 \K~gates and 8 \Na~gates) to the 4D multinomial submanifold ($V,\ m,\ n,\ \text{and}\ h$, see p.~\pageref{page:multinomial_mentioned_first} above), with gating variables truncated to $[0,1]$. 
We write ``Orio"  for the model proposed by \citep{OrioSoudry2012PLoS1}, where they constructed a rectangular matrix $S$ such that $SS^\intercal=D$ 
(referred to as $S_\text{paired}$ in Tab.~\ref{tab:L1ISIs})
combining fluctuations driven by pairs of  reciprocal edges, thereby avoiding taking matrix square roots at each time step. 
The model ``Dangerfield" represents \citep{Dangerfield2012APS}, which used the same $S$ matrix as in \citep{OrioSoudry2012PLoS1} but added a reflecting (no-flux) boundary condition via orthogonal projection
(referred to as  $S_{\text{EF}}$ in Tab.~\ref{tab:L1ISIs}).
Finally, we include the $14\times28$D model we proposed in \S \ref{subsec:14DHH}, or ``14D" (referred to as $S_\text{single}$ in Tab.~\ref{tab:L1ISIs});  ``SS" is the stochastic shielding model specified in \S \ref{subsec:SS}.

For each model, we ran 10,000 independent samples of the simulation, holding channel number, injected current \rev{($I_\text{app}=10$ nA)}, and initial conditions fixed.
\rev{Throughout the paper, we presume a fixed channel density of 60 $\text{channels}/\mu \text{m}^2$ for sodium and 18 $\text{channels}/\mu \text{m}^2$  for potassium in a membrane patch of area $100\,\mu\text{m}^2$, consistent with prior work such as \cite{GoldwynSheaBrown2011PLoSComputBiol, OrioSoudry2012PLoS1}.
The initial condition is taken to be the point on the deterministic limit cycle at which the voltage crosses upwards through $-60$ mV. 
An initial transient corresponding to 10-15 ISIs is discarded, to remove the effects of the initial condition.
See Tab.~\ref{tab:parameters} in Appendix \ref{axppend_alpha_beta} for a complete specification of simulation parameters.} 
We compared the efficiency and accuracy of each model through the following steps: 
\begin{enumerate}
    \item For each model, a single run simulates a total time of 84000 milliseconds (ms) with time step  0.008 ms, recording at least 5000 ISIs. 
    \label{sim:step}
    \item  For each model, repeat 10,000 runs in step one. 
    \item \rev{Create a reference ISI distribution by aggregating all 10,000 runs of the MC model, i.e.~based on roughtly $5\times 10^7$ ISIs.}
    \item \rev{For each of $10^4$ individual runs, align} all ISI data into a single vector and calculate the ECDF using equation \eqref{eq:ECDF}.
    \item Compare the ISI distribution of each model with the \rev{reference MC distribution}  by calculating the $L_1$-difference of the ECDFs using equation \eqref{eq:L1Was_estimate}.
    \item To compare the computational efficiency, we take the average execution time of the MC model as the reference. The relative computational efficiency is the ratio of the average execution time of a model with that of the MC model (c.~3790 sec.). 
\end{enumerate}

\begin{tiny}
\begin{table}[htbp]\centering
   \begin{tabular}{lcrccr} 
   Model  & Variables & $S$ Matrix&Noise Dim. & $L_1$ Norm (msec.) & Runtime \\
   &V+M+N & & Na+K & (Wasserstein Dist.) & (sec.)\\
   \hline
   MC & 1+8+5 & n/a & 20+8 & $2.27\,\text{e-}4\pm7.15\,\text{e-}5$ & 3790 \\ 
   Fox94 & 1+7+4 & $S=\sqrt{D}$ & 7+4 & $4.74\,\text{e-}2\pm1.93\,\text{e-}4$ & 2436 \\ 
   Fox97 & 1+2+1 & n/a & 3 & $8.01\,\text{e-}1\pm 9.48\,\text{e-}4$ & 67 \\ 
    Dangerfield& 1+8+5 & $S_\text{EF}$ & 10+4 & $2.18\,\text{e-}1\pm2.14\,\text{e-}4$ & 655 \\ 
    Goldwyn & 1+8+5 & $S=\sqrt{D}$ & 8+5 & $1.83\,\text{e-}1\pm1.93\,\text{e-}4$ & 2363 \\ 
   Orio & 1+8+5 & $S_{\text{paired}}$ & 10+4 & $4.52\,\text{e-}2\pm2.08\,\text{e-}4$ & 577 \\ 
     $14\times28$D & 1+8+5 & $S_{\text{single}}$ & 20+8 & $4.93\,\text{e-}2\pm1.94\,\text{e-}4$ & 605 \\ 
     SS & 1+8+5 & $S_{\text{ss}}$ & 4+2 & $7.62\,\text{e-}2\pm7.57\,\text{e-}5$ & 73 \\ 
  \hline 
   \end{tabular}
   \caption{Summary of the $L_1$-Wasserstein distances of ISI distributions for Langevin type Hodgkin-Huxley models compared to the MC model.
   Model (see text for details): MC: Markov-chain. Fox1994: model from \cite{FoxLu1994PRE}. Fox97:  \cite{Fox1997BiophysicalJournal}.  Goldwyn: \cite{GoldwynSheaBrown2011PLoSComputBiol}. Dangerfield:  \cite{Dangerfield2010PCS, Dangerfield2012APS}. $14\times28$D:  model proposed in \S \ref{subsec:14DHH}. SS:  stochastic-shielding model (\S \ref{subsec:SS}).
   Variables: number of degrees of freedom in Langevin equation representing voltage, sodium gates, and potassium gates, respectively. 
   $S$ Matrix: Form of the noise coefficient matrix in equations \eqref{eq:FoxandLu_dv}-\eqref{eq:FoxandLu_dK}.  Noise Dimensions: number of independent Gaussian white-noise sources represented for sodium and potassium, respectively. $L_1$ Norm: Empirically estimated $L_1$-Wasserstein distance between the model's ISI distribution and the MC model's ISI distribution. For MC-vs-MC, independent trials were compared.  $a\pm b$: mean$\pm$standard deviation.  Runtime (in sec.): see text for details. n/a: not applicable.  }
   \label{tab:L1ISIs}
\end{table}
\end{tiny}
Table \ref{tab:L1ISIs} gives the empiricially measured $L_1$ difference in ISI distribution between several pairs of models.\footnote{Runtimes in Tab.~\ref{tab:L1ISIs}, rounded to the nearest integer number of seconds, were obtained by averaging the runtimes on a distribution of heterogeneous \rev{compute nodes} from  Case Western Reserve University's  high-performance computing cluster.} 
\rev{The first row (``MC") gives the average $L_1$ distance between individual MC simulations and the  reference distribution generated by aggregating all MC simulations, in order to give an estimate of the intrinsic variability of the measure.}  
Figure \ref{fig:L1ISI} plots the $L_1$-Wasserstein differences versus the relative computational efficiency of several models against the MC model.
These results suggest that the Fox94, Orio, and $14\times28$D models are statistically indistinguishable, when compared with the MC model using the $L_1$-Wasserstein distance.
This result is expected in light of our results (\S \ref{sec:path_equiv}) showing that these three models are pathwise-equivalent.
(We will make pariwise statistical comparisons between the ISI distributions of each model in \S \ref{subsec:KS2}.)
Among these equivalent models, however, the $14\times28$D and Orio models are significantly faster than the original Fox94 model (and the Goldwyn model) because they avoid the matrix square root computation.
The Dangerfield model has speed similar to the $14\times28$D model, but the use of reflecting boundary conditions introduces significant inaccuracy in the ISI distribution.
The imposition of truncating boundary conditions in the Goldwyn model also appears to affect the ISI distribution.  

\begin{wrapfigure}{r}{0.55\textwidth}
\includegraphics[width=0.9\linewidth]{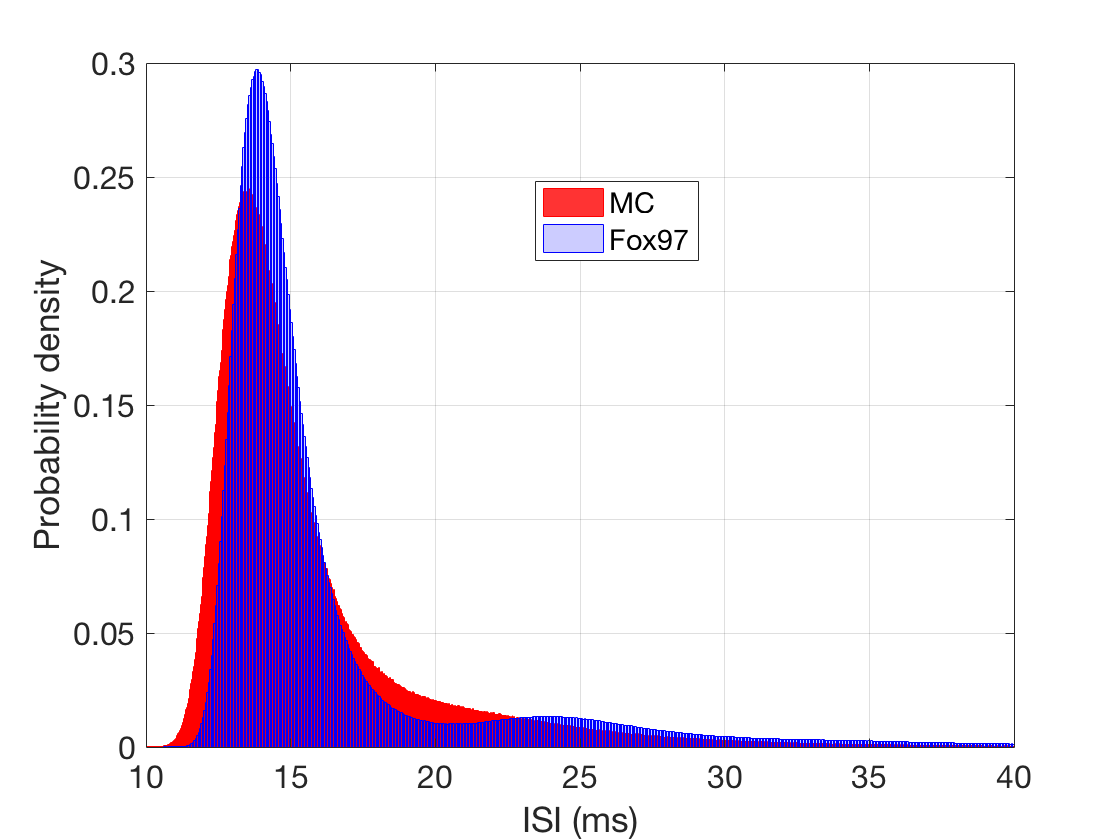} 
\caption{The probability density of interspike intervals (ISIs) for Fox97 (blue) and the MC model (red). The probability densities were calculated over more than $5.4\times 10^7$ ISIs.}\label{fig:Fox97vsMC}
\end{wrapfigure}
Of the models considered, the Fox97 subunit model is the fastest, however it makes a particularly poor approximation to the ISI distribution of the MC model. Note that the maximum $L_1$-Wasserstein distance between two distributions is 2.
The ISI distribution of Fox97 subunit model to that of the MC model is more than 0.8, which is ten times larger than the $L_1$-Wasserstein distance of the SS model, and almost half of the maximum distance. 
As shown in Fig.~\ref{fig:Fox97vsMC}, the Fox97 subunit model fails to achieve the spike firing threshold and produces longer ISIs.
Because of its inaccuracy, we do not include the subunit model in our remaining comparisons. 
The stochastic shielding model, on the other hand, has nearly the same speed \rev{as the Fox97 model}, but is over 100 times more accurate (in the $L_1$ sense).
The SS model is an order of magnitude faster than the $14\times28$D model, and has less than twice the $L_1$ discrepancy versus the MC model \rev{($L_1$ norm 76.2 versus 49.3 microseconds).} 
While this difference in accuracy is statistically significant, it may not be practically significant, depending on the application \rev{(see \S\ref{sec:discussion} for further discussion of this point).}

\begin{figure}
\begin{center}
\includegraphics[scale=0.3]{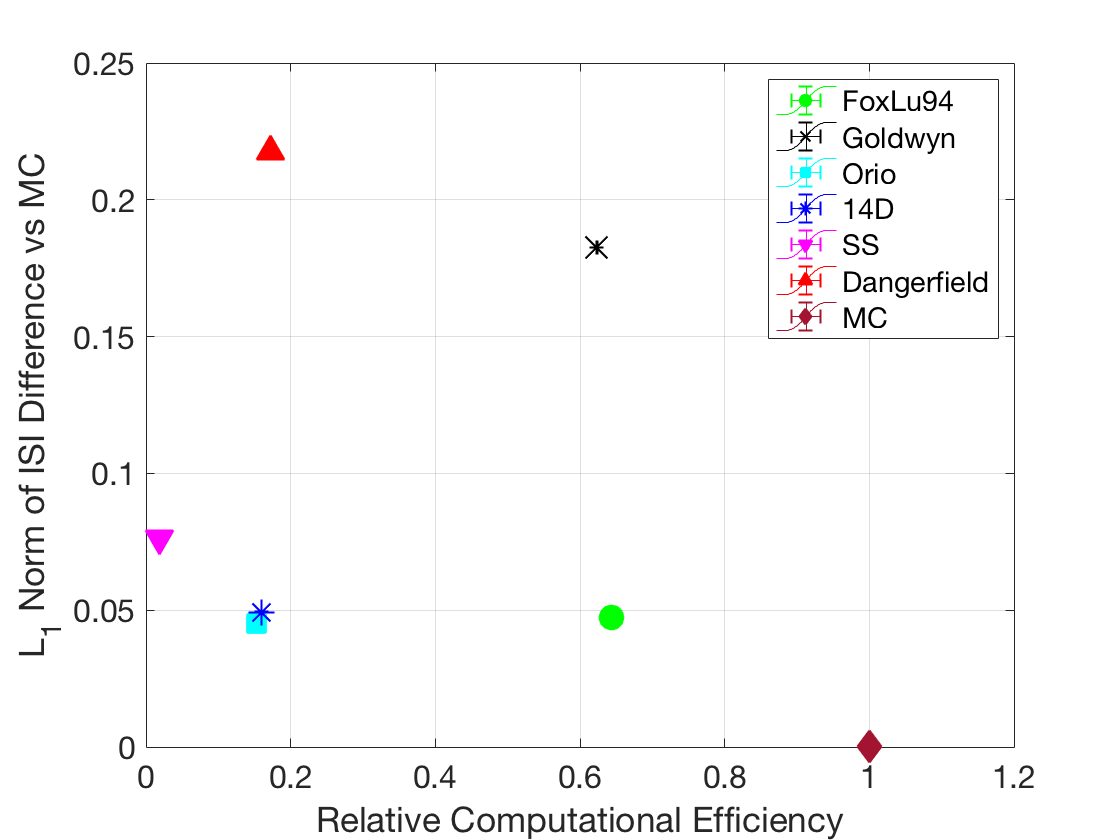}
\caption{The $L_1$-Wasserstein distances and relative computational efficiency vs. the MC model.
``Fox94" (green circle), ``Goldwyn" (black cross), ``Orio" (cyan square),  ``14D" (blue star), ``SS" (magenta downward pointing triangle), ``Dangerfield" (red upward pointing triangle), and the ``MC" (brown diamond) model.
The $L_1$ error for ISI distribution was computed using the $L_1$-Wasserstein distance \eqref{eq:L1Was_estimate}, with  discrete time Gillespie/Monte-Carlo simulations as
a reference.
The relative computational efficiency is the ratio of the recorded run time to the mean recorded time of the MC mode (3790 seconds).
The mean and 95\% confidence intervals were calculated using 100 repetitions of $10,000$ runs each ($5\times 10^9$ ISIs total). 
} 
\label{fig:L1ISI}
\end{center}
\end{figure}

\subsection{Two-sample Kolmogorov-Smirnov Test}\label{subsec:KS2}
In addition to using the $L_1$-Wasserstein distances to test the differences between two CDFs, we can also make a pairwise comparison between each model by applying the Dvoretzky-Kiefer-Wolfowitz inequality \citep{Dvoretzky1956AMS} and the two-sample Kolmogorov-Smirnov (KS) test \citep{Kolmogorov1933IIAG, Smirnov1948AMS}.  While the Wasserstein distance is based on the $L_1$ norm, the KS statistic is based on the $L_\infty$ (or supremum) norm.

The Dvoretzky-Kiefer-Wolfowitz inequality \citep{Dvoretzky1956AMS} 
establishes confidence bounds for the CDF. Specifically, the interval that contains the true CDF, $F(\cdot)$, with probability $1-\alpha$, is given by
\begin{equation} \label{eq:DKW}
    |F_{n}(x)-F(x)| \leq \varepsilon \;{\text{ where }}\varepsilon ={\sqrt {\frac {\ln {\frac {2}{\alpha }}}{2n}}}.
\end{equation}
When comparing samples $X^M_1,X^M_2,\ldots,X^M_n$ obtained from an approximate model $M$ against the gold standard, in \S\ref{subsec:ISIs} we computed the $L_1$ difference of the empirical density functions, as an approximation for the $L_1$ difference of the true distributions.
\rev{Instead,} we work \rev{here} with the $L_\infty$ norm,
\begin{equation}
\rho_\infty(F_n,F_n^M)=\lim_{p\to\infty} \left(\int_0^\infty\left|F_n^M(x)-F_n(x)\right|^p\,dx\right)^{1/p}=\sup_{0\le x < \infty}\left(\left|F_n^M(x)-F_n(x)\right|\right).
\end{equation}

For each $x\ge 0$,  equation \eqref{eq:DKW} bounds the discrepancy between the true and empirical distribution differences as follows.  
By the triangle inequality, and independence of the $X_i$ from the $X^M_i$, the inequality 
\begin{eqnarray}
\lvert F^M-F \rvert &=& \lvert F^M-F^M_n+F_n-F+F^M_n-F_n \rvert\nonumber \\
&\leq& \lvert F^M-F_n^M \rvert +\lvert F_n-F \rvert +\lvert F_n^M-F_n \rvert \nonumber\\
&\leq& 2\varepsilon +\lvert F_n^M-F_n \rvert, \label{eq:FF1}
\end{eqnarray}
holds with probability $(1-\alpha)^2$.  
Similarly,
\begin{eqnarray}\nonumber
\lvert F_n^M-F_n \rvert &=& \lvert F_n^M-F^M_n+F-F_n+F^M-F \rvert \\ \nonumber
&\leq& \lvert F^M-F_n^M \rvert +\lvert F_n-F \rvert +\lvert F^M-F \rvert \\
&\leq& 2\varepsilon +\lvert F^M-F \rvert \label{eq:FF2}
\end{eqnarray}
also holds 
with probability $(1-\alpha)^2$.
Together, \eqref{eq:FF1}-\eqref{eq:FF2} indicate that the discrepancy between the difference of empirical distributions and the difference of true distributions is bounded as 
\begin{equation}
  \Big\lvert  \lvert F^M-F \rvert-\lvert F_n^M-F_n \rvert \Big\lvert \leq 2\varepsilon
\end{equation}
with probability $(1-\alpha)^2$, for $\varepsilon ={\sqrt {\frac {\ln {\frac {2}{\alpha }}}{2n}}}$. 

We will use
the pointwise difference of the ECDF's for a large sample as an
estimate for the pointwise  difference between two true CDFs.
%
The  two-sample Kolmogorov-Smirnov (KS) test \citep{Kolmogorov1933IIAG, Smirnov1948AMS} offers a statistics to test whether two samples are from the same distribution. The two-sample KS \rev{statistic} is 
\begin{equation}\label{eq:KS2_D}
    D_{n,m}=\sup _{x}|F_{1,n}(x)-F_{2,m}(x)|,
\end{equation}
where $F_{1,n}$ and $F_{2,m}$ are two ECDFs for two samples defined in  \eqref{eq:ECDF}, and the $\sup$ is the supremum function. 
The reference statistic, $R_{n,m}(\alpha)$, depending on the significance level $\alpha$, is defined as 
\begin{equation}\label{eq:KS2_R}
   R_{n,m}(\alpha)=\sqrt{\frac{-\log(\alpha/2)}{2}}\sqrt{\frac{n+m}{nm}},
\end{equation}
where $n$ and $m$ are the sample sizes.
The null hypothesis that ``the two samples  come from the same distribution" is rejected at the significance level $\alpha$ if 
\begin{equation}\label{eq:KS2_comp}
    D_{n,m}>R_{n,m}(\alpha).
\end{equation}

\begin{figure}
\begin{center}
\hspace{-0.9cm}
\includegraphics[height=5.3cm,width=9.0cm]{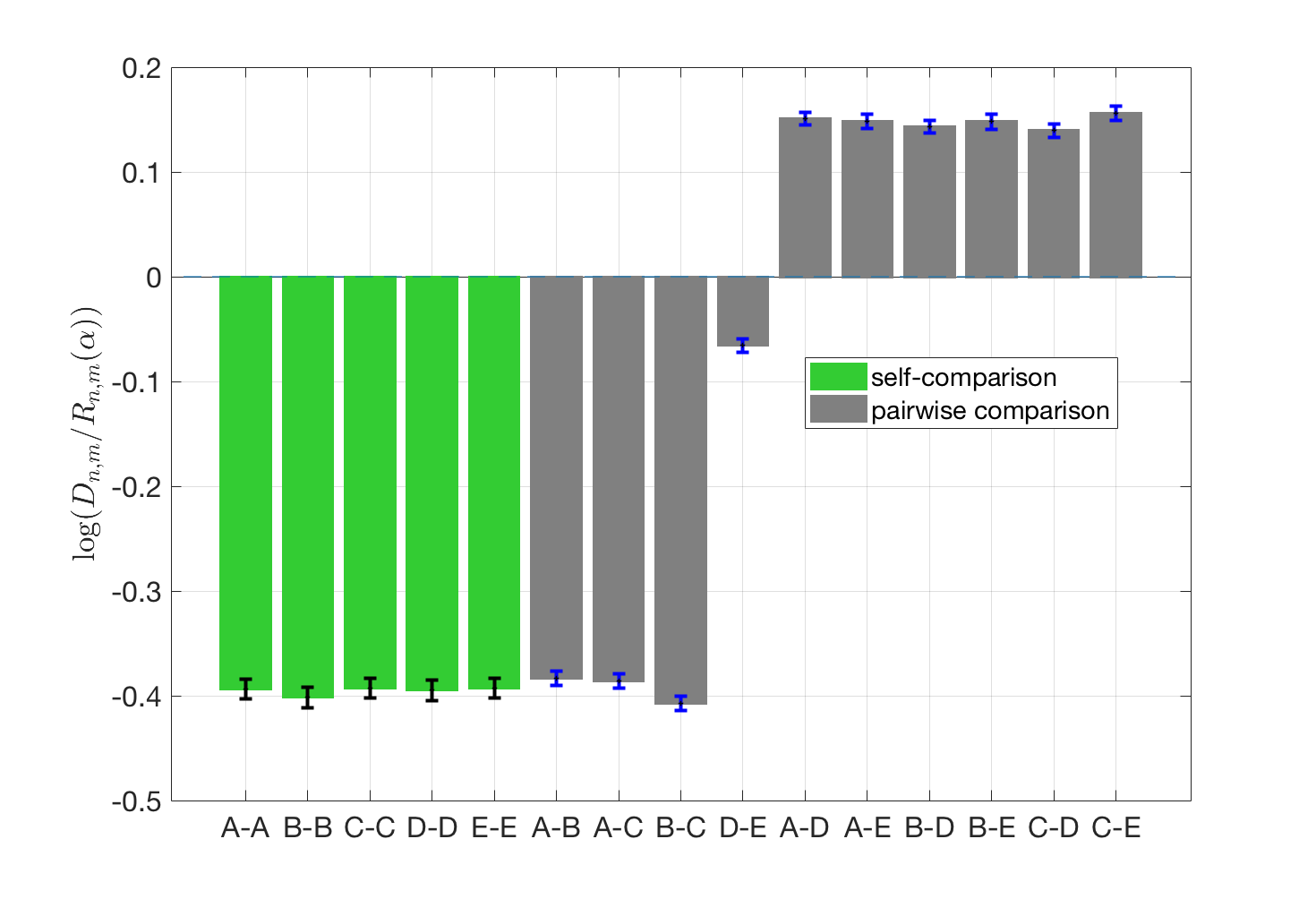}
\hspace{-0.5cm}
\includegraphics[height=5cm,width=6.6cm]{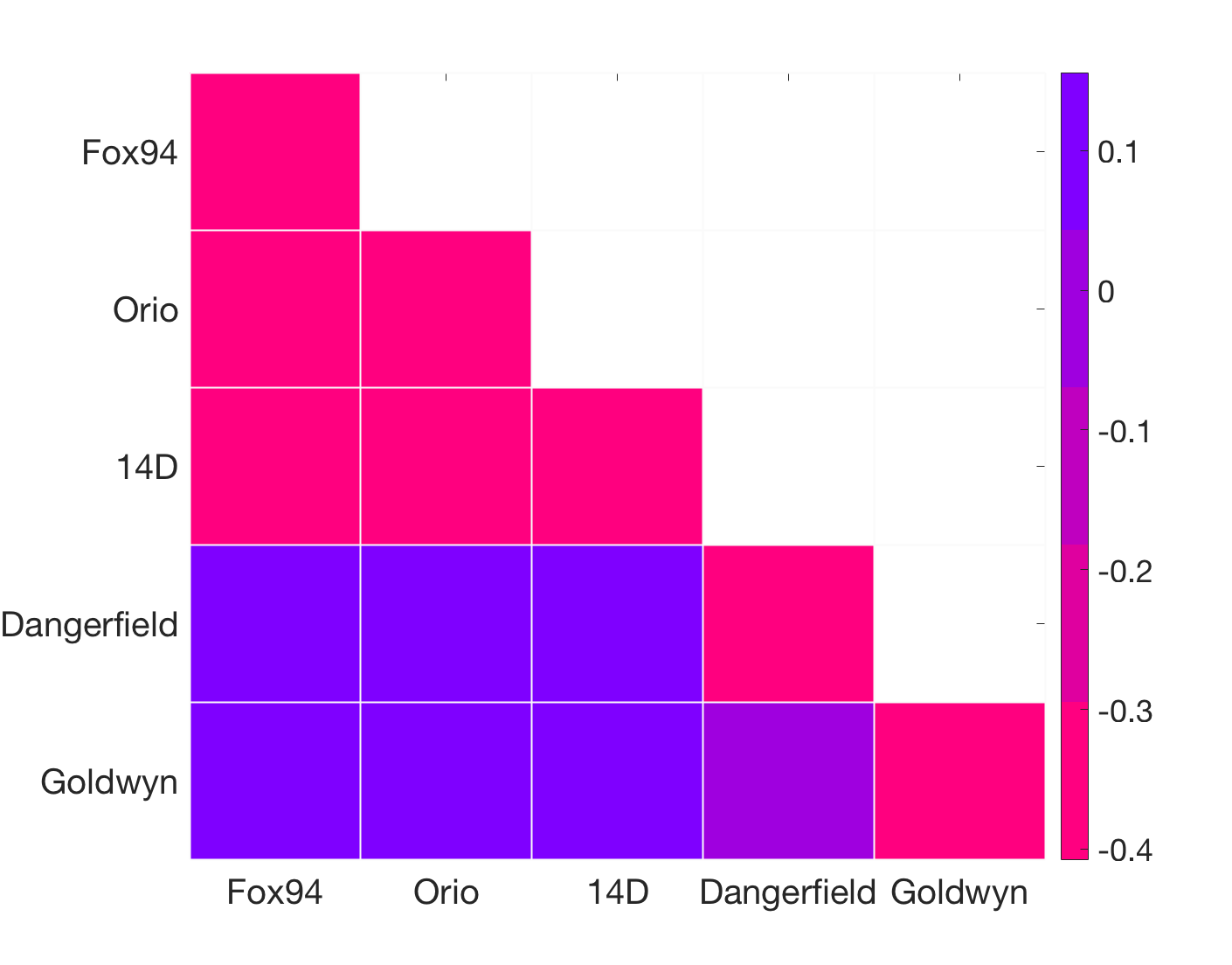}
\caption{Logarithm of the ratio of Kolmogorov-Smirnov test statistic $D_{n,m}/R_{n,m}(\alpha)$, eqns.~\eqref{eq:KS2_D}-\eqref{eq:KS2_R}, for samples from the ISI distribution for each pair of models.
\textbf{Left:} 
Box and whisker plots showing mean and 95\% confidence intervals based on 10,000 pairwise comparisons.  
The first five plots show self-comparisons (green bars); the remainder compare distinct pairs (grey bars). 
A:``Fox94" \citep{FoxLu1994PRE},
B:``Orio" \citep{OrioSoudry2012PLoS1},
C: ``14D" (the $14\times28$D model we proposed in \S \ref{subsec:14DHH}),
D: Dangerfield \citep{Dangerfield2012APS},
E: Goldwyn \citep{GoldwynSheaBrown2011PLoSComputBiol}. 
The models of Fox and Lu, Orio and Soudry, and our 14D model generate indistinguishable ISI distributions, but are distinguishable from  Dangerfield's model and Fox's 97 model.
\textbf{Right:} Mean logarithms (as in left panel) for all pairwise comparisons.  Fox94, Orio and $14\times28$D form a block of statistically indistinguishable samples.
}\label{fig:KS1}
\end{center}
\end{figure}

Figure \ref{fig:KS1} plots the logarithm of ratio of the two-sample KS statistics, $\frac{D_{n,m}}{R_{n,m}(0.01)}$, for ``Fox94" \citep{FoxLu1994PRE}, ``Goldwyn" \citep{ GoldwynSheaBrown2011PLoSComputBiol},``Dangerfiled" \citep{Dangerfield2012APS}, ``Orio" \citep{OrioSoudry2012PLoS1}, ``14D" (the $14\times28$D model we proposed in \S \ref{subsec:14DHH}). 
Data of ``self-comparison" (e.g. Fox94 vs. Fox 94) was obtained by comparing two ISI ECDF's from independent simulations. 
As shown \rev{in} Fig.~\ref{fig:KS1}, models that we previously proved were pathwise equivalent in \S \ref{sec:path_equiv}, namely the ``Fox94", ``Orio" and the ``14D" model, are not distinguishable at any confidence level justified by our data. 
Note that those three models use the same boundary conditions (free boundary condition as in \cite{OrioSoudry2012PLoS1}) and the ratio $D_{n,m}/R_{n,m}(\alpha)$ of pairwise comparison is on the same magnitude of that for the self-comparisons.
However, as pointed out above, these statistically equivalent simulation algorithms have different computational efficiencies (Fig.~\ref{fig:L1ISI}).
Among these methods, Orio and Soundry's algorithm (14 dimensional state space with 14 undirected edges as noise sources) and our method (14 dimensional state space with 28 directed edges as noise sources) have similar efficiencies, with Orio's method being about 5\% faster than ours method.  
Our $14\times28$D method provides the additional advantage that it facilitates further acceleration under the stochastic shielding approximation (see \S \ref{sec:discussion}).

In contrast to the statistically equivalent Orio, $14\times28$D and Fox '94 models, algorithms using different boundary conditions are not
pathwise equivalent, which is again verified in Fig.~\ref{fig:KS1}. Algorithms with subunit approximation and truncated boundary condition (i.e., ``Goldwyn") and the reflecting boundary condition (i.e. ``Dangerfield") are significantly different in accuracy (and in particular, they are less accurate) than models in the $14\times28$D class.


\section{Discussion \& Conclusions}\label{sec:discussion}
\subsection{Summary}
The exact method for Markov Chain (MC) simulation for an electrotonically compact (single compartment) conductance-based stochastic model under current clamp is a hybrid discrete (channel state) / continuous (voltage) model of the sort used by \citep{ClayDeFelice1983BiophysJ,NewbyBressloffKeener2013PRL,AndersonErmentroutThomas2015JCNS}. 
While MC methods are computationally expensive, simulations based on Gaussian/Langevin approximation can capture the effects of stochastic ion channel fluctuations with reasonable accuracy and excellent computational efficiency.  Since Goldwyn and Shea Brown's work focusing the attention of the computational neuroscience community on Fox and Lu's Langevin algorithm for the Hodgkin-Huxley system \citep{FoxLu1994PRE,GoldwynSheaBrown2011PLoSComputBiol}, several variants of this approach have appeared in the literature.  

In the present paper we advocate for a class of models combining the best features of conductance-based Langevin models with the recently developed stochastic shielding approximation \citep{SchmandtGalan2012PRL,SchmidtThomas2014JMN,SchmidtGalanThomas2018PLoSCB}.  
We propose a Langevin model with a 14-dimensional state space, representing the voltage, five states of the \K~channel, and eight states of the \Na-channel; and a 28-dimensional representation of the driving noise: one independent Gaussian noise term for each directed edge in the channel-state transition graph.  
We showed in \S\ref{sec:determ_14D} that the corresponding mean-field 14D ordinary differential equation model is consistent with the classical HH equations in the sense that the latter correspond to an invariant submanifold of the higher-dimensional model, to which all trajectories converge exponentially quickly.  
\rev{ Fig.~\ref{fig:HHcomp} illustrated the relation between the deterministic 4D and 14D Hodgkin Huxley systems.}
Building on this framework, we introduced the $14\times28$D model, with independent noise sources corresponding to each ion channel transition (\S\ref{sec:stochastic_14DHH}).  
We proved in \S\ref{sec:path_equiv} that, given identical boundary conditions, our $14\times28$D model is pathwise equivalent both to Fox and Lu's original Langevin model, and to A 14-state model with 14 independent noise sources due to \citep{OrioSoudry2012PLoS1}.  

\rev{The original 4D HH model, the 14D deterministic HH model, and the family of  equivalent 14D Langevin models we consider here, form a nested family, each contained within the next.  
Specifically, (i) the 14D ODE model is the ``mean-field" version of the 14D Langevin model, and (ii) the 4D ODE model forms an attracting invariant submanifold within the 14D ODE model, as we establish in our Lemma \ref{LemmaXinvari}.  
So in a very specific sense, the original HH equations ``live inside" the 14D Langevin equations.
Thus these three models enjoy a special relationship.  
In contrast, the 4D Langevin equations studied in \cite{Fox1997BiophysicalJournal} do not bear an especially close relationship to the other three. 
}

In \rev{addition} to rigorous mathematical analysis we also performed numerical comparisons (\S\ref{sec:modelcomp}) showing that, as expected, the pathwise equivalent models produced statistically indistinguishable interspike interval (ISI) distributions. 
Moreover, the ISI distributions for our model (and its equivalents) were closer to the ISI distribution of the ``gold standard" MC model under two different metric space measures.  Our method (along with Orio and Soudry's) proved computationally more efficient than Fox and Lu's original method and Dangerfield's model \citep{Dangerfield2012APS}.
In addition, our method lends itself naturally to model reduction (via the stochastic shielding approximation) to a significantly faster \rev{$14\times 6$D} simulation  that preserves a surprisingly high level of accuracy. 

\subsection{Discrete Gillespie Markov Chain algorithms}
The discrete-state Markov Chain algorithm due to Gillespie is often taken to be the gold standard simulation for a single-compartment stochastic conductance-based model.
Most former literature on Langevin HH models, such as \citep{GoldwynSheaBrown2011PLoSComputBiol, Linaro2011PublicLibraryScience, OrioSoudry2012PLoS1,Huang2013APS},
when establishing a reference MC model, consider a version of the discrete Gillespie algorithm that assumes a piecewise-constant propensity approximation, i.e.~that does not take into account that the voltage changes between transitions, which changes the transition rates. 
This approximation can lead to biophysically unrealistic voltage traces for very small system sizes (cf.~Fig.~2 of \citep{KisperskyWhite2008Scholarpedia}, top trace with $N=1$ ion channel) although the differences appear to be mitigated for  $N\gtrsim 40$ channels \citep{AndersonErmentroutThomas2015JCNS}.
In the present paper, our MC simulations are based on 6000 \Na~and 1800 \K~channels (as in \cite{GoldwynSheaBrown2011PLoSComputBiol}), and we too use the ISI distribution generated by a piecewise-constant propensity MC algorithm as our reference distribution. 
As shown in Tab.~\ref{tab:L1ISIs} and Fig.~\ref{fig:L1ISI}, the computation time for MC is one order of magnitude larger than efficient methods such as \citep{OrioSoudry2012PLoS1}, \citep{Dangerfield2012APS} and the $14\times28$D model.
The computational cost of the MC model increases dramatically as the number of ion channels grows, therefore, even the approximate MC algorithm is inapplicable for a large number of channels. 

\subsection{Langevin Models}
It is worth pointing out that the accuracy of Fox and Lu's original Langevin equations has not been fully appreciated. In fact, Fox and Lu's model \citep{FoxLu1994PRE} gives an approximation to the MC model that is just as accurate as \citep{OrioSoudry2012PLoS1} both in the gating variable statistics \citep{GoldwynSheaBrown2011PLoSComputBiol} and also in the ISI distribution sense (see \S \ref{sec:modelcomp}) -- because as we have established, these models are pathwise equivalent! 
However, the original implementation requires taking a matrix square root in every timestep in the numerical simulation, which significantly reduces its computational efficiency.

Models based on modifications of  \citep{FoxLu1994PRE}'s work can be divided into three classes: the subunit model \citep{Fox1997BiophysicalJournal}; effective noise models  \citep{Linaro2011PublicLibraryScience,Guler2013MITPress}, and channel-based Langevin models such as \citep{GoldwynSheaBrown2011PLoSComputBiol,Dangerfield2012APS,OrioSoudry2012PLoS1,Pezo2014Frontiers,Huang2013APS}.

\paragraph{Subunit model} 
The first modification of the Fox and Lu's model is the subunit model \citep{Fox1997BiophysicalJournal}, which keeps the original form of the HH model, and adds noise to the gating variables ($m,\ h,\ \text{and}\ n$) \citep{Fox1997BiophysicalJournal,GoldwynSheaBrown2011PLoSComputBiol}. The subunit approximation model was widely used because of its fast computational speed.
However, as \citep{Bruce2009ABE} and others pointed out, the inaccuracy of this model remains significant even for large number of channels. Moreover, \citep{GoldwynSheaBrown2011PLoSComputBiol} and \citep{Huang2015PhBio} found that the subunit model fails to capture the statistics of the HH \Na~and \K~ gates. 
In this paper, we also observed that the subunit model is more efficient than channel-based Langevin models, but tends to delay spike generation. As shown in Fig.~\ref{fig:Fox97vsMC}, the subunit model generates significantly longer ISIs than the MC model.

\paragraph{Effective noise models} Another modification to Fox and Lu's algorithm is to add colored noise to the channel open fractions. Though  colored noise models such as \citep{Linaro2011PublicLibraryScience,Guler2013MITPress} are not included in our model comparison, \citep{Huang2015PhBio} found that both these effective noise models generate shorter ISIs than the MC model with the same parameters. 
Though the comparison we provided in \S\ref{sec:modelcomp} only include the Fox and Lu 94, Fox97, Goldwyn, Dangerfield, Orio, SS and the $14\times28$D model, combining the results from \citep{GoldwynSheaBrown2011PLoSComputBiol} and \citep{Huang2015PhBio}, the $14\times28$D model could be compared to a variety of models including \citep{FoxLu1994PRE,Fox1997BiophysicalJournal,GoldwynSheaBrown2011PLoSComputBiol, Linaro2011PublicLibraryScience, OrioSoudry2012PLoS1,Dangerfield2012APS,Huang2013APS,Guler2013MITPress}.

\paragraph{Channel-based Langevin models} The main focus of this paper is the modification based on the original Fox and Lu's matrix decomposition method, namely, the channel-based (or conductance-based) Langevin models.
We proved in \S\ref{sec:path_equiv} that under the same boundary conditions, Fox and Lu's original model, Orio's model and  our $14\times28$D model are pathwise equivalent, which was also verified from our numerical simulations in \S\ref{sec:path_equiv} and \S\ref{sec:modelcomp}.
In \S\ref{sec:path_equiv}, we discussed channel-based Langevin models including \citep{FoxLu1994PRE,GoldwynSheaBrown2011PLoSComputBiol,Dangerfield2012APS,OrioSoudry2012PLoS1,Fox2018arXiv}.
We excluded Fox's more recent implementation \citep{Fox2018arXiv} in \S\ref{sec:modelcomp} for two reasons.  First, the algorithm is pathwise equivalent to others considered there.  And moreover, the method is vulnerable to numerical instability when performing the Cholesky decomposition. 
Specifically, some of the elements in the $S$ matrix from the Cholesky decomposition in \citep{Fox2018arXiv} involve square roots of differences of several quantities, with no guarantee that the differences will result in nonnegative terms --  even with strictly positive value of the gating variables.
Nevertheless, this model would be in the equivalence class and in any case would not be more efficient than the Orio's model, because of the noise dimension and complicated operations (involving taking multiple square roots) in each step.

\subsection{Model comparisons}
If two random variables have similar distributions, then they will have similar moments, but not vice-versa.  Therefore, comparison of the full interspike-interval distributions produced \rev{by} different simulation algorithms gives a more rigorous test than comparison of first and second moments of the ISI distribution.  
Most previous evaluations of competing Langevin approximations were based on the accuracy of low-order moments, for example the mean and variance of channel state occupancy under voltage clamp, or the mean and variance of the interspike interval distribution under current clamp
\citep{Goldwyn2011PRE,GoldwynSheaBrown2011PLoSComputBiol,SchmandtGalan2012PRL,Linaro2011PublicLibraryScience,Dangerfield2012APS,OrioSoudry2012PLoS1,Huang2013APS,Huang2015PhBio}. 
Here, we compare the accuracy of the different algorithms using the full ISI distribution, but using the $L_1$ norm of the difference (Wasserstein metric) and the $L_\infty$ norm (Kolmogorov-Smirnov test).   
\cite{GreenwoodMcDonnelWard2014NECO} previously compared the ISI distributions generated by the Markov chain (Gillespie algorithm) to the distribution generated by different types of Langevin approximations (LA), including the original Langevin models  \citep{FoxLu1994PRE,GoldwynSheaBrown2011PLoSComputBiol}, the channel-based  LA with colored noise \citep{Linaro2011PublicLibraryScience, Guler2013MITPress},  and LA with a $14\times 14$ variant of the diffusion coefficient matrix $S$  \citep{OrioSoudry2012PLoS1}.
They concluded that Orio and Soudry's method provided the best match to the Markov chain model, specifically ``Fox-Goldwyn, and Orio-Kurtz\rev{\footnote{\rev{We refer to this model as to as ``Orio"}}} methods both generate ISI histograms very close to those of Micro\rev{\footnote{\rev{This is the model we refer to as the Markov-chain model.}}}" \citep{GreenwoodMcDonnelWard2014NECO}.  
We note that the comparison reported in this paper simply superimposed plots of the ISI distributions, allowing a qualitative comparison, while our metric-space analysis is fully quantitative.
In any case, their conclusions are consistent with our findings; we showed in \S\ref{sec:path_equiv} that the Fox-Goldwyn and the Orio-Kurtz model are pathwise equivalent (when implemented with the same boundary conditions), which accounts for the similarity in the ISI histograms they generate.  
In fact, because of pathwise equivalence, we can conclude that the true distributions for these models are identical, and any differences observed just reflect finite sampling.

\subsection{Stochastic Shielding Method} 
The stochastic shielding (SS) approximation \citep{SchmandtGalan2012PRL} provides an efficient and accurate method for approximating the Markov process with only a subset of observable states. For conductance-based models, rather than aggregating ion channel states, SS effects dimension reduction by selectively eliminating those independent noise sources that have the least impact on current fluctuations. 
Recent work in \citep{Pezo2014Frontiers} compared previous methods such as \citep{Gillespie1977, OrioSoudry2012PLoS1, Dangerfield2012APS, Huang2013APS, SchmandtGalan2012PRL} in accuracy, applicability and simplicity as well as computational efficiency.  They concluded  that for mesoscopic numbers of channels, stochastic shielding methods combined with diffusion approximation methods can be an optimal choice.
Like \citep{OrioSoudry2012PLoS1}, the stochastic shielding method proposed by \citep{Pezo2014Frontiers} also assumed detailed balance of transitions between adjacent states and used edges that are directly connected to the open gates of HH \Na~and~\K.
We calculated the edge importance in \S\ref{subsec:SS} and found that the four (out of twenty) most important directed edges for the \Na~gates are not the four edges directly connected to the conducting state, as assumed in previous application of the SS method \citep{SchmandtGalan2012PRL}.

\subsection{Which model to use?} Among all modifications of  Fox and Lu's method considered here, Orio and Soudry's approach, and our $14\times28$D model provide the best approximation to the ``gold standard" MC model, with the greatest computational efficiency.  
Several earlier models were studied in the review paper by \citep{GoldwynSheaBrown2011PLoSComputBiol}, where they rediscovered that the original Langevin model proposed by Fox and Lu is the best approximation to the MC model among those considered. 
Later work \citep{Huang2015PhBio} further surveyed a wide range of Langevin approximations for the HH system including \citep{FoxLu1994PRE,Fox1997BiophysicalJournal,GoldwynSheaBrown2011PLoSComputBiol,Linaro2011PublicLibraryScience, Guler2013MITPress, OrioSoudry2012PLoS1,Huang2013APS} and explored models with different boundary conditions. 
\cite{Huang2015PhBio} concluded that the bounded and truncated-restored Langevin model \citep{Huang2013APS} and the unbounded \citep{OrioSoudry2012PLoS1}'s model provide the best approximation to the MC model.

As shown in \S\ref{sec:path_equiv} and \S\ref{sec:modelcomp}, the $14\times28$D Langevin model naturally derived from the channel structure is pathwise equivalent to the Fox and Lu `94, Fox `18, and Orio-Soudry models under the same boundary conditions. 
The  $14\times28$D model is more accurate than the reflecting boundary condition method of \citep{Dangerfield2012APS}, and also better than the approximation method proposed by \citep{GoldwynSheaBrown2011PLoSComputBiol},
when the entire ISI distribution is taken into account. 
We note that \citep{Huang2015PhBio} treated Goldwyn's method \citep{GoldwynSheaBrown2011PLoSComputBiol} as the original Fox and Lu model in their comparison, however, the simulation in \citep{GoldwynSheaBrown2011PLoSComputBiol} uses the $4D$ multinomial submanifold to update gating variables.
Our analysis and numerical simulations suggest that the original Fox and Lu model is indeed as accurate as the Orio-Soudry model, while the computational cost still remains a major concern. 

Though the $14\times28$D model has similar efficiency and accuracy with \cite{OrioSoudry2012PLoS1}, it \SP{has} several advantages.
First, the rectangular $S$ matrix (in eqn.~\eqref{eq:FoxandLu_dNa}-\eqref{eq:FoxandLu_dK}) in Orio's model merges the noise contributions of reciprocal pairs of edges. However, this dimension reduction assumes, in effect, that detailed balance holds along reciprocal edges, which our results show is not the case, under current clamp  (Fig.~\ref{fig:logSSNaK}).
Moreover, the $14\times28$D model arises naturally from the individual transitions of the exact evolution equations (eqn.\eqref{eq:RTCNa}-\eqref{eq:RTCK}) for the underlying Markov Chain model, which makes it conceptually easier to understand.
In addition, our method for defining the $14\times28$D Langevin model and finding the best SS model extends to channel-based models with arbitrary channel gating schemes beyond the standard HH model.
Given any channel state transition graph, the Langevin equations may be read off from the transitions, and the edge importance under current clamp can be evaluated by applying the stochastic shielding method to investigate the contributions of noise from each individual directed edge.
Finally, in exchange for a small reduction in accuracy, the stochastic shielding method affords a significant gain in efficiency.  
The $14\times28$D model thus offers a natural way to quantify the contributions of the microscopic transitions to the macroscopic voltage fluctuations in the membrane through the use of stochastic shielding. 
For general ion channel models, extending a biophysically-based Langevin model analogous to our $14\times28$D HH model, together with the  stochastic shielding method, may provide the best available tool for investigating how unobservable  microscopic behaviors (such as ion channel fluctuations)  affect the macroscopic variability in many biological systems.

\subsection{Limitations}
All Langevin models, including our proposed $14\times28$D model, proceed from the assumption that the ion channel population is large enough (and the ion channel state transitions frequent enough) that the Gaussian approximations by which the white noise forcing terms are derived, are justified.
Thus when the system size is too small, no Langevin system will be an appropriate.
 Fortunately the Langevin approximation appears quite accurate for realistic population sizes.  

The $14\times28$D model uses more noise sources than other approaches. However, stochastic shielding allows us to jettison noise sources that do not significantly impact the system dynamics (the voltage fluctuations and ISI distribution). 
Moreover, in order to compare the ISI distribution in detail among several variants of the Fox and Lu '94 model versus the Markov Chain standard, we have considered a single value of the driving current, while other studies have compared parametrized responses such as the firing rate, ISI variance, or other moments, as a function of applied current. 
Accurate comparisons require large ensembles of independent trajectories, forcing a tradeoff between precision and breadth; we opted \rev{here} for precise comparisons at a representative level of the driving current.

\rev{From a conceptual point of view, a shortcoming of most Langevin models is the tendency for some channel state variables $x$ to collide with the domain boundaries $x\in[0,1]$ and to cross them during numerical simulations with finite time steps.  
We adopted the approach advocated by \cite{OrioSoudry2012PLoS1} of using ``free boundaries" in which gating variables can make excursions into the (unphysical) range $x<0$ or $x>1$.  
Practically speaking, these excursions are always short, if the time step is reasonably small, as they tend to be self-correcting.\footnote{\rev{To avoid complex entries, we use $|x|$ when calculating entries in the noise coefficient matrix.}}  
Another approach is to construct reflecting boundary conditions; different implementations of this idea were used in \cite{Dangerfield2010PCS},
\cite{FoxLu1994PRE}, and \cite{SchmidGoychukHanggi2001EPL}.
Dangerfield's method proved both slower and less accurate than the free boundary method.  
As an alternative method, one uses a biased rejection sampling approach, testing each gating variable of the 14D model on each time step, and repeating the noise sample for any time step violating the domain conditions \cite{FoxLu1994PRE,SchmidGoychukHanggi2001EPL}. 
We found that this method had accuracy similar to that of Dangerfield's method ($L_1$-Wasserstein difference $\approx 4.4\text{e-}1$ msec, cf.~Tab.~\ref{tab:L1ISIs}) and runtime similar to that of the Fox and Lu 94 implementation, about 4 times slower than our 14D Langevin model.
}

\rev{
Yet another approach that in principle can guarantee the stochastic process remains within proscribed bounds, rederives a ``best diffusional approximation" Fokker-Planck equation based on matching a master-equation birth-and-death description of a (binomial) population of two-state ion channels, leading to modified drift and diffusion terms \citep{Goychuk2014stochastic}. 
This method does not appear to extend readily to the 14D setting, with underlying multinomial structure of the ion channel gates, so we do not dwell on it further.  
}

\rev{Table \ref{tab:L1ISIs} gives the accuracies with which each model reproduces the ISI distribution, compared to a standard reference distribution generated through a large number of samples of the MC method.
The mean $L_1$ difference between a single sample and the reference sample is about 0.227 microseconds. 
For a nonnegative random variable $T\ge 0$, the difference in the mean under two probability distributions is bounded above by the $L_1$ difference in their cumulative distribution functions.\footnote{\rev{For a nonnegative random variable $T$ with cumulative distribution function $F(t)=\Pr[T\le t]$, the mean satisfies $E[T]=\int_0^\infty(1-F(t))\,dt$ \citep{Grimmett+Stirzaker:2005:book}. Therefore the difference in mean under two distributions $F_1$ and $F_2$ satisfies
$|E_1[T]-E_2[T]|=\left|\int_0^\infty F_1(t)-F_2(t)\,dt\right|\le \rho_1(F_1,F_2).$}  }
Thus the $L_1$ norm gives an idea of the temporal accuracy with which one can approximate a given distribution by another.
The mean difference between the ISI distribution generated by a single run of the full $14\times28$D model is about 49 $\mu$sec, and the discrepancy produced by the (significantly faster) SS model is about 76 $\mu$sec.
When would this level of accuracy matter for the function of a neuron within a network?  
The barn owl \textit{Tyto alba} uses interaural time difference to localize its prey to within 1-2 degrees, a feat that requires encoding information in the precise timing of auditory system action potentials at the scale of 5-20 microseconds \citep{MoiseffKonishi1981JCompPhys,GerstnerKempterVanHemmenWagner1996Nature}.
For detailed studies of the effects of channel noise in this system, the superior accuracy of the MC model might be preferred.  
On the other hand, the timescale of information encoding in the human auditory nerve is thought to be in the millisecond range \citep{Goldwyn2010JCompNeu}; with precision in the feline auditory system as reported as low as 100 $\mu$s \citep{Imennov2009IEEE}, see also \citep{WooMillerAbbas2009IEEETransBME}.
For these and other mammalian systems, the stochastic shielding approximation should provide sufficient accuracy.
}

\subsection{Future Work}
In \S\ref{subsec:SS} we compared the ISI variance when noise was included one edge at a time, and found that the edges making the greatest contribution to population fluctuations under voltage clamp were not identical to the edges having the largest effect on ISI variance, when taken one at a time. 
However, the ISI, considered as a random variable determined through a  first-passage time process, depends on the entire trajectory, not just on the occupancy of the conducting states.  
The HH dynamics are strongly nonlinear, producing a limit cycle in the deterministic case, and it is not immediately clear whether the effects of channel noise on ISI variability should be additive.  
In future work, we plan to address the question of the additive contribution of individual/molecular noise sources on ISI variability.

A principle motivation for using the stochastic shielding algorithm is to develop fast and accurate algorithms for ensemble simulations of forward models for parameter estimation in a data assimilation framework. We expect that our method may prove useful for such studies based on current-clamp electrophysiological data in the future.

\section*{Acknowledgments}
The authors thank Dr.~David Friel (Case Western Reserve University) for illuminating discussions of the impact of channel noise on neural dynamics, \rev{and the anonymous referees for  many detailed and constructive comments. } 

This work was made possible in part by grants from the National Science Foundation (DMS-1413770 and DEB-1654989) and the Simons Foundation. PT thanks the Oberlin College Department of Mathematics for research support.  
This research has been supported in part by the Mathematical Biosciences Institute and the National Science Foundation under grant DMS-1440386. 
Large-scale simulations made use of the High Performance Computing Resource in the Core Facility for Advanced Research Computing at Case Western Reserve University.

\newpage
\begin{appendices}

\section{Table of Common Symbols and Notations}\label{notations}

\begin{table}[htbp]\centering
   \begin{tabular}{|l|l|} 
   \hline
   Symbol  & Meaning \\
   \hline
   $C$ & Membrane capacitance ($\mu F/cm^2$)\\
   \hline
   $v$ & Membrane potential ($mV$) \\
   \hline
   $V_\text{Na},V_\text{K},V_\text{L}$& 
   Ionic reversal potential for \Na, \K and leak ($mV$)\\
   \hline
   $I_\text{app}$&
   Applied current to the membrane ($nA/cm^2$)\\
   \hline
  $m,\ h,\ n$ & Dimensionless gating variables for \Na~and \K~channels \\
   \hline
    $\alpha_x,\beta_x\,, x\in\{m,n,h\}$& Voltage dependent rate constant ($1/msec$) \\
    \hline 
   $\mbx$ & 
   vector of state variables
   \\ 
   \hline
   $\text{M}=[\text{M}_1,\text{M}_2,\cdots,\text{M}_8]$  & 
   Eight-component state vector for the \Na~gates\\
   \hline
   $[m_{00},m_{10},m_{20},m_{30},m_{01},m_{11},m_{21},m_{31}]^
\intercal$ & Components for the
\Na gates\\
\hline
   $\text{N}=[\text{N}_1,\text{N}_2,\cdots,\text{N}_5]$ & Five-component state vector for the \K~gates\\
   \hline
$[n_0,n_1,n_2,n_3,n_4]^\intercal$ & Components for the \K~gates\\
\hline
$M_\text{tot},\ N_\text{tot}$& Total number of \Na~and \K~channels\\
\hline
   $\mathcal{X}$ & 4-dimensional manifold domain for 4D HH model \\
   \hline
   $\mathcal{Y}$ & 14-dimensional manifold domain for 14D HH model \\
   \hline
   $\Delta^k$ & $k$-dimensional simplex in $\R^{k+1}$ \\
   & given by $y_1+\ldots+y_{k+1}=1, y_i\ge 0$\\
   \hline
    $\mathcal{M}$ & 
    Multinomial submanifold within the 14D space\\
    \hline
   $A_\text{Na}$, $A_\text{K}$ & State-dependent rate matrix\\
   \hline
   $D$ & State diffusion matrix\\
   \hline
   $S$, $S_1$, $S_2$, $S_\text{Na}$, $S_\text{K}$ & Noise coefficient matrices \\
   \hline
   $\xi$ & Vector of independent $\delta$-correlated Gaussian white \\
   & noise with zero mean and unit variance\\
   \hline 
  $\mbX=[X_1,X_2,\ldots,X_d]$ & A d-dimensional random variable for sample path \\
    \hline 
  $\mbW=[W_1,W_2,\ldots,W_n]$ & A Wiener trajectory with $n$ components \\
  \hline 
  $\delta(\cdot)$ & The Dirac delta function \\
  \hline 
  $\delta_{ij}$ & The Kronecker delta \\
  \hline 
  $F_n$ &   Empirical cumulative distribution function with $n$\\
  &observations (in \S\ref{sec:modelcomp}, we use $m,\ n$ as sample sizes)   \\
  \hline
   \end{tabular}
   \caption{Common symbols and notations in this paper.}
   \label{tab:notations}
\end{table}

\section{\rev{Parameters} and Transition Matrices }\label{axppend_alpha_beta}

\begin{table}[htbp]\centering
   \rev{\begin{tabular}{|l|l|r|} 
   \hline
   Symbol  & Meaning & Value \\
   \hline
   $C$ & Membrane capacitance & 1 $\mu F/cm^2$ \\
    \hline
     $\bar{g}_\text{Na}$ & Maximal sodium conductance & 120 $\mu S/cm^2$ \\
     \hline
     $\bar{g}_\text{K}$ & Maximal potassium conductance & 36 $\mu S/cm^2$ \\
     \hline
     $g_\text{leak}$ & Leak conductance & 0.3 $\mu S/cm^2$ \\
   \hline
   $V_\text{Na}$ & 
   Sodium reversal potential for \Na & 50 $mV$\\
   \hline
    $V_\text{K}$ & 
   Potassium reversal potential for \K & -77 $mV$\\
   \hline
    $V_\text{leak}$ & 
   Leak reversal potential & -54.4 $mV$\\
   \hline
   $I_\text{app}$&
   Applied current to the membrane  &10 $nA/cm^2$\\
   \hline
   $\mathcal{A}$ & Membrane Area & $100\,\mu\text{m}^2$\\
   \hline
   $M_\text{tot}$& Total number of \Na~channels & 6,000\\
\hline
$N_\text{tot}$& Total number of\K~channels & 18,00\\
\hline
   \end{tabular}}
   \caption{\rev{Parameters used for simulations in this paper.}}
   \label{tab:parameters}
\end{table}

 Subunit kinetics for the Hodgkin and Huxley equations are given by
\begin{align}
  \alpha_m(v)&=  \frac{0.1(25-v)}{ \exp(2.5-0.1v)-1} \label{eq:rate4}  \\
\beta_m(v)&= 4\exp(-v/18) \label{eq:rate5}  \\
\alpha_h(v)&= 0.07\exp(-v/20) \label{eq:rate6}  \\
\beta_h(v)&= \frac{1}{ \exp(3-0.1v)+1} \label{eq:rate7}  \\
\alpha_n(v)&= \frac{0.01 (10-v)}{\exp(1-0.1v)-1} \label{eq:rate8}  \\
\beta_n(v)&= 0.125\exp(-v/80) \label{eq:rate9}    
\end{align}


 \[ A_\text{K}(v) =\begin{bmatrix}
   D_\text{K}(1)& \beta_n(v)             & 0                & 0                  & 0\\
   4\alpha_n(v)& D_\text{K}(2)&   2\beta_n(v)              & 0&                   0\\
    0&        3\alpha_n(v)&        D_\text{K}(3)& 3\beta_n(v)&          0\\
    0&        0&               2\alpha_n(v)&          D_\text{K}(4)& 4\beta_n(v)\\
    0&        0&               0&                 \alpha_n(v)&          D_\text{K}(5)
\end{bmatrix},
\]  

 \[ A_\text{Na} =\begin{bmatrix}
D_\text{Na}(1) & \beta_m&0 &0 &\beta_h&0&0&0\\
3\alpha_m&D_\text{Na}(2)&2\beta_m&0&0&\beta_h&0&0\\
0&2\alpha_m&D_\text{Na}(3) &3\beta_m&0&0&\beta_h&0 \\
0&0&\alpha_m&D_\text{Na}(4)&0&0&0&\beta_h \\
\alpha_h&0&0&0&D_\text{Na}(5)&\beta_m&0&0\\
0&\alpha_h&0&0&3\alpha_m&D_\text{Na}(6)&2\beta_m&0\\
0&0&\alpha_h&0&0&2\alpha_m&D_\text{Na}(7)&3\beta_m\\
0&0&0&\alpha_h&0&0&\alpha_m&D_\text{Na}(8)\\
\end{bmatrix},
\]  
where the diagonal elements $$D_\text{ion}(i)=-\sum_{j:j\neq i}A_\text{ion}(j,i),\quad \quad \text{ion}\in \{\text{Na},\text{K}\}.$$

\section{Proof of Lemma \ref{LemmaXinvari}}\label{append_LemmaXinvari}
For the reader's convenience we restate
\begin{customlemma}{\ref{LemmaXinvari}}Let $\mathcal{X}$ and $\mathcal{Y}$ be the lower-dimensional and higher-dimensional Hodgkin-Huxley manifolds given by \eqref{eq:XYdefinitions}, and let $F$ and $G$ be the vector fields on $\mathcal{X}$ and $\mathcal{Y}$ defined by \eqref{eq:HH4D_V}-\eqref{eq:rate3} and \eqref{14dhh1}-\eqref{14dhh3}, respectively. 
Let $H:\mathcal{X}\to\mathcal{M}\subset\mathcal{Y}$ and $R:\mathcal{Y}\to\mathcal{X}$ be the mappings given in Tables  \ref{tab:4D-to-14D-for-HH} and \ref{tab:14D-to-4D-for-HH}, respectively, 
and define the multinomial submanifold $\mathcal{M}=H(\mathcal{X})$.  Then $\mathcal{M}$ is forward-time--invariant under the flow generated by $G$.  Moreover, the vector field $G$,  when restricted to $\mathcal{M}$, coincides  with the vector field induced by $F$ and the map $H$.  That is, for all $y\in\mathcal{M}$, $G(y)=D_x H(R(y))\cdot F(R(y)).$
\end{customlemma}

The main idea of the proof is to show that  show that for every $y\in\mathcal{Y}$, $G(y)$ is contained in the span of the four vectors $\left\{\frac{\partial H}{\partial x_i}(R(y))\right\}_{i=1}^4.$

\begin{proof}
The map from the 4D HH model to the 14D HH model is given in Tab.~\ref{tab:4D-to-14D-for-HH} as $\{H:x\to y\given x\in\mathcal{X}, y\in \mathcal{Y}\}$, and the map from the 14D HH model to the 4D HH model is given in Tab.~\ref{tab:14D-to-4D-for-HH} as $\{R:y\to x \given x\in\mathcal{X},y\in \mathcal{Y}\}$. 
The partial derivatives $\frac{\partial H}{\partial x}$ of the map $H$ are given by  
\begin{align*}
 &\frac{dm_{00}}{dm}=-3(1-m)^2(1-h)&&    \frac{dm_{00}}{dh}=  -(1-m)^3\\
 & \frac{dm_{10}}{dm}=3(1-h)(3m^2-4m+1) &&  \frac{dm_{10}}{dh}=  -3(1-m)^2m\\
 & \frac{dm_{20}}{dm}=3(1-h)(2m-3m^2)  &&   \frac{dm_{20}}{dh}=  -3(1-m)m^2\\
 & \frac{dm_{30}}{dm}=3(1-h)m^2   && \frac{dm_{30}}{dh}=  -m^3\\
 & \frac{dm_{01}}{dm}=-3h(1-m)^2    && \frac{dm_{01}}{dh}=  (1-m)^3\\
 & \frac{dm_{11}}{dm}=3h(3m^2-4m+1)  &&  \frac{dm_{11}}{dh}=  3(1-m)^2m\\
 & \frac{dm_{21}}{dm}= 3h(2m-3m^2)  &&   \frac{dm_{21}}{dh}=  3(1-m)m^2\\
 & \frac{dm_{31}}{dm}=3hm^2   && \frac{dm_{31}}{dh}=  m^3.
\end{align*}
We can write $\partial H/\partial x$ in matrix form as:
\[\frac{\partial H}{\partial x}=\begin{bmatrix}
1&0&0&0 \\
0&-3(1-m)^2(1-h) & -(1-m)^3&0\\
0&3(1-h)(3m^2-4m+1)&-3(1-m)^2m&0\\
0&3(1-h)(2m-3m^2)&-3(1-m)m^2&0\\
0&3(1-h)m^2&-m^3&0\\
0&-3h(1-m)^2&(1-m)^3&0\\
0&3h(3m^2-4m+1)& 3(1-m)^2m&0\\
0&3h(2m-3m^2)&3(1-m)m^2&0\\
0&3hm^2&m^3&0 \\
0&0&0&-4(1-n)^3\\ 
0&0&0&4(1-n)^2(1-4n)\\
0&0&0&12n(1-n)(1-2n)\\
0&0&0&4n^2(3-4n)\\
0&0&0&4n^3
\end{bmatrix}.
\]
We write out the vector fields \eqref{14dhh2} and \eqref{14dhh3} component by component:
\begingroup
\allowdisplaybreaks
\begin{eqnarray}\label{14d_4d}
\frac{d\text{M}_1}{dt}&=&\beta_m\text{M}_2+\beta_h\text{M}_5-(3\alpha_m+\alpha_h)\text{M}_1 \nonumber \\
&=&-3(1-m)^2(1-h)\left[(1-m)\alpha_m-m\beta_m\right]+(1-m)^3\left[h\beta_h-(1-h)\alpha_h\right] \nonumber \\
\frac{d\text{M}_2}{dt}&=&3\alpha_m\text{M}_1+2\beta_m\text{M}_3+\beta_h\text{M}_6-(2\alpha_m+\beta_m+\alpha_h)\text{M}_2 \nonumber \\
&=&3(1-h)(3m^2-4m+1)\left[(1-m)\alpha_m-m\beta_m\right]+3(1-m)^2m\left[h\beta_h-(1-h)\alpha_h\right]  \nonumber \\
\frac{d\text{M}_3}{dt}&=&2\alpha_m\text{M}_2+3\beta_m\text{M}_4+\beta_h\text{M}_7-(\alpha_m+2\beta_m+\alpha_h)\text{M}_3, \nonumber \\
&=&3(1-h)(2m-3m^2)\left[(1-m)\alpha_m-m\beta_m\right]+3(1-m)m^2\left[h\beta_h-(1-h)\alpha_h\right]  \nonumber \\
\frac{d\text{M}_4}{dt}&=&\alpha_m\text{M}_3+\beta_h\text{M}_8-(3\beta_m+\alpha_h)\text{M}_4, \nonumber \\
&=&3(1-h)m^2\left[(1-m)\alpha_m-m\beta_m\right] +m^3\left[h\beta_h-(1-h)\alpha_h\right]\nonumber\\
\frac{d\text{M}_5}{dt}&=&\beta_m\text{M}_6+\alpha_h\text{M}_1-(3\alpha_m+\beta_h)\text{M}_5,\nonumber \\
&=& -3h(1-m)^2\left[(1-m)\alpha_m-m\beta_m\right] +(1-m)^3\left[h\beta_h-(1-h)\alpha_h\right] \nonumber\\
\frac{d\text{M}_6}{dt}&=&3\alpha_m\text{M}_5+2\beta_m\text{M}_7+\alpha_h\text{M}_2-(2\alpha_m+\beta_m+\beta_h)\text{M}_6, \nonumber \\
&=&3h(3m^2-4m+1)\left[(1-m)\alpha_m-m\beta_m\right] -3(1-m)^2m\left[h\beta_h-(1-h)\alpha_h\right]\nonumber\\
\frac{d\text{M}_7}{dt}&=&2\alpha_m\text{M}_6+3\beta_m\text{M}_8+\alpha_h\text{M}_3-(\alpha_m+2\beta_m+\beta_h)\text{M}_7, \nonumber \\
&=&3h(2m-3m^2)\left[(1-m)\alpha_m-m\beta_m\right] - 3(1-m)m^2\left[h\beta_h-(1-h)\alpha_h\right]\nonumber\\
\frac{d\text{M}_8}{dt}&=&\alpha_m\text{M}_7+\alpha_h\text{M}_4-(3\beta_m+\beta_h)\text{M}_8, \nonumber\\
&=&3hm^2\left[(1-m)\alpha_m-m\beta_m\right] -m^3\left[h\beta_h-(1-h)\alpha_h\right]\nonumber\\
\frac{d\text{N}_1}{dt}&=&\beta_n\text{N}_2-4\alpha_n\text{N}_1=-4(1-n)^3[\alpha_n(1-n)-n\beta_n],\nonumber\\
\frac{d\text{N}_2}{dt}&=&4\alpha_n\text{N}_1+2\beta_n\text{N}_3-(3\alpha_n+\beta_n)\text{N}_2=4(1-n)^2(1-4n)[\alpha_n(1-n)-n\beta_n], \nonumber \\
\frac{d\text{N}_3}{dt}&=&3\alpha_n\text{N}_2+3\beta_n\text{N}_4-(2\alpha_n+2\beta_n)\text{N}_3=12n(1-n)(1-2n)[\alpha_n(1-n)-n\beta_n], \nonumber \\
\frac{d\text{N}_4}{dt}&=&2\alpha_n\text{N}_3+4\beta_n\text{N}_5-(3\alpha_n+3\beta_n)\text{N}_4=4n^2(3-4n)[\alpha_n(1-n)-n\beta_n], \nonumber \\
\frac{d\text{N}_5}{dt}&=&\alpha_n\text{N}_4-4\beta_n\text{N}_5=4n^3[\alpha_n(1-n)-n\beta_n].\nonumber
\end{eqnarray}
\endgroup
By extracting common factors from the previous expressions it is clear that $G(y)$ may be written, for all $y\in\mathcal{Y}$, as 
\begin{align}
   G(y)=&\frac{-\bar{g}_{Na}\text{M}_8(V-V_{Na})-\bar{g}_{K}\text{N}_5(V-V_K)-g_L(V-V_L)+I_\text{app}}{C}\left\{\frac{\partial H}{\partial v}(R(y))\right\} \nonumber\\
   &+\left[(1-m')\alpha_m-m'\beta_m\right] \left\{\frac{\partial H}{\partial m}(R(y))\right\} \nonumber\\
   &-\left[h'\beta_h-(1-h')\alpha_h\right] \left\{\frac{\partial H}{\partial h}(R(y))\right\} \nonumber\\
   & +[\alpha_n(1-n')-n'\beta_n]\left\{\frac{\partial H}{\partial n}(R(y))\right\}\label{eq:Gy}
\end{align}
where $m'=(\text{M}_2+\text{M}_6)/3+2(\text{M}_3+\text{M}_7)/3+(\text{M}_4+\text{M}_8),$ $h'=\text{M}_5+\text{M}_6+\text{M}_7+\text{M}_8$ and $n'=\text{N}_2/4+\text{N}_3/2+3\text{N}_4/4+\text{N}_5$.
Thus $G(y)$ is in the span of the column vectors $\partial H/\partial v$, $\partial H/\partial m$, $\partial H/\partial n$, and $\partial H/\partial h$, as was to be shown.

On the other hand, the vector field for the 4D HH ODE (\ref{eq:HH4D_V}-\ref{eq:rate3}) is given by
\[
F=\begin{bmatrix}
\left(-\bar{g}_{Na}m^3h(V-V_{Na})-\bar{g}_{K}n^4(V-V_K)-g_L(V-V_L)+I_\text{app}\right)/C\\
   \alpha_m(V)(1-m)-\beta_m(V)m   \\
  \alpha_h(V)(1-h)-\beta_h(V)h  \\
    \alpha_n(V)(1-n)-\beta_n(V)n 
     \end{bmatrix}.
\]
Referring to  \eqref{eq:Gy}, we see that $G(y)=D_x H(R(y))\, F(R(y))$.    
 Thus we complete the proof of Lemma \ref{LemmaXinvari}.
\end{proof}

\section{Diffusion Matrix of the 14D Model}\label{app:SNaSK}
As defined in equations \eqref{eq:define_M} and \eqref{eq:define_N}
\begin{align}
\mbM&=[m_{00},m_{10},m_{20},m_{30},m_{01},m_{11},m_{21},m_{31}]^\intercal,  \\
\mbN&=[n_0,n_1,n_2,n_3,n_4]^\intercal,
\end{align}
the diffusion matrices $D_\text{Na}$ and $D_\text{K}$ are given by

\begin{align*}
D_\text{K}=
&\left[
\begin{array}{ccc}
D_\text{K}(1,1)&-4\alpha_n n_0-\beta_n n_1 &0\\
   -4\alpha_n n_0-\beta_n n_1& 
   D_\text{K}(2,2)&
   3\alpha_n n_1 -2\beta_n n_2 \\
  0 &-3\alpha_n n_1 -2\beta_n n_2&D_\text{K}(3,3)\\
    0 &0&-2\alpha_n n_2-3\beta_n n_3\\
    0&0&0 \\
    \end{array} 
\right.\cdots\\
&\quad\quad\quad\quad\quad\quad\quad\quad\quad\cdots \left. 
\begin{array}{cc}
0&0\\
 0&0 \\
-2\alpha_n n_2-3\beta_n n_3&0\\
    D_\text{K}(4,4)& -\alpha_n n_3-4\beta_n n_4\\
    -\alpha_n n_3-4\beta_n n_4&D_\text{K}(5,5) \\
    \end{array}
    \right],
\end{align*}


{\footnotesize{
\[
D_\text{Na}^{(1:4)}=\begin{bmatrix}
D_\text{Na}(1,1)&-3\alpha_m m_{00}-\beta_m m_{10} &0&0\\
   -3\alpha_m m_{00}-\beta_m m_{10}& 
   D_\text{Na}(2,2)&-2\alpha_m m_{10}-2\beta_m m_{20} &0 \\
  0 & -2\alpha_m m_{10}-2\beta_m m_{20} &D_\text{Na}(3,3)&-\alpha_m m_{20}-3\beta_m m_{30}\\
    0 &0&-\alpha_m m_{20}-3\beta_m m_{30} &D_\text{Na}(4,4) \\
    -\alpha_h m_{00}-\beta_h m_{01}&0&0&0  \\
    0&-\alpha_h m_{10}-\beta_h m_{11}&0&0 \\
    0&0&-\alpha_h m_{20}-\beta_h m_{21}&0
     \end{bmatrix},
\]}}

{\footnotesize{
\[D^{(5:8)}_\text{Na}=\begin{bmatrix}
-\alpha_h m_{00}-\beta_h m_{01}&0&0&0\\
 0&-\alpha_h m_{10}-\beta_h m_{11}&0&0 \\
  0&0& -\alpha_h m_{20}-\beta_h m_{21}&0\\
   0&0&0&-\alpha_h m_{30}-\beta_h m_{31} \\
   D_\text{Na}(5,5)&-3\alpha_m m_{01}-\beta_m m_{11}&0&0  \\
    -3\alpha_m m_{01}-\beta_m m_{11}&D_\text{Na}(6,6) &-2\alpha_m m_{11}-2\beta_m m_{21} &0 \\
   0&-2\alpha_m m_{11}-2\beta_m m_{21}&D_\text{Na}(7,7) & -\alpha_m m_{21}-3\beta_m m_{31}
     \end{bmatrix},
\]}}


where $$D_\text{ion}(i,i)=-\sum_{j\::\:j\neq i}D_\text{ion}(j,i),\text{ for ion }\in\{\text{Na},\text{K}\}.$$ 
%

The matrices $S_\text{K}$ and $S_\text{Na}$ for the $14\times28$D HH model are given by
\begin{align*}
S^{(1:5)}_\text{Na}=&\left[
\begin{array}{ccccc}
-\sqrt{\alpha_h m_{00}}& \sqrt{\beta_h m_{01}}&-\sqrt{3\alpha_m m_{00}}&\sqrt{\beta_m m_{10}}&
0\\
   0& 0&\sqrt{3\alpha_m m_{00}}&-\sqrt{\beta_m m_{10}}&
   -\sqrt{\alpha_h m_{10}} \\
  0 &0 &0&0&0\\
    0 &0&0&0&0\\
    \sqrt{\alpha_h m_{00}}&-\sqrt{\beta_h m_{01}}&0&0&0 \\
     -\sqrt{\beta_h m_{11}}&0&0&0&0 \\
      0&0&0&\sqrt{\alpha_h m_{20}}&-\sqrt{\beta_h m_{21}} \\
       0&0&0&0&0 \\
    \end{array} 
\right]\\
S_\text{Na}^{(6:10)}=&\left[ 
\begin{array}{ccccc}
0&0&0&0&0\\
 \sqrt{\beta_h m_{11}}
   &-\sqrt{2\alpha_m m_{10}}&\sqrt{2\beta_m m_{20}}&0&0 \\
  \sqrt{2\alpha_m m_{10}}
  &-\sqrt{2\beta_m m_{20}}&-\sqrt{\alpha_h m_{20}}&\sqrt{\beta_h m_{21}}\\
    0&0&0 &0&0\\
   0&0&0&0&0 \\
     -\sqrt{\beta_h m_{11}}&0&0&0&0 \\
      0&0&0&\sqrt{\alpha_h m_{20}}&-\sqrt{\beta_h m_{21}} \\
       0&0&0&0&0 \\
    \end{array}
\right]\\
S^{(11:15)}_\text{Na}=&\left[ 
\begin{array}{ccccc}
0&0&0&0&0\\
0&0&0&0&0\\
   -\sqrt{\alpha_m m_{20}}&
   \sqrt{3\beta_m m_{30}}&0&0
   &0\\
   \sqrt{\alpha_m m_{20}}&
   -\sqrt{3\beta_m m_{30}}
   &-\sqrt{\alpha_h m_{30}}
   &\sqrt{\beta_h m_{31}}&0\\
 0&0&0&0
   0&\\
0&0&0&0&\sqrt{3\alpha_m m_{01}}\\
 0&0&0&0
   &0\\
   0&0&\sqrt{\alpha_h m_{30}}&-\sqrt{\beta_h m_{31}}
   &0\\
    \end{array}
\right]\\
S^{(16:20)}_\text{Na}=&\left[
\begin{array}{ccccc}
0&0&0&0&0\\
0&0&0&0&0\\
0&0&0&0&0\\
0&0&0&0&0\\
-\sqrt{3\alpha_m m_{01}}
   &\sqrt{\beta_m m_{11}}&0&0&0\\
   -\sqrt{\beta_m m_{11}}&-\sqrt{2\alpha_m m_{11}}&\sqrt{2\beta_m m_{21}}&0&0\\
   0&\sqrt{2\alpha_m m_{11}}&-\sqrt{2\beta_m m_{21}}&-\sqrt{\alpha_m m_{21}}&\sqrt{3\beta_m m_{31}}\\
   0&0&0&\sqrt{\alpha_m m_{21}}&-\sqrt{3\beta_m m_{31}}\\
    \end{array}
    \right],
\end{align*}
where $S^{(i:j)}_\text{Na}$ is the i$^{th}$-j$^{th}$ column of $S_\text{Na}$, and
\begin{align*}
S_\text{K}=&\left[
\begin{array}{cccc}
-\sqrt{4\alpha_n n_0}& \sqrt{\beta_n n_1}&0&0\\
   \sqrt{4\alpha_n n_0}& -\sqrt{\beta_n n_1}&-\sqrt{3\alpha_n n_1}&\sqrt{2\beta_n n_2} \\
  0 &0 &\sqrt{3\alpha_n n_1}&-\sqrt{2\beta_n n_2}\\
    0 &0&0&0 \\
    0&0&0&0 \\
   \end{array} 
\right.\cdots\\
&\quad\cdots \left. 
\begin{array}{cccc}
0&0&0&0\\
0&0&0&0 \\
\sqrt{2\alpha_n n_2}&-\sqrt{3\beta_n n_3}&-\sqrt{\alpha_n n_3}&\sqrt{4\beta_n n_4} \\
    0&0&\sqrt{\alpha_n n_3}&-\sqrt{4\beta_n n_4} \\
  \end{array}
    \right].
\end{align*}

\end{appendices}

\bibliography{Shusen}

\begin{thebibliography}{}

\bibitem[Allen et~al., 2008]{AllenAllenArciniegoGreenwood2008StochAnalApp}
Allen, E.~J., Allen, L.~J., Arciniega, A., and Greenwood, P.~E. (2008).
\newblock Construction of equivalent stochastic differential equation models.
\newblock {\em Stochastic \textcolor{black}{Analysis and Applications}},
  26(2):274--297.

\bibitem[Anderson et~al., 2015]{AndersonErmentroutThomas2015JCNS}
Anderson, D.~F., Ermentrout, B., and Thomas, P.~J. (2015).
\newblock Stochastic representations of ion channel kinetics and exact
  stochastic simulation of neuronal dynamics.
\newblock {\em Journal of Computational Neuroscience}, 38(1):67--82.

\bibitem[Anderson and Kurtz, 2015]{anderson2015stochastic}
Anderson, D.~F. and Kurtz, T.~G. (2015).
\newblock {\em \rev{Stochastic Analysis of Biochemical Systems}}, volume~1.
\newblock Springer.

\bibitem[Bruce, 2009]{Bruce2009ABE}
Bruce, I.~C. (2009).
\newblock Evaluation of stochastic differential equation approximation of ion
  channel gating models.
\newblock {\em Annals of \textcolor{black}{Biomedical Engineering}},
  37(4):824--838.

\bibitem[Clay and DeFelice, 1983]{ClayDeFelice1983BiophysJ}
Clay, J.~R. and DeFelice, L.~J. (1983).
\newblock Relationship between membrane excitability and single channel
  open-close kinetics.
\newblock {\em Biophys J}, 42(2):151--7.

\bibitem[Colquhoun and Hawkes, 1981]{Colquhoun1981}
Colquhoun, D. and Hawkes, A. (1981).
\newblock On the stochastic properties of single ion channels.
\newblock {\em Proc. R. Soc. Lond. B}, 211(1183):205--235.

\bibitem[Dangerfield et~al., 2010]{Dangerfield2010PCS}
Dangerfield, C., Kay, D., and Burrage, K. (2010).
\newblock Stochastic models and simulation of ion channel dynamics.
\newblock {\em Procedia Computer Science}, 1(1):1587--1596.

\bibitem[Dangerfield et~al., 2012]{Dangerfield2012APS}
Dangerfield, C.~E., Kay, D., and Burrage, K. (2012).
\newblock Modeling ion channel dynamics through reflected stochastic
  differential equations.
\newblock {\em Physical Review E}, 85(5):051907.

\bibitem[Dayan and Abbott, 2001]{Dayan+Abbott:2001}
Dayan, P. and Abbott, L.~F. (2001).
\newblock {\em Theoretical neuroscience: computational and mathematical
  modeling of neural systems}.
\newblock Computational Neuroscience. The MIT Press.

\bibitem[Dvoretzky et~al., 1956]{Dvoretzky1956AMS}
Dvoretzky, A., Kiefer, J., and Wolfowitz, J. (1956).
\newblock Asymptotic minimax character of the sample distribution function and
  of the classical multinomial estimator.
\newblock {\em The Annals of Mathematical Statistics}, 27(3):642--669.

\bibitem[Earnshaw and Keener, 2010a]{EarnshawKeener2010bSIADS}
Earnshaw, B. and Keener, J. (2010a).
\newblock Global asymptotic stability of solutions of nonautonomous master
  equations.
\newblock {\em SIAM Journal on Applied Dynamical Systems}, 9(1):220--237.

\bibitem[Earnshaw and Keener, 2010b]{EarnshawKeener2010SIADS_2}
Earnshaw, B. and Keener, J. (2010b).
\newblock Invariant manifolds of binomial-like nonautonomous master equations.
\newblock {\em SIAM Journal on Applied Dynamical Systems}, 9(2):568--588.

\bibitem[Faisal and Laughlin, 2007]{Faisal2007PLoS}
Faisal, A.~A. and Laughlin, S.~B. (2007).
\newblock Stochastic simulations on the reliability of action potential
  propagation in thin axons.
\newblock {\em PLoS \textcolor{black}{Computational Biology}}, 3(5).

\bibitem[Fox and Lu, 1994]{FoxLu1994PRE}
Fox and Lu (1994).
\newblock Emergent collective behavior in large numbers of globally coupled
  independently stochastic ion channels.
\newblock {\em Phys Rev E Stat Phys Plasmas Fluids Relat Interdiscip Topics},
  49(4):3421--3431.

\bibitem[Fox, 1997]{Fox1997BiophysicalJournal}
Fox, R.~F. (1997).
\newblock Stochastic versions of the
  \textcolor{black}{H}odgkin-\textcolor{black}{H}uxley equations.
\newblock {\em Biophysical \textcolor{black}{J}ournal}, 72(5):2068--2074.

\bibitem[Fox, 2018]{Fox2018arXiv}
Fox, R.~F. (2018).
\newblock Critique of the {Fox-Lu} model for {Hodgkin-Huxley} fluctuations in
  neuron ion channels.
\newblock {\em arXiv preprint arXiv:1807.07632}.

\bibitem[Gadgil et~al., 2005]{bmb:Cadgil+Lee+Othmer:2005}
Gadgil, C., Lee, C.~H., and Othmer, H.~G. (2005).
\newblock A stochastic analysis of first-order reaction networks.
\newblock {\em Bulletin of Mathematical Biology}, 67.

\bibitem[Gerstner et~al., 1996]{GerstnerKempterVanHemmenWagner1996Nature}
Gerstner, W., Kempter, R., Van~Hemmen, J.~L., and Wagner, H. (1996).
\newblock \rev{A neuronal learning rule for sub-millisecond temporal coding}.
\newblock {\em Nature}, 383(6595):76--78.

\bibitem[Gillespie, 1977]{Gillespie1977}
Gillespie, D.~T. (1977).
\newblock Exact stochastic simulation of coupled chemical reactions.
\newblock {\em J. Phys. Chem.}, 81:2340--2361.

\bibitem[Gillespie, 2007]{gillespie2007stochastic}
Gillespie, D.~T. (2007).
\newblock Stochastic simulation of chemical kinetics.
\newblock {\em Annu. Rev. Phys. Chem.}, 58:35--55.

\bibitem[Goldwyn et~al., 2011]{Goldwyn2011PRE}
Goldwyn, J.~H., Imennov, N.~S., Famulare, M., and Shea-Brown, E. (2011).
\newblock Stochastic differential equation models for ion channel noise in
  \textcolor{black}{H}odgkin-\textcolor{black}{H}uxley neurons.
\newblock {\em Physical Review E}, 83(4):041908.

\bibitem[Goldwyn and Shea-Brown, 2011]{GoldwynSheaBrown2011PLoSComputBiol}
Goldwyn, J.~H. and Shea-Brown, E. (2011).
\newblock The what and where of adding channel noise to the {H}odgkin-{H}uxley
  equations.
\newblock {\em {PLoS} Comput Biol}, 7(11):e1002247.

\bibitem[Goldwyn et~al., 2010]{Goldwyn2010JCompNeu}
Goldwyn, J.~H., Shea-Brown, E., and Rubinstein, J.~T. (2010).
\newblock \rev{Encoding and decoding amplitude-modulated cochlear implant
  stimuli—a point process analysis}.
\newblock {\em Journal of {C}omputational {N}euroscience}, 28(3):405--424.

\bibitem[Goychuk, 2014]{Goychuk2014stochastic}
Goychuk, I. (2014).
\newblock \textcolor{black}{Stochastic modeling of excitable dynamics: improved
  {Langevin} model for mesoscopic channel noise}.
\newblock In {\em {International Conference on Nonlinear Dynamics of Electronic
  Systems}}, pages 325--332. Springer.

\bibitem[Greenwood et~al., 2015]{GreenwoodMcDonnelWard2014NECO}
Greenwood, P.~E., McDonnell, M.~D., and Ward, L.~M. (\textcolor{black}{2015}).
\newblock Dynamics of gamma bursts in local field potentials.
\newblock {\em Neural Computation},
  \textcolor{black}{27}(\textcolor{black}{1}):\textcolor{black}{74--103}.

\bibitem[Grimmett and Stirzaker, 2005]{Grimmett+Stirzaker:2005:book}
Grimmett, G. and Stirzaker, D. (2005).
\newblock {\em \rev{Probability and Random Processes}}.
\newblock Oxford University Press, third edition.

\bibitem[G{\"u}ler, 2013a]{Guler2013NC}
G{\"u}ler, M. (2013a).
\newblock An investigation of the stochastic
  \textcolor{black}{H}odgkin-\textcolor{black}{H}uxley models under noisy rate
  functions.
\newblock {\em Neural \textcolor{black}{C}omputation}, 25(9):2355--2372.

\bibitem[G{\"u}ler, 2013b]{Guler2013MITPress}
G{\"u}ler, M. (2013b).
\newblock Stochastic \textcolor{black}{H}odgkin-\textcolor{black}{H}uxley
  equations with colored noise terms in the conductances.
\newblock {\em Neural \textcolor{black}{C}omputation}, 25(1):46--74.

\bibitem[Hill and Chen, 1972]{HillChen1972BPJ}
Hill, T.~L. and Chen, Y.-D. (1972).
\newblock On the theory of ion transport across the nerve membrane:
  \textcolor{black}{IV}. noise from the open-close kinetics of
  \textcolor{black}{K+} channels.
\newblock {\em Biophysical \textcolor{black}{J}ournal}, 12(8):948--959.

\bibitem[Hodgkin and Huxley, 1952]{jp:Hodgkin+Huxley:1952d}
Hodgkin, A.~L. and Huxley, A.~F. (1952).
\newblock A quantitative description of membrane current and its application to
  conduction and excitation in nerve.
\newblock {\em J Physiol}, 117:500--544.

\bibitem[Huang et~al., 2013]{Huang2013APS}
Huang, Y., R{\"u}diger, S., and Shuai, J. (2013).
\newblock Channel-based \textcolor{black}{L}angevin approach for the stochastic
  \textcolor{black}{H}odgkin-\textcolor{black}{H}uxley neuron.
\newblock {\em Physical Review E}, 87(1):012716.

\bibitem[Huang et~al., 2015]{Huang2015PhBio}
Huang, Y., R{\"u}diger, S., and Shuai, J. (2015).
\newblock Accurate \textcolor{black}{L}angevin approaches to simulate
  \textcolor{black}{M}arkovian channel dynamics.
\newblock {\em Physical \textcolor{black}{B}iology}, 12(6):061001.

\bibitem[Imennov and Rubinstein, 2009]{Imennov2009IEEE}
Imennov, N.~S. and Rubinstein, J.~T. (2009).
\newblock \rev{Stochastic population model for electrical stimulation of the
  auditory nerve}.
\newblock {\em IEEE Transactions on Biomedical Engineering}, 56(10):2493--2501.

\bibitem[Keener, 2009]{Keener:2009:JMathBiol}
Keener, J.~P. (2009).
\newblock Invariant manifold reductions for \textcolor{black}{M}arkovian ion
  channel dynamics.
\newblock {\em J Math Biol}, 58(3):447--57.

\bibitem[Keener, 2010]{Keener:2010:JMathBiol}
Keener, J.~P. (2010).
\newblock Exact reductions of \textcolor{black}{M}arkovian dynamics for ion
  channels with a single permissive state.
\newblock {\em J Math Biol}, 60(4):473--9.

\bibitem[Keener and Sneyd, 1998]{Keener1998MathPhy}
Keener, J.~P. and Sneyd, J. (1998).
\newblock {\em Mathematical \textcolor{black}{P}hysiology}, volume~1.
\newblock Springer.

\bibitem[Kispersky and White, 2008]{KisperskyWhite2008Scholarpedia}
Kispersky, T. and White, J.~A. (2008).
\newblock Stochastic models of ion channel gating.
\newblock {\em Scholarpedia}, 3(1):1327.

\bibitem[Kloeden and Platen, 1999]{Kloeden-Platen-big}
Kloeden, P.~E. and Platen, E. (1999).
\newblock {\em Numerical Solution of Stochastic Differential Equations}.
\newblock Number~23 in Applications of Mathematics. Springer.

\bibitem[Kolmogorov, 1933]{Kolmogorov1933IIAG}
Kolmogorov, A. (1933).
\newblock Sulla determinazione empirica di una lgge di distribuzione.
\newblock {\em Inst. Ital. Attuari, Giorn.}, 4:83--91.

\bibitem[Kurtz, 1981]{Kurtz1981}
Kurtz, T.~G. (1981).
\newblock {\em Approximation of population processes}.
\newblock Society for Industrial and Applied Mathematics.

\bibitem[Linaro et~al., 2011]{Linaro2011PublicLibraryScience}
Linaro, D., Storace, M., and Giugliano, M. (2011).
\newblock Accurate and fast simulation of channel noise in conductance-based
  model neurons by diffusion approximation.
\newblock {\em PLoS \textcolor{black}{C}omputational
  \textcolor{black}{B}iology}, 7(3):e1001102.

\bibitem[Mainen and Sejnowski, 1995]{Mainen1995Science}
Mainen, Z.~F. and Sejnowski, T.~J. (1995).
\newblock Reliability of spike timing in neocortical neurons.
\newblock {\em Science}, 268(5216):1503--1506.

\bibitem[Mino et~al., 2002]{MinoRubinsteinWhite2002AnnBiomedEng}
Mino, H., Rubinstein, J.~T., and White, J.~A. (2002).
\newblock Comparison of algorithms for the simulation of action potentials with
  stochastic sodium channels.
\newblock {\em Ann Biomed Eng}, 30(4):578--587.

\bibitem[Moiseff and Konishi, 1981]{MoiseffKonishi1981JCompPhys}
Moiseff, A. and Konishi, M. (1981).
\newblock \rev{The owl's interaural pathway is not involved in sound
  localization}.
\newblock {\em Journal of {Comparative} {Physiology}}, 144(3):299--304.

\bibitem[Neher and Sakmann, 1976]{Neher1976Nature}
Neher, E. and Sakmann, B. (1976).
\newblock Single-channel currents recorded from membrane of denervated frog
  muscle fibres.
\newblock {\em Nature}, 260(5554):799.

\bibitem[Newby et~al., 2013]{NewbyBressloffKeener2013PRL}
Newby, J.~M., Bressloff, P.~C., and Keener, J.~P. (2013).
\newblock Breakdown of fast-slow analysis in an excitable system with channel
  noise.
\newblock {\em Phys Rev Lett}, 111(12):128101.

\bibitem[Orio and Soudry, 2012]{OrioSoudry2012PLoS1}
Orio, P. and Soudry, D. (2012).
\newblock Simple, fast and accurate implementation of the diffusion
  approximation algorithm for stochastic ion channels with multiple states.
\newblock {\em PLoS \textcolor{black}{O}ne}, 7(5):e36670.

\bibitem[O’Donnell and van Rossum, 2015]{ODonnel_van_Rossum2015NECO}
O’Donnell, C. and van Rossum, M.~C. (2015).
\newblock Spontaneous action potentials and neural coding in unmyelinated
  axons.
\newblock {\em Neural \textcolor{black}{C}omputation}, 27(4):801--818.

\bibitem[Pezo et~al., 2014]{Pezo2014Frontiers}
Pezo, D., Soudry, D., and Orio, P. (2014).
\newblock Diffusion approximation-based simulation of stochastic ion channels:
  which method to use?
\newblock {\em Frontiers in \textcolor{black}{C}omputational
  \textcolor{black}{N}euroscience}, 8:139.

\bibitem[Schmandt and Gal{\'a}n, 2012]{SchmandtGalan2012PRL}
Schmandt, N.~T. and Gal{\'a}n, R.~F. (2012).
\newblock Stochastic-shielding approximation of {Markov} chains and its
  application to efficiently simulate random ion-channel gating.
\newblock {\em Physical \textcolor{black}{R}eview \textcolor{black}{L}etters},
  109(11):118101.

\bibitem[Schmid et~al., 2001]{SchmidGoychukHanggi2001EPL}
Schmid, G., Goychuk, I., and H{\"a}nggi, P. (2001).
\newblock \rev{Stochastic resonance as a collective property of ion channel
  assemblies}.
\newblock {\em EPL (Europhysics Letters)}, 56(1):22.

\bibitem[Schmidt et~al., 2018]{SchmidtGalanThomas2018PLoSCB}
Schmidt, D.~R., Gal{\'a}n, R.~F., and Thomas, P.~J. (2018).
\newblock Stochastic shielding and edge importance for
  \textcolor{black}{M}arkov chains with timescale separation.
\newblock {\em PLoS \textcolor{black}{C}omputational
  \textcolor{black}{B}iology}, 14(6):e1006206.

\bibitem[Schmidt and Thomas, 2014]{SchmidtThomas2014JMN}
Schmidt, D.~R. and Thomas, P.~J. (2014).
\newblock Measuring edge importance: a quantitative analysis of the stochastic
  shielding approximation for random processes on graphs.
\newblock {\em The Journal of Mathematical Neuroscience}, 4(1):6.

\bibitem[Schneidman et~al., 1998]{SchneidmanFreedmanSegev1998NECO}
Schneidman, E., Freedman, B., and Segev, I. (1998).
\newblock Ion channel stochasticity may be critical in determining the
  reliability and precision of spike timing.
\newblock {\em Neural \textcolor{black}{C}omputation}, 10(7):1679--1703.

\bibitem[Sengupta et~al., 2010]{Sengupta2010PRE}
Sengupta, B., Laughlin, S., and Niven, J. (2010).
\newblock Comparison of \textcolor{black}{L}angevin and
  \textcolor{black}{M}arkov channel noise models for neuronal signal
  generation.
\newblock {\em Physical Review E}, 81(1):011918.

\bibitem[Shorack and Wellner, 2009]{Shorack2009SIAM}
Shorack, G.~R. and Wellner, J.~A. (2009).
\newblock {\em Empirical processes with applications to statistics}.
\newblock SIAM.

\bibitem[Skaugen and Wall{\o}e, 1979]{SkaugenWalloe1979ActaPhysiolScand}
Skaugen, E. and Wall{\o}e, L. (1979).
\newblock Firing behaviour in a stochastic nerve membrane model based upon the
  {{Hodgkin-Huxley}} equations.
\newblock {\em Acta Physiol Scand}, 107(4):343--63.

\bibitem[Smirnov, 1948]{Smirnov1948AMS}
Smirnov, N. (1948).
\newblock Table for estimating the goodness of fit of empirical distributions.
\newblock {\em The \textcolor{black}{A}nnals of \textcolor{black}{M}athematical
  \textcolor{black}{S}tatistics}, 19(2):279--281.

\bibitem[Strassberg and DeFelice, 1993]{StrassbergDeFelics1993NC}
Strassberg, A.~F. and DeFelice, L.~J. (1993).
\newblock Limitations of the
  \textcolor{black}{H}odgkin-\textcolor{black}{H}uxley formalism: effects of
  single channel kinetics on transmembrane voltage dynamics.
\newblock {\em Neural \textcolor{black}{C}omputation}, 5(6):843--855.

\bibitem[Wasserstein, 1969]{Wasserstein1969}
Wasserstein, L.~N. (1969).
\newblock Markov processes over denumerable products of spaces describing large
  systems of automata.
\newblock {\em Problems of Information Transmission}, 5(3):47--52.

\bibitem[White et~al., 1998]{White1998APSB}
White, J.~A., Klink, R., Alonso, A., and Kay, A.~R. (1998).
\newblock Noise from voltage-gated ion channels may influence neuronal dynamics
  in the entorhinal cortex.
\newblock {\em Journal of \textcolor{black}{N}europhysiology}, 80(1):262--269.

\bibitem[White et~al., 2000]{White2000Elsevier}
White, J.~A., Rubinstein, J.~T., and Kay, A.~R. (2000).
\newblock Channel noise in neurons.
\newblock {\em Trends in \textcolor{black}{N}eurosciences}, 23(3):131--137.

\bibitem[{Woo} et~al., 2009]{WooMillerAbbas2009IEEETransBME}
{Woo}, J., {Miller}, C.~A., and {Abbas}, P.~J. (2009).
\newblock \rev{Simulation of the Electrically Stimulated Cochlear Neuron:
  Modeling Adaptation to Trains of Electric Pulses}.
\newblock {\em {IEEE Transactions on Biomedical Engineering}},
  56(5):1348--1359.

\bibitem[Yu et~al., 2017]{Yu2017JNeuro}
Yu, H., Dhingra, R.~R., Dick, T.~E., and Gal{\'a}n, R.~F. (2017).
\newblock Effects of ion channel noise on neural circuits: an application to
  the respiratory pattern generator to investigate breathing variability.
\newblock {\em Journal of \textcolor{black}{N}europhysiology}, 117(1):230--242.

\end{thebibliography}
\bibliographystyle{apalike}

\end{document}